\renewcommand{\selectlanguage}[1]{}
	\definecolor{BLACK}{gray}{0}
	\definecolor{WHITE}{gray}{1}
	\definecolor{RED}{rgb}{1,0,0}
	\definecolor{GREEN}{rgb}{0,1,0}
	\definecolor{BLUE}{rgb}{0,0,1}
	\definecolor{CYAN}{cmyk}{1,0,0,0}
	\definecolor{MAGENTA}{cmyk}{0,1,0,0}
	\definecolor{YELLOW}{cmyk}{0,0,1,0}
\theoremstyle{plain}
\theoremstyle{plain}
\newenvironment{proof}[1][\protect\proofname]{\par
	\normalfont\topsep6\p@\@plus6\p@\relax
	\trivlist
	\itemindent\parindent
	\item[\hskip\labelsep
	\scshape
	#1]\ignorespaces
}{%
	\endtrivlist\@endpefalse
}
\providecommand{\proofname}{Proof}
\theoremstyle{plain}
\providecommand{\lemmaname}{Lemma}
\providecommand{\definitionname}{Definition}
\providecommand{\propositionname}{Proposition}
\definecolor{myurlcolor}{rgb}{0,0,0.7}
\newcommand{\id}{{\operatorname{id}}}
\def\ket#1{| #1 \rangle}
\def\bra#1{\langle  #1 |}
\def\proj#1{| #1 \rangle\!\langle #1 |}
\newcommand{\haH}
\newtheorem{theorem}{Theorem}
\newtheorem{lemma}{Lemma}
\newtheorem{remark}{Remark}
\newtheorem{observation}{Observation}
\newtheorem{corollary}{Corollary}
\newtheorem{definition}{Definition}
\definecolor{orange}{RGB}{255,127,0}
\def\proj#1{| #1 \rangle\!\langle #1 |}
\begin{document}

\title{Squashed quantum non-Markovianity: a measure of genuine quantum non-Markovianity in states}

\author{Rajeev Gangwar}
\affiliation{Department of Physical Sciences, Indian Institute of Science Education and Research (IISER), Mohali, Punjab 140306, India}

\author{Tanmoy Pandit}
\affiliation{Fritz Haber Research Center for Molecular Dynamics, Hebrew University of Jerusalem, Jerusalem 9190401, Israel}

\author{Kaumudibikash Goswami}\email{goswami.kaumudibikash@gmail.com}
\affiliation{QICI Quantum Information and Computation Initiative, Department of Computer Science, The University of Hong Kong, Pokfulam Road, Hong Kong}

\author{Siddhartha Das}\email{das.seed@iiit.ac.in}
\affiliation{Center for Security, Theory and Algorithmic Research (CSTAR), Centre for Quantum Science and Technology (CQST), International Institute of Information Technology, Hyderabad, Gachibowli, Telangana 500032, India}
 
\author{Manabendra Nath Bera}
\email{mnbera@gmail.com}
\affiliation{Department of Physical Sciences, Indian Institute of Science Education and Research (IISER), Mohali, Punjab 140306, India}

\begin{abstract}
Quantum non-Markovianity in tripartite quantum states $\rho_{ABC}$ represents a correlation between systems $A$ and $C$ when conditioned on the system $B$ and is known to have both classical and quantum contributions. However, a systematic characterization of the latter is missing. To address this, we propose a faithful measure for non-Markovianity of genuine quantum origin called squashed quantum non-Markovianity (sQNM). It is based on the quantum conditional mutual information and is defined by the left-over non-Markovianity after squashing out all non-quantum contributions. It is lower bounded by the squashed entanglement between non-conditioning systems in the reduced state and is delimited by the extendibility of either of the non-conditioning systems. We show that the sQNM is monogamous, asymptotically continuous, convex, additive on tensor-product states, and generally super-additive. We characterize genuine quantum non-Markovianity as a resource via a convex resource theory after identifying free states with vanishing sQNM and free operations that do not increase sQNM in states. We use our resource-theoretic framework to bound the rate of state transformations under free operations and to study state transformation under non-free operations; in particular, we find the quantum communication cost from Bob ($B$) to Alice ($A$) or Charlie ($C$) is lower bounded by the change in sQNM in the states. The sQNM finds operational meaning; in particular, the optimal rate of private communication in a variant of conditional one-time pad protocol is twice the sQNM. Also, the minimum deconstruction cost for a variant of quantum deconstruction protocol is given twice the sQNM of the state.
\end{abstract}

\maketitle
	\normalfont
	\section{Introduction}
Understanding different kinds of quantum correlations has been a long-standing research quest in the field of quantum information and computation from both fundamental and technological aspects~\cite{EPR35,Bbook04,DiV95,DM03,CBTW17,DBWH21}. Quantum correlations are essential or useful to perform many of the desired quantum information and computation tasks that otherwise would not be feasible using only classical resources, such as quantum teleportation~\cite{BBC+93,FSB+98}, quantum key distribution~\cite{Eke91,DBWH21,PGT+23}, quantum computing~\cite{Sim97,BGK18}, true random number generation~\cite{PAM+10,RA11, Bera17}, etc. In the past couple of decades, resource-theoretic frameworks~\cite{CG19,TR19,Lam24} for several quantum correlations and properties, e.g., nonlocality~\cite{Bbook04,BLM+05,GWAN12}, steering~\cite{WJD07,GA15}, entanglement~\cite{HHHH09,SSD+15}, unextendibility~\cite{DPS02,DPS04,KDWW19,KDWW21}, etc.~have been introduced providing deeper insights into these and their viable applications. Another important correlation in composite quantum systems is quantum non-Markovianity: a correlation between two parties conditioned by a third one.  Although it is widely studied, a consistent resource-theoretic framework to characterize genuine quantum non-Markovianity as a quantum resource is still lacking.

An arbitrary tripartite quantum system is in a quantum non-Markovian state if it is not a quantum Markov state~\cite{hayden2004,FR15}, a quantum analog of a Markov chain in classical systems. A quantum state $\rho_{ABC}$ is a quantum Markov state $(A-B-C)$ if and only if the quantum conditional mutual information (QCMI) $I(A;C|B)_{\rho}$ vanishes~\cite{hayden2004}, i.e.,
\begin{align*}  
I(A;C|B)_\rho & \coloneqq S(AB)_{\rho}+S(BC)_{\rho}-S(ABC)_{\rho}-S(B)_{\rho}\nonumber\\ &=0, 
\end{align*}
where $S(B)_{\rho}\coloneqq -\Tr[\rho_{B}\log_2\rho_B]$ is the von Neumann entropy of a reduced state $\rho_{B}=\Tr_{AC}[\rho_{ABC}]$. It is always non-negative for quantum states. This property is equivalent~\cite{Rus02} to the strong subadditivity property of the von Neumann entropy~\cite{Lieb03, NP05} and the monotonicity of quantum relative entropy between states under the action of quantum channels~\cite{Ume62,HP91}, the quantum data-processing inequality. The QCMI is shown to have operational meaning in the quantum communication tasks of quantum state redistribution~\cite{Devetak08, Yard09}, quantum state deconstruction and conditional erasing~\cite{Berta_2018_PRL, berta2009singleshot}, and quantum one-time pad~\cite{Conditional_otp}. It is one of the widely used entropic quantities in different areas of quantum physics such as characterization of quantum correlations~\cite{Brandao11,DHW16,Kaur17,Kim12a}, quantum cryptography~\cite{RW03,DW05,WDH22,KHD22}, quantum error correction~\cite{CC97,hayden2004,PEW17}, characterization of memory effects in quantum dynamics~\cite{Bus14,BDW16,HG22,PRF+18}, analyzing properties of states and dynamics in many-body quantum systems~\cite{Kim12a,KKB20,KSKA22}.

Here we primarily focus on the resource theory of genuine quantum non-Markovianity for quantum states where the free states are (an extended class of) quantum Markov chain states~\cite{hayden2004,SFR16}. This is different from the notion of non-Markovian and Markovian memory effects in quantum dynamics~\cite{Rivas2014,Allen17,BGG+24}, which we discuss only towards the end of the paper in Section~\ref{sec:Disc}. The connections between quantum Markov chain states and quantum Markovian dynamics are discussed in~\cite{Bus14,BDW16}. In a recent work~\cite{BGG+24}, the notion of genuine non-Markovianity in quantum states developed in this work is applied to characterize memory effects in the revival of information in quantum dynamics.

The condition $I(A;C|B)_{\rho}=0$ signifies that the state $\rho_{ABC}$ can be perfectly recovered from $\rho_{AB}$ (or $\rho_{BC}$) via a quantum channel acting on $B$. The set of states with vanishing QCMI is not convex, i.e., a quantum non-Markovian state can be produced by taking a classical mixture of some quantum Markov states. That implies that non-Markovian behavior in many quantum non-Markovian states is not of quantum origin. The structure of quantum non-Markovian states is complicated. There exist quantum states with small QCMI that can be far, in terms of trace distance, from all quantum Markov chain states in the form~\eqref{eq:MarkovStructure}~\cite{FR15}. It is now well-understood that an approximate quantum Markov state (with very small QCMI) $\rho_{ABC}$ is approximately recoverable from its bipartite reduced states $\rho_{AB}$ (or $\rho_{BC}$)~\cite{FR15}. There are recent attempts to devise a resource-theoretic framework of quantum non-Markovianity in states~\cite{Wakakuwa2017, Wakakuwa21}. Still, they do not have all the desirable properties (see Table~\ref{tab:Comparison}) and are unable to quantify \textit{genuine} quantum non-Markovianity, i.e., quantum non-Markovian correlations in $\rho_{ABC}$ that are not of classical origin.

To tackle the issues above, we introduce a measure of genuine quantum non-Markovianity, which we call squashed quantum non-Markovianity (sQNM). It is based on quantum conditional mutual information and quantifies non-Markovinaity after squashing out all possible non-Markovian correlations of a non-genuinely quantum origin. This measure is in a similar spirit of squashed entanglement~\cite{Christandl04}, a faithful entanglement monotone. The genuine quantum non-Markovianity can also be quantified in terms of generalized divergence of recovery. With these in hand, we can formulate a resource-theoretic framework to characterize genuine quantum non-Markovianity in quantum states as a resource and discuss its role or connection in some of the operational quantum information processing tasks. 

We start with the properties of sQNM. The states with vanishing sQNM form a convex set. These states are considered free states. The measure satisfies various desired information-theoretic properties, such as convexity, super-additivity, monogamy, asymptotic continuity, and faithfulness. We also characterize the quantum operations that do not increase or create sQNM in states. These operations also form a convex set and are considered free operations. With well-defined free states and operations, we study the conditions for state transformations under free operations and derive various bounds on the rate of transformations. 

We explore the interrelation between quantum communication from the conditioning system to the non-conditioning system and sQNM and demonstrate that the quantum communication cost is lower bounded by the change in sQNM in the states. We also compare the sQNM with the squashed entanglement and observe that the sQNM of a state $\rho_{ABC}$ is lower bounded by the squashed entanglement of the reduced state $\rho_{AC}$ on a non-conditioning system. Furthermore, genuine quantum non-Markovianity is delimited by the extendibility in non-conditioning systems. We study the possible implications of our results in two quantum information-theoretic tasks. For conditional quantum one-time pad~\cite{Conditional_otp}, we show that the amount of fully secured quantum key that can be shared using a tripartite state is given by the sQNM of the state. For quantum deconstruction of a state ~\cite{Berta_2018_PRL, Berta_2018}, we find that, on average, the minimum number of unitaries required is given by the sQNM of the state.

The paper is organized as follows. Section~\ref{sec:sQNM} introduces the measure of genuine quantum non-Markovianity in states, outlines the properties of squashed quantum non-Markovianity (sQNM), and classifies the set of operations that do not create or increase sQNM in states. Section~\ref{sec:StateTrans} deals with state transformations under free operations and explores the interrelation between sQNM and quantum communications costs in state transformations. Two quantum information theoretic tasks are reconsidered in Section~\ref{sec:QuantTasks} to provide an operational meaning to sQNM. Finally, we discuss the possible implications of our results and conclude in Section~\ref{sec:Disc}.

\begin{table}
\centering
\resizebox{\linewidth}{!}{
\begin{tabular}{|l|l|l|l|}
\hline
     & $I(A;C|B)_\rho$~\cite{Wakakuwa2017}  & $M_f(A;C|B)_\rho$~\cite{Wakakuwa21}   & $N_{\mathrm{sq}}(A;C|B)_\rho$  \\ 
\hline
Convexity            & \texttimes   & \texttimes / \checkmark \footnote{Convexity is satisfied only when the conditional system extends classically. This is termed as conditional convexity~\cite{Wakakuwa21}.}   & \checkmark \\
Super-additivity     & \texttimes   & \texttimes        & \checkmark  \\ 
Monogamy             & \texttimes   & \texttimes        & \checkmark  \\ 
Continuity           & \checkmark   & \checkmark        & \checkmark   \\   
Faithfulness         & \checkmark   & \checkmark        & \checkmark   \\
\hline
\end{tabular}}
\caption{The table provides a comparison between basic properties of the measures of quantum non-Markovianity in quantum states: QCMI $I(A;C|B)_\rho$~\cite{Wakakuwa2017}, non-Markovianity of formation $M_f(A;C|B)_\rho$~\cite{Wakakuwa21}, and the sQNM $N_{\mathrm{sq}}(A;C|B)_\rho$~\eqref{eq:genQNM}. The checkmark (\checkmark) denotes that the corresponding property holds, and the cross (\texttimes) denotes that the corresponding property does not hold. Due to differences in the properties of the measures used to define the set of free states, the corresponding resource theories are also different. For instance, the frameworks proposed in~\cite{Wakakuwa2017} and~\cite{Wakakuwa21} are non-convex and conditionally convex, respectively, while the one we formulate here is convex.}
\label{tab:Comparison}
\end{table}

\section{Characterizing genuine quantum non-Markovianity \label{sec:sQNM}}
The QCMI for an arbitrary state $\rho_{ABC}$ can also be written as
\begin{align}
    I(A;C|B)_{\rho}\coloneqq I(AB;C)_{\rho}-I(B;C)_{\rho}\geq 0,
\end{align}
where $I(B;C)_{\rho}\coloneqq S(B)_{\rho}+S(C)_{\rho}-S(BC)_{\rho}$ is the quantum mutual information between $B,C$ in the state $\rho_{BC}$, and it also always holds that $I(A;C|B)_{\rho}=I(C;A|B)_{\rho}$. For any quantum state $\rho_{ABC}$, there exists a universal recovery map $\mathcal{R}_{B\to BC}$ such that~\cite{SFR16,JRS+18} 
\begin{align}\label{eq:qcmi-fid}
    I(A;C|B)_{\rho}\geq -\log_2 F(\rho_{ABC},\mathcal{R}_{B\to BC}(\rho_{AB})),
\end{align}
where $F(\rho,\sigma)\coloneqq \norm{\sqrt{\rho}\sqrt{\sigma}}_1^2$ is the fidelity with $\norm{M}_1\coloneqq \Tr [\sqrt{M^\dag M}]$ denoting the trace-norm of an operator $M$. A universal recovery map $\mathcal{R}_{B\to BC}$ in inequality~\eqref{eq:qcmi-fid} is a quantum channel, i.e., a completely positive, trace-preserving (CPTP) map, depending only on $\rho_{BC}$ (not on $\rho_{ABC}$) and $\mathcal{R}_{B\to BC}(\rho_{B})=\rho_{BC}$. From inequality~\eqref{eq:qcmi-fid}, we see that for any state $\rho_{ABC}$ there exists a universal recovery map $\mathcal{R}_{B\to BC}$ such that~\cite{FR15,SFR16,JRS+18}
\begin{align}
    F(\rho_{ABC},\mathcal{R}_{B\to BC}(\rho_{AB}))\geq 2^{-I(A;C|B)_{\rho}}.
\end{align}
A state $\rho_{ABC}$ is a quantum Markov state ($A-B-C$), i.e., $I(A;C|B)_{\rho}=0$, if and only if there exists a universal map $\mathcal{R}_{B\to BC}$ such that $\rho_{ABC}=\mathcal{R}_{B\to BC}(\rho_{AB})$. An approximate Markov state $\rho_{ABC}$ is approximately recoverable from $\rho_{AB}$ (or $\rho_{BC}$) via an universal recovery map $\mathcal{R}_{B\to BC}$ (or $\mathcal{R}_{B\to AB}$) as they have small QCMI, i.e., $I(A;C|B)_{\rho}\leq \varepsilon$ for some appropriately small $\varepsilon$.

A quantum state $\rho_{ABC}$ for which the QCMI is non-zero, $I(A;C|B)_\rho \neq 0$, is a non-Markov state. However, the Markov states do not form a convex set. There are many examples of quantum Markov states ${\sigma^x_{ABC}}_{x\in\mathcal{X}}$, i.e., $I(A;C|B)_{\sigma^x}=0$ for all $x\in\mathcal{X}$, whose non-trivial probabilistic mixtures $\sigma_{ABC}=\sum_{x\in\mathcal{X}}p_X(x)\sigma^x_{ABC}$ for $0\leq p_x\leq 1$ and $\sum_xp_X(x)=1$ yield non-zero QCMI, $I(A;C|B)_{\sigma}>0$. It can be easily checked by assuming $ABC$ as a three-qubit system with $\sigma_{ABC}= \proj{000}_{ABC}$ and $\tau_{ABC}= \proj{101}_{ABC}$. Mixing a set of states following a probability distribution is a classical operation. Thus, it is clear that the non-Markovian nature of a quantum state may have a classical origin. Beyond this, some states have non-Markov properties of genuine quantum origin and cannot be created by classical processing like the one mentioned above. On the other hand, as we show later, a probabilistic mixture of genuine quantum non-Markov states may result in a Markov state (see Observation~\ref{obs:QuantExt}). Hence, non-Markovianity in a quantum state may have both classical and quantum contributions.

\subsection{Squashed quantum non-Markovianity}

We aim to characterize states having non-Markovianity of genuine quantum origin. Here, we introduce a faithful measure quantifying genuine quantum non-Markovianity in a state. We call this measure as {\it squashed quantum non-Markovianity (sQNM)}. 
\begin{definition}[Squashed quantum non-Markovianity]\label{defi:QNM}
The squashed quantum non-Markovianity (sQNM) $N_{\mathrm{sq}}$ of an arbitrary tripartite quantum state $\rho_{ABC}$ is defined by a genuine quantum correlation between two subsystems $A$ and $C$ when conditioned on the subsystem $B$, as
\begin{align}\label{eq:QNMMeas} 
N_{\mathrm{sq}}(A;C|B)_\rho:=\frac{1}{2}\inf_{\rho_{ABCE}} I(A;C|BE)_\rho,    
\end{align}
where the infimum is taken over the set of all state extensions $\rho_{ABCE}$ of the state $\rho_{ABC}$, i.e., $\rho_{ABC}=\Tr_E[\rho_{ABCE}]$. 
\end{definition}

It trivially follows from the above definition that $N_{\mathrm{sq}}(A;C|B)_{\rho}=N_{\operatorname{sq}}(C;A|B)_{\rho}$ for any state $\rho_{ABC}$. A priori $|E|$ is unbounded (cf.~\cite{Christandl04,BG14}). As a direct consequence of the above definition, we see that for an arbitrary pure state $\psi_{ABC}$ of arbitrary dimensional systems $ABC$, 
\begin{align}
    N_{\mathrm{sq}}(A;C|B)_{\psi}=\frac{1}{2}I(A;C|B)_{\psi} = \frac{1}{2}I(A;C)_{\psi}.
\end{align}
For example, let us consider two important classes of pure entangled states for three-qubit systems: Greenberger-Horne-Zeilinger (GHZ)~\cite{GHZ89, mermin90} and W states~\cite{DVC00}. The three-qubit GHZ state $\op{\mathrm{GHZ}}$ and W state $\op{W}$ are given by
\begin{align}
&\ket{\mathrm{GHZ}}_{ABC}=\frac{1}{\sqrt{2}}\bigg(\ket{000}+\ket{111}\bigg)_{ABC}, \nonumber \\
&\ket{\mathrm{W}}_{ABC}=\frac{1}{\sqrt{3}}\bigg(\ket{001}+\ket{010}+\ket{100}\bigg)_{ABC}\nonumber \\
&\qquad =\sqrt{\frac{2}{3}}\ket{0}_A\ket{\psi^+}_{BC}+\sqrt{\frac{1}{3}}\ket{1}_A\ket{00}_{BC},  \end{align}
where $\ket{\psi^+}=(\ket{01}+\ket{10})/\sqrt{2}$ corresponds to an EPR-Bell state. The sQNM for three-qubit GHZ and W states are $N_{\mathrm{sq}}(A;C|B)_{\mathrm{GHZ}}=0.5$ and $N_{\mathrm{sq}}(A;C|B)_{\mathrm{W}}=0.4591$, respectively. More generally, for the following families $\{\mathrm{GHZ}(p)\}_{0\leq p\leq 1}$ and $\{\mathrm{W}(p)\}_{0\leq p\leq 1}$ of three-qubit GHZ and W states, where $\mathrm{GHZ}(p)=\op{\mathrm{GHZ}(p)}$ and $\mathrm{W}(p)=\op{\mathrm{W}(p)}$ such that
\begin{align}
&\ket{\mathrm{GHZ}(p)}_{ABC}:=\sqrt{p}\ket{000}_{ABC}+\sqrt{1-p}\ket{111}_{ABC},  \label{eq:GHZ} \\
&\ket{\mathrm{W}(p)}_{ABC} :=\sqrt{p}\ket{0}_A\ket{\psi^+}_{BC}+\sqrt{1-p}\ket{1}_A\ket{00}_{BC},\label{eq:wstate}\end{align}
we have $N_{\mathrm{sq}}(A;C|B)_{\mathrm{GHZ}(p)}=h_2(p)/2=N_{\mathrm{sq}}(A;C|B)_{\mathrm{W}(p)}$ with $h_2(p):=-(1-p)\log_2(1-p)-p\log_2p$ being the binary Shannon entropy.

In contrast, if the systems $B$ and $A$ are swapped in $\mathrm{W}_{ABC}$~\eqref{eq:wstate} above to get $\widetilde{\mathrm{W}}_{ABC}$, we have another family $\{\widetilde{\mathrm{W}}(p)\}_{0\leq p\leq 1}$ of states for which $N_{\mathrm{sq}}(A;C|B)_{\widetilde{\mathrm{W}}(p)}=h_2(p/2)-h_2(p)/2$, a monotonically increasing function in $p$. In the limiting case of $p=1$, we have $\ket{\widetilde{\mathrm{W}}(1)}=\ket{0}_B\ket{\psi^+}_{AC}$, and $N_{\mathrm{sq}}(A;C|B)_{\widetilde{\mathrm{W}}(p)}=1$. This is a special case of the following observation. If the pure state $\psi_{ABC}$ is a product state between $AC$ and $B$, i.e., of the form $\psi_{ABC}=\varphi_{AC}\otimes\phi_{B}$, then
\begin{align}
     N_{\mathrm{sq}}(A;C|B)_{\psi}=\frac{1}{2} I(A;C)_{\varphi}=S(A)_{\varphi}=S(C)_{\varphi}.
\end{align}

The definition of $N_{\mathrm{sq}}$ shares the same spirit of squashed entanglement ($E_{\mathrm{sq}}$), where the latter is defined as~\cite{Christandl04} 
\begin{align}
E_{\mathrm{sq}} (A;C)_\sigma :=\frac{1}{2} \inf_{\sigma_{ACF}} I(A;C|F)_\sigma,
\end{align}
for an arbitrary bipartite state $\sigma_{AC}$, where the infimum is carried out over the set of all state extensions $\sigma_{ACF}$, i.e., $\sigma_{AC}=\Tr_F [\sigma_{ACF}]$. It is a faithful measure of quantum entanglement and vanishes if and only if $\sigma_{AC}$ is a separable state in the partition $A|C$~\cite{Brandao11}.

Any finite-dimensional quantum state $\rho_{ACF}$ with $I(A;C|F)_\rho=0$ satisfies a Markov chain structure~\cite{hayden2004},
\begin{align}
\rho_{ACF}=\oplus_i p_i \ \rho_{AF^L_i} \otimes \rho_{F^R_iC}, \label{eq:MarkovStructure} 
\end{align}
where the Hilbert space of system $F$ has a decomposition $F=\oplus_{i}F^L_i\otimes F^R_i$. It can be easily seen that the reduced state $\rho_{AC}$ is separable. While the measures $N_{\mathrm{sq}}$ and $E_{\mathrm{sq}}$ share several interesting properties, they represent qualitatively different aspects of a quantum state. The $E_{\mathrm{sq}}$ signifies genuine bipartite quantum correlation, that is, quantum entanglement. In contrast, for a tripartite system, $N_{\mathrm{sq}}$ represents a genuine bipartite quantum correlation with the condition of having complete knowledge of the third system. The $E_{\mathrm{sq}}$ and $N_{\mathrm{sq}}$ satisfy the relationship mentioned in the lemma below. We refer to Appendix~\ref{app:Lemma-sQNMsEnt} for the proof.

\begin{lemma}\label{lem:sQNMsEnt}
For any finite-dimensional tripartite system $ABC$ in an arbitrary state $\rho_{ABC}$ and any isometry $V: B \to B_L \otimes B_R$ transforming $\rho_{ABC} \to \sigma_{AB_LB_RC}$, i.e., $V(\rho_{ABC})V^\dagger=\sigma_{AB_LB_RC}$,
\begin{align}
&E_{\mathrm{sq}}(A;C)_\rho\leq N_{\mathrm{sq}}(A;C|B)_\rho  \leq\nonumber\\
 &\min \left\{E_{\mathrm{sq}}(AB_L;B_RC)_\sigma, E_{\mathrm{sq}}(AB;C)_\rho, E_{\mathrm{sq}}(A;BC)_\rho\right\}.
\end{align}
\end{lemma}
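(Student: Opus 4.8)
The plan is to write all three entropic quantities as infima of a conditional mutual information over families of state extensions, and then derive the inequalities from elementary set inclusions together with the chain rule for QCMI and its non-negativity (strong subadditivity).

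For the lower bound $E_{sq}(A;C)_\rho \le N_{sq}(A;C|B)_\rho$, I would first note that any extension $\rho_{ABCE}$ of $\rho_{ABC}$ is also an extension of the marginal $\rho_{AC}$, with the combined system $F=BE$ playing the role of the squashing register, and that under this identification $I(A;C|BE)_\rho=I(A;C|F)_\rho$. Since $E_{sq}(A;C)_\rho$ optimizes $\tfrac12 I(A;C|F)$ over \emph{all} extensions of $\rho_{AC}$, while $N_{sq}(A;C|B)_\rho$ optimizes the same objective only over the subfamily whose squashing register contains $B$ and whose $ABC$-marginal is fixed to $\rho_{ABC}$, the $N_{sq}$ infimum runs over a smaller set and is therefore at least as large. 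This gives the lower bound with essentially no computation.

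For the upper bounds $N_{sq}(A;C|B)_\rho \le E_{sq}(AB;C)_\rho$ and $N_{sq}(A;C|B)_\rho \le E_{sq}(A;BC)_\rho$, the key observation is that all of these minimize over the very same set of extensions $\rho_{ABCE}$ of $\rho_{ABC}$ (merely renaming the squashing register). The chain rule $I(AB;C|E)=I(B;C|E)+I(A;C|BE)$ together with $I(B;C|E)\ge 0$ yields the extension-wise inequality $I(A;C|BE)_\rho \le I(AB;C|E)_\rho$; taking the infimum over $E$ gives the first bound, and the second follows identically after invoking $I(A;C|BE)=I(C;A|BE)$ and interchanging $A$ and $C$.

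The subtle step, and the one I expect to be the main obstacle, is $N_{sq}(A;C|B)_\rho \le E_{sq}(AB^L;B^RC)_\sigma$, since it must account for the isometry $V$. First I would establish that a local isometry on the conditioning system cannot change $N_{sq}$, i.e.\ $N_{sq}(A;C|B)_\rho=N_{sq}(A;C|B^LB^R)_\sigma$: because $V$ is an isometry it induces a bijection between extensions $\rho_{ABCE}$ and $\sigma_{AB^LB^RCE}$ and preserves each von Neumann entropy entering $I(A;C|BE)$, so the two infima coincide term by term. On the transformed state I would then use the two-step chain-rule decomposition $I(AB^L;B^RC|E)=I(AB^L;B^R|E)+I(B^L;C|B^RE)+I(A;C|B^LB^RE)$, in which every summand is a non-negative QCMI, to obtain $I(A;C|B^LB^RE)_\sigma \le I(AB^L;B^RC|E)_\sigma$ extension-wise; passing to the infimum over $E$ and combining with the isometric invariance delivers the claimed bound. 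Taking the minimum of the three upper bounds completes the proof.
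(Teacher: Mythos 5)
Your proposal is correct and follows essentially the same route as the paper's own proof: the lower bound via the observation that $N_{sq}$ optimizes over a more restricted family of extensions than $E_{sq}(A;C)$, the upper bounds via the chain rule for QCMI together with its non-negativity applied extension-wise, and the $E_{sq}(AB^L;B^RC)_\sigma$ bound via isometric invariance $N_{sq}(A;C|B)_\rho = N_{sq}(A;C|B^LB^R)_\sigma$ followed by the chain-rule decomposition. If anything, your write-up is more explicit than the paper's (which merely asserts the invariance and that the chain-rule step "can be easily checked"), so no gap to report.
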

Clearly, if a state $\rho_{ABC}$ is separable in any of the partitions $A|BC$, $AB|C$, or $\sigma_{AB_LB_RC}$ is separable in the partition $AB_L|B_RC$, then the sQNM vanishes. Separability between $A$ and $C$ does not necessarily imply that the state has vanishing sQNM, e.g., a tripartite GHZ state~\eqref{eq:GHZ} for which $0=E_{\rm sq}(A;C)_{\rm GHZ(\frac{1}{2})}<N_{\rm sq}(A;C|B)_{\rm GHZ(\frac{1}{2})}=1$. For a tripartite state $\psi_{AC}\otimes\omega_B$, $E_{\mathrm{sq}}(A;C)=N_{\mathrm{sq}}(A;C|B)$ if $\psi_{AC}$ is pure. The measure $N_{\rm sq}$ is a genuine tripartite quantum correlation that can be considered in-between bipartite and tripartite entanglement.

Now, we dwell on various important properties satisfied by the measure sQNM and state them in the following lemma. See Appendix~\ref{app:lem-Qprop1} for the proof of (P1)-(P5).  

\begin{lemma}\label{lem:Qprop1}
The squashed quantum non-Markovianity $N_{\mathrm{sq}}$ of arbitrary finite-dimensional systems satisfies the following properties:

\noindent
(P1) Convexity: For any state $\rho_{ABC}= \sum_m p_m \  \rho_{ABC}^m $, where $0 \leq p_m \leq 1$ for all $m$ and  $\sum_m p_m=1$,
\begin{align}
N_{\mathrm{sq}}(A;C|B)_\rho \leq \sum_m p_m \ N_{\mathrm{sq}}(A;C|B)_{\rho^m}.     
\end{align}
\noindent
(P2) Super-additivity: For every quantum state $\rho_{AA^\prime BB^\prime CC^\prime}$, the sQNM is super additive,
\begin{align}
    &N_{\mathrm{sq}}(AA^\prime;CC^\prime|BB^\prime)_{\rho}\nonumber \\ &\hspace{1cm} \geq N_{\mathrm{sq}}(A;C|B)_{\rho} + N_{\mathrm{sq}}(A^\prime;C^\prime|B^\prime)_{\rho},
\end{align}
and, for a state with tensor product structure $\rho_{AA^\prime BB^\prime CC^\prime} = \rho_{A B C} \otimes \rho_{A^\prime B^\prime C^\prime}$, it is additive,
\begin{align}
 &N_{\mathrm{sq}}(AA^\prime;CC^\prime|BB^\prime)_{\rho}\nonumber \\
 & \hspace{1cm}= N_{\mathrm{sq}}(A;C|B)_{\rho} + N_{\mathrm{sq}}(A^\prime;C^\prime|B^\prime)_{\rho}.   
\end{align}

\noindent
(P3) Monogamy: For every state $\rho_{AA^\prime B C}$ the sQNM satisfies a monogamy relation
\begin{align}
 &N_{\mathrm{sq}}(AA^\prime ; C|B)_\rho \nonumber \\ 
 &\hspace{1cm}\geq N_{\mathrm{sq}}(A ; C|B)_\rho + N_{\mathrm{sq}}(A^\prime ; C|B)_\rho. 
\end{align}

\noindent
(P4) Asymptotic continuity: For any two states $\rho_{ABC}$ and $\sigma_{ABC}$ with trace-distance $\frac{1}{2}\norm{\rho_{ABC} - \sigma_{ABC}}_1 \le\varepsilon$ with $0 \leq \varepsilon \leq 1$,
 \begin{align}
  \abs{  N_{\mathrm{sq}}(A;C|B)_\rho - N_{\mathrm{sq}}(A;C|B)_\sigma } \le \frac{f(\varepsilon)}{2},  
 \end{align}
where $f(\varepsilon) \to 0$ as $\varepsilon \to 0$.

 \noindent
 (P5) Faithfulness: For any state $\rho_{ABC}$, squashed quantum non-Markovianity is zero, i.e., $N_{\mathrm{sq}}(A;C|B)_{\rho}=0$, if and only if the state $\rho_{ABC}$ is not genuinely quantum non-Markovian. Meaning that, $N_{\mathrm{sq}}(A;C|B)_{\rho}=0$ if and only if there exists a universal recovery channel $\mathcal{R}_{BE\to BCE}$ such that 
\begin{align}
\mathcal{R}_{BE\to BCE}(\rho_{ABE})=\rho_{ABCE},  \label{faith_ful}
\end{align}
there exits state extension $\rho_{ABCE}$ of $\rho_{ABC}$, such that  Eq.~\eqref{faith_ful} is true. Also see Remark~\ref{rem:gendivfaith}.
\end{lemma}

\begin{figure}
\centering
\includegraphics{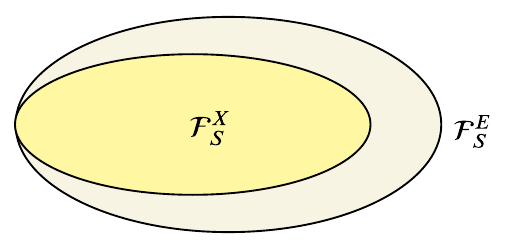}
\caption{{\it{Set of states with vanishing sQNM.}} Here, $\mathcal{F}^{X}_S$ refers to the convex set of free states with the optimal extension for $N_{\mathrm{sq}}{=}0$ achieved through a classical extension, see Eq.~\eqref{eq:ClassExt}. Outside of $\mathcal{F}^{X}_S$, there exist free states for which $N_{\mathrm{sq}}{=}0$  is achieved via extensions beyond classical registers, see Observation~\ref{obs:QuantExt}. In general, we represent the convex set of free states with the most general extension by $\mathcal{F}^{E}_S$, with the set $\mathcal{F}^{X}_S$ contained within it, i.e.,  $\mathcal{F}^{X}_S\subsetneq \mathcal{F}^{E}_S$ holds.
\label{fig:SetOfStates}}
\end{figure}

Any quantum property, such as sQNM, quantified after squashing out all the non-quantum contributions should satisfy the convexity property (P1). This, in other words, implies that one cannot increase sQNM on average by classically mixing any set of states. As consequence, all the states with $N_{\mathrm{sq}}=0$ (i.e., {\it free states}), form a convex set $\mathcal{F}_S^E$ (see Fig.~\ref{fig:SetOfStates} for details). Note that a convex mixture of the states in this set may result in a state with non-zero QCMI. However, a classical extension of the state and considering the extended system as part of the condition leads to a vanishing QCMI. For example, consider a set $\{\rho^x_{ABC}\}_{x\in\mathcal{X}}$ of quantum states such that $I(A;C|B)_{\rho^x}=0$ for each $x\in\mathcal{X}$. In general, for nontrivial convex combination $\rho^X_{ABC}\coloneqq\sum_{x}p_X(x)\rho^x_{ABC}$, $I(A;C|B)_{\rho^X}\neq 0$, unless states $\rho^x_{ABC}$ have special structure. If we take state extension 
\begin{align}
\rho_{ABCX}=\sum_xp_X(x)\rho^x_{ABC}\otimes\op{x}_X,  \label{eq:ClassExt}  
\end{align}
given that $\operatorname{Tr}_X\{\rho_{ABCX}\}=\sum_xp(x)\rho^x_{ABC}$, then $I(A;C|BX)_{\rho}=0$ and hence there exists a universal map $\mathcal{R}_{BX\to BCX}$ where $\mathcal{R}_{BX\to BCX}(\rho_{ABX})=\rho_{ABCX}$. This is how the classical contribution to the non-vanishing QCMI is squashed out. Nevertheless, these states form a convex set, which we denote as $\mathcal{F}^{X}_S$. The convex set $\mathcal{F}^E_S$ of states with vanishing sQNM is proper super-set of $\mathcal{F}^{X}_S$, that is $ \mathcal{F}^X_S\subsetneq \mathcal{F}^{E}_S$ (see Fig.~\ref{fig:SetOfStates}).

In a more general setting, there are states $\rho_{ABC}$ with respective state extension $\rho_{ABCE}$, i.e., $\Tr_{E}[\rho_{ABCE}]=\rho_{ABC}$, such that $I(A;C|B)_{\rho}\neq 0$ but do have $I(A;C|BE)_{\rho}=0$, where the system $E$ need not be a classical register. For each such states $\rho_{ABC}$, there exists a universal recovery map $\mathcal{R}_{BE\to BCE}$ so that $\mathcal{R}_{BE\to BEC}(\rho_{ABE})=\rho_{ABCE}$. In this regard, we highlight the following observation.

\begin{observation}\label{obs:QuantExt}
A probabilistic mixture of genuine quantum non-Markov states may result in a state with vanishing sQNM. One such example is a set of states $\{ \ketbra{i}{i}_B \otimes \ketbra{\phi_i}{\phi_i}_{AC} \}_i$ where $\{\proj{\phi_i}_{AC}\}_i$ are the four Bell states shared between $A$ and $C$. While each state $\ketbra{i}{i}_B \otimes \ketbra{\phi_i}{\phi_i}_{AC}$ has $N_{\mathrm{sq}}(A;C|B)=1$, a probabilistic mixture $\sigma_{ABC}=\frac{1}{4} \sum_{i} \ketbra{i}{i}_B \otimes \ketbra{\phi_i}{\phi_i}_{AC}$ has vanishing sQNM. More importantly, the extension of $\sigma_{ABC}$ for which we have $I(A;C|BE)_\sigma =0$, is a four-partite pure state $\sigma_{ABCE}=\proj{\psi}_{ABCE}$ with $\ket{\psi}_{ABCE}=\frac{1}{2}\sum_i \ \ket{ii}_{BE} \otimes \ket{\phi_i}_{AC}$. Here, the state extension is genuinely quantum, and having access to that extension as part of the condition leads to vanishing QCMI.
\end{observation}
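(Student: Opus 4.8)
The plan is to establish the three quantitative claims separately, reusing the pure-state identities recorded above, and to treat the strict inclusion last since it is the delicate point. For each fixed $i$, the component $\proj{i}_B\otimes\proj{\phi_i}_{AC}$ is a pure state of the $AC{\mid}B$–product form $\varphi_{AC}\otimes\phi_B$ with $\varphi_{AC}=\proj{\phi_i}$ a Bell state. The identity $N_{sq}(A;C|B)_\psi=S(A)_\varphi$ quoted above for such states then yields $N_{sq}=S(A)_\varphi=1$, because the $A$-marginal of a Bell pair is maximally mixed on a qubit.

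To show $N_{sq}(A;C|B)_\sigma=0$ I would exhibit a single extension attaining the bound: since $N_{sq}=\tfrac{1}{2}\inf_E I(A;C|BE)_\sigma$ and each QCMI is non-negative, it suffices to find one $E$ with $I(A;C|BE)_\sigma=0$. I take the purification $\ket{\psi}_{ABCE}=\tfrac{1}{2}\sum_i\ket{ii}_{BE}\otimes\ket{\phi_i}_{AC}$ and first verify $\Tr_E\proj{\psi}=\sigma_{ABC}$ using the orthonormality of $\{\ket{i}_E\}$. Viewing $\psi_{ABCE}$ as a pure tripartite state with parts $A$, $C$, and $BE$, the purity relations $S(ABE)=S(C)$, $S(BCE)=S(A)$, $S(BE)=S(AC)$, $S(ABCE)=0$ collapse the conditional mutual information to $I(A;C|BE)_\psi=I(A;C)_\psi$. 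Finally $\psi_{AC}=\Tr_{BE}\proj{\psi}=\tfrac{1}{4}\sum_i\proj{\phi_i}=\tfrac{1}{4} I_{AC}$ is maximally mixed, hence a product state, so $I(A;C)_\psi=0$ and this genuinely quantum extension certifies $N_{sq}(\sigma)=0$.

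For the separation $\mathcal{F}^X_S\subsetneq\mathcal{F}^E_S$, the inclusion $\mathcal{F}^X_S\subseteq\mathcal{F}^E_S$ is immediate, as a classical register is one admissible extension and any state with $I(A;C|BX)=0$ for classical $X$ a fortiori has $N_{sq}=0$. Strictness is the step I expect to be the main obstacle. Because the classical extension of Eq.~\eqref{eq:ClassExt} gives $I(A;C|BX)=\sum_x p_x\,I(A;C|B)_{\rho^x}$, membership in $\mathcal{F}^X_S$ is equivalent to being a convex mixture of quantum Markov states $A-B-C$; so one must produce a free state $\rho^\star\in\mathcal{F}^E_S$ that provably admits no such decomposition. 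The difficulty is that this is a global property of the state: looking non-Markov is not sufficient, since a non-Markov-looking state can still split into Markov states, and conversely one must rule out every conceivable mixture. I would therefore argue structurally, restricting attention to states supported on a fixed low-rank subspace, classifying the Markov states supported there via the block form $\rho_{ABC}=\oplus_j q_j\,\rho_{AB^L_j}\otimes\rho_{B^R_j C}$ of Eq.~\eqref{eq:MarkovStructure} (with conditioning system $B=\oplus_j B^L_j\otimes B^R_j$), and then separating $\rho^\star$ from their closed convex hull by an entanglement- or support-based witness.
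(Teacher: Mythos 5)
Your verification of the two quantitative claims is correct and follows the paper's own (largely implicit) argument: each component $\proj{i}_B\otimes\proj{\phi_i}_{AC}$ is a pure state of the product form $\varphi_{AC}\otimes\phi_B$ treated right after Definition~\ref{defi:QNM}, so $N_{sq}=S(A)_{\varphi}=1$ for a Bell state; and for the mixture you exhibit exactly the extension the paper names, check $\Tr_E\proj{\psi}=\sigma_{ABC}$, collapse $I(A;C|BE)_\psi$ to $I(A;C)_\psi$ by purity of $\psi_{ABCE}$, and conclude from $\psi_{AC}=\tfrac14\sum_i\proj{\phi_i}=\mathds{1}_{AC}/4$ being a product state. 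The paper asserts these facts; your computation is the proof they require.

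The genuine gap is the strict inclusion $\mathcal{F}^{X}_S\subsetneq\mathcal{F}^{E}_S$, which you leave as a plan — and your warning that ``a non-Markov-looking state can still split into Markov states'' is not just a caution, it bites on this very example, so your decision not to use $\sigma$ as the witness is vindicated. Consider $\ket{\chi_{\vec\theta}}=\tfrac12\sum_i e^{\mathrm{i}\theta_i}\ket{i}_B\otimes\ket{\phi_i}_{AC}$. Each such state is pure with $AC$ marginal $\tfrac14\sum_i\proj{\phi_i}=\mathds{1}_{AC}/4$, a product state; since $I(A;C|B)=I(A;C)$ for pure tripartite states, every $\ket{\chi_{\vec\theta}}$ is a quantum Markov state (indeed $\ket{\chi_{\vec 0}}$ and $\ket{\phi^+}_{AB^R}\otimes\ket{\phi^+}_{B^LC}$ are both purifications of $\mathds{1}_{AC}/4$, hence equal up to a unitary on $B$). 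Averaging $\proj{\chi_{\vec\theta}}$ over independent phases $\theta_i\in\{0,\pi\}$ cancels every off-diagonal term $i\neq j$ and returns exactly $\sigma_{ABC}$. Therefore $\sigma\in\mathcal{F}^{X}_S$: besides the quantum purifying extension, $\sigma$ also admits a classical flag extension squashing its QCMI, so it cannot separate $\mathcal{F}^{X}_S$ from $\mathcal{F}^{E}_S$. This means the strict-inclusion claim is not established by the paper's example either; the witness $\rho^\star$ your strategy calls for would have to be a different state, and neither your proposal nor the paper's text supplies one, so on this point the Observation remains an assertion rather than a proved statement.
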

  
Generally, for any form of classical correlation in a multipartite state, the global amount of that property cannot exceed the sum of local ones. However, on the contrary, this is not true for quantum correlations such as (squashed) entanglement. The sQNM satisfies super-additivity (P2) and quantifies a correlation of genuine quantum nature. The sQNM also satisfies monogamy property (P3), signifying that, for a multipartite system, the more a quantum system shares a genuine quantum non-Markovian correlation with another, then the less genuine quantum non-Markovian correlation it shares with the others. The asymptotic continuity (P4)  is another desirable property satisfied by the sQNM measure of quantum non-Markovianity. As expected, any two states, asymptotically close in trace distance, are also asymptotically close in their sQNM. These properties, along with faithfulness (P5), indicate that the sQNM may be considered a quantum resource, and a systematic characterization would prepare the base for a resource theory of genuine quantum non-Markovianity. It is worth mentioning that there have been attempts to study quantum non-Markovianity as a resource. For instance, the QCMI is used to quantify quantum non-Markovianity in~\cite{Wakakuwa2017}. In~\cite{Wakakuwa21}, non-Markovianity of formation is introduced from a resource theoretic approach. However, these quantifiers do not satisfy all the properties satisfied by sQNM. See Table~\ref{tab:Comparison} for a comparison. Apart from that, the set $\mathcal{F}_S^W$ of free states considered in~\cite{Wakakuwa2017} and~\cite{Wakakuwa21} are the Markov states, and, thus, it is a subset of $\mathcal{F}_S^E$, i.e., $\mathcal{F}_S^W \subsetneq \mathcal{F}_S^E$. 

The extendibility of quantum states in non-conditioning systems limits the quantum non-Markovianity, and it depicts the monogamous behavior of quantum non-Markovianity, as shown below. A state $\rho_{AC}$ is said to be $k$-extendible in $C$ for a $k\geq 2, k\in \mathbb{N}$, if there exists a state extension $\sigma_{AC_1C_2\ldots C_k}$, where $C_1\simeq C_2\simeq \cdots\simeq C_k\simeq C$, such that $\sigma_{AC_i}\coloneqq \Tr_{C_1\ldots C_k\setminus {C_i}}[\sigma_{C_1\ldots C_k}]=\rho_{AC}$ for all $i\in \{1,2,\ldots, k\}$~\cite{DPS02}.

\begin{theorem}
For an arbitrary quantum state $\rho_{ABC}$ that is $k$-extendible in $C$, the squashed quantum non-Markovianity is upper bounded by 
\begin{align}\label{eq:qnm-ext}
N_{\mathrm{sq}}(A;C|B)_{\rho}\leq \frac{\log_2|A|}{k}.
\end{align}
\end{theorem}
 \begin{proof}
Let $\rho_{ABC}$ be $k$-extendible in $C$. Using properties of QCMI and monogamy property of squashed quantum non-Markovianity, we have
\begin{align}
\log_2|A| &\geq N_{\mathrm{sq}}(A;C_1C_2\ldots C_{k}|B)_{\rho} \nonumber \\
&\geq N_{\mathrm{sq}}(A;C_1|B)_{\rho}+\cdots +N_{\mathrm{sq}}(A;C_{k}|B)_{\rho} \nonumber \\
& = k \ N_{\mathrm{sq}}(A;C|B)_{\rho},
\end{align}
which yields bound in Eq.~\eqref{eq:qnm-ext}.
\end{proof}

It was shown in~\cite{Li18} that for any state $\rho_{AC}$ that is $k$-extendibile in $C$ has $E_{\mathrm{sq}}(A;C)_\rho \leq \frac{\log_2|A|}{k}$, and we have $ E_{\mathrm{sq}}(A;C)_\rho \leq N_{\mathrm{sq}}(A;C|B)_{\rho}$ from Lemma~\ref{lem:sQNMsEnt}. Based on the theorem above, we can see that the bound~\eqref{eq:qnm-ext} provides a tighter relation for $E_{sq}$ in the sense that for all states $\rho_{ABC}$ that is $k$-extendible in $C$, we have
\begin{align}
    E_{\mathrm{sq}}(A;C)_\rho \leq N_{\mathrm{sq}}(A;C|B)_{\rho}\leq \frac{\log_2|A|}{k}.
\end{align}

Now, we turn to the allowed operations that do not increase sQNM. These operations can also be considered the {\it free operations} for a resource theory characterizing quantum non-Markovianity. 

\begin{figure}[H]
\centering
\includegraphics[scale=0.6]{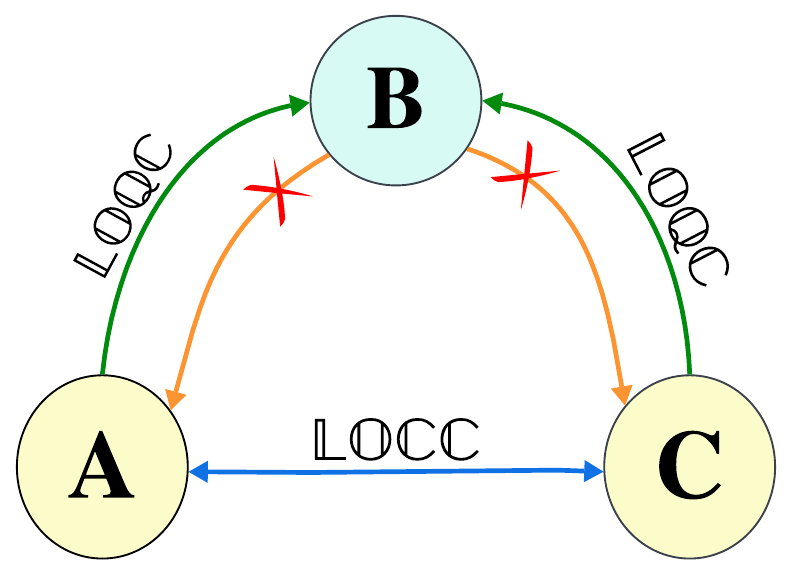}
\caption{The figure depicts the free operations under which the sQNM is monotonically non increasing. Here, we assume Alice, Bob, and Charlie possess the systems $\mathbf{A}$, $\mathbf{B}$, and $\mathbf{C}$, respectively, and Bob is used as the conditioning system. The arrows represent the direction of communication while implementing the operations. The free operations are local operation and classical communication between Alice and Charlie $\mathbb{LOCC}_{\mathbf{A} \leftrightarrow \mathbf{C}}$, involving both-way classical communications; and local operation and quantum communication $\mathbb{LOQC}_{\mathbf{A} \to \mathbf{B}}$ ($\mathbb{LOQC}_{\mathbf{C} \to \mathbf{B}}$) involving one-way quantum communication from Alice (Charlie) to Bob, while no classical communication is allowed from Bob to Alice (Charlie). See Lemma~\ref{prop:FreeOps} for more details.}
\label{fig:FreeOps}
\end{figure}
For notational convenience, whenever there is no ambiguity, we denote Alice, Bob, and Charlie's laboratories with $\mathbf{A}$, $\mathbf{B}$, and $\mathbf{C}$, respectively.
\begin{lemma}[Free operations]\label{prop:FreeOps}
The measure $N_{\mathrm{sq}}(A;C|B)_\rho$ of any state $\rho_{ABC}$ is a monotone, i.e., non-increasing, under the following operations (see Fig.~\ref{fig:FreeOps}): 

\noindent    
(O1) $\mathbb{LOCC}_{\mathbf{A}\leftrightarrow \mathbf{C}}$~\textemdash~Local quantum operations and classical communication (LOCC) channels~\cite{CLM+14} between Alice and Charlie.

\noindent
(O2) $\mathbb{LOQC}_{\mathbf{A}\to \mathbf{B}}$ and $\mathbb{LOQC}_{\mathbf{C}\to \mathbf{B}}$~\textemdash~Local quantum operation and (one-way) quantum communication (LOQC)~\cite{Datta_2016} channel, involving quantum communication from Alice to Bob and from Charlie to Bob respectively. It includes local quantum operations (channels) $\mathbb{LO}_{\mathbf{B}}$ on Bob.
\end{lemma}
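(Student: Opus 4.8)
The plan is to reduce every operation in (O1) and (O2) to a short list of elementary moves and to establish monotonicity of $N_{sq}$ under each, exploiting three ingredients: the data-processing inequality and the chain rule for the QCMI, the freedom in the infimum over extensions in Definition~\ref{defi:QNM} (which lets me absorb Stinespring dilations and classical records into the squashing system $E$), and the convexity property (P1) of Lemma~\ref{lem:Qprop1}. I would isolate four building blocks: (a) a local channel on $A$ or $C$; (b) a local channel on $B$; (c) transfer of a subsystem of $A$ (or $C$) into the conditioning system $B$, i.e.\ the elementary quantum-communication step; and (d) a local measurement on $A$ (or $C$) whose outcome is recorded. Any $\mathbb{LOQC}_{A\to B}$ (resp.\ $\mathbb{LOQC}_{C\to B}$) is a composition of (a), (b), (c), and any $\mathbb{LOCC}_{A\leftrightarrow C}$ is a composition of (a) and (d) across rounds; since a composition of monotones is a monotone, it suffices to treat the building blocks.

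For (a), given any extension $\rho_{ABCE}$ of $\rho_{ABC}$ I apply the channel to obtain an extension of the output state with the same $E$, and data processing for $I(A;C|BE)$ under a channel on $A$ (or $C$) gives the bound term by term; taking the infimum yields monotonicity. For (b) I dilate the $B$-channel to an isometry $V\colon B\to BR$, keep $R$ inside the squashing system instead of tracing it out, and use that $I(A;C|BRE)$ is invariant under an isometry applied to the conditioning system; this exhibits an extension of the output whose QCMI equals $I(A;C|BE)_\rho$, so the infimum cannot increase. For (c), writing $A=A_1A_2$ and moving $A_2$ into the condition, the chain rule $I(A_1A_2;C|BE)=I(A_2;C|BE)+I(A_1;C|A_2BE)$ together with nonnegativity of the QCMI gives $I(A_1;C|BA_2E)\le I(A_1A_2;C|BE)$ for every $E$; since the two expressions optimize over the same set of extensions of the (merely regrouped) global state, the infimum is monotone.

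The delicate step, and the one I expect to be the main obstacle, is (d) together with the classical communication that drives $\mathbb{LOCC}_{A\leftrightarrow C}$. My plan is to prove the ensemble bound $\sum_x p_x\, N_{sq}(A';C|B)_{\rho^x}\le N_{sq}(A;C|B)_\rho$, where $\{p_x,\rho^x_{A'BC}\}$ is produced by Alice's instrument $\{\mathcal{E}^x_{A\to A'}\}_x$. Taking a near-optimal extension $\rho_{ABCE}$ and applying the instrument-with-record channel $\Phi_{A\to A'X}$ yields a state $\omega_{A'XBCE}$ whose conditional states on $X=x$ extend $\rho^x_{A'BC}$, so that $\sum_x p_x\, I(A';C|BE)_{\omega^x}=I(A';C|BEX)_\omega$. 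The chain rule gives $I(A';C|BEX)_\omega\le I(A'X;C|BE)_\omega$, and data processing under the local channel $\Phi$ gives $I(A'X;C|BE)_\omega\le I(A;C|BE)_\rho$; letting the optimization gap vanish proves the ensemble bound. Monotonicity under one round of $A\to C$ communication then follows by applying a local channel on $C$ conditioned on $x$ (building block (a)) and invoking convexity (P1) to pass from the ensemble to the averaged output state; iterating over rounds and exchanging the roles of $A$ and $C$ (using $I(A;C|BE)=I(C;A|BE)$) handles general two-way $\mathbb{LOCC}_{A\leftrightarrow C}$. The real care needed here is bookkeeping: tracking which registers sit on Alice's side, on Charlie's side, or in the squashing system, and normalizing the trace-non-increasing instrument components correctly so that data processing applies to genuine channels.
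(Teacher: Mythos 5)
Your proposal is correct and follows essentially the same route as the paper: strong monotonicity under unilocal instruments on $A$/$C$ (classical record plus chain rule plus data processing, your block (d)) combined with convexity to get $\mathbb{LOCC}_{A\leftrightarrow C}$, absorbing the Stinespring environment of a $B$-channel into the squashing/conditioning system for $\mathbb{LO}_B$ (your block (b)), and the chain rule $I(A_1A_2;C|BE)\geq I(A_1;C|BA_2E)$ to move a register into the condition for $\mathbb{LOQC}_{A/C\to B}$ (your block (c)). The only differences are cosmetic — you phrase the dilation and instrument steps at the channel level with explicit flag registers, whereas the paper writes out the ancilla-unitary-partial-trace realization — so the two proofs coincide in substance.
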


The lemma is proved in Appendix~\ref{App:Prop3}. The sQNM is a monotone under local quantum channels on Alice and Charlie ($\mathbb{LO}_{\mathbf{A}}$ and $\mathbb{LO}_{\mathbf{C}}$). Moreover, it is a strong monotone under selective operations (or quantum instruments) on Alice and Charlie. For a set of instruments $\{ \mathcal{E}_{m,A} \}_m$, where each $\mathcal{E}_{m,A}$ is a completely positive map and their sum is a trace-preserving map $\mathcal{E}_{A}$, applied on a state $\rho_{ABC}$ with $\sigma^m_{ABC}=\mathcal{E}_{m,A} \otimes \id_{BC}(\rho_{ABC})/p_m$ and $p_m=\Tr[\mathcal{E}_{m,A} \otimes \id_{BC}(\rho_{ABC})]$, the strong monotonicity implies 
\begin{align}
N_{\mathrm{sq}}(A;C|B)_{\rho} \geq \sum_m \ p_m \  N_{\mathrm{sq}}(A;C|B)_{\sigma^m}.   
\end{align}
For the resultant state $\sigma_{ABC}=\mathcal{E}_A \otimes \id_{BC}(\rho_{ABC})$, we have $\sum_m \ p_m \  N_{\mathrm{sq}}(A;C|B)_{\sigma^m} \geq N_{\mathrm{sq}}(A;C|B)_\sigma$ as the consequence of convexity of sQNM. Hence, we find the monotonicity relation 
\begin{align}
N_{\mathrm{sq}}(A;C|B)_{\rho} \geq N_{\mathrm{sq}}(A;C|B)_\sigma.    
\end{align}

A similar relation also holds for selective operations applied to Charlie. Together with convexity, the strong monotonicity of operations on Alice and Charlie renders that the sQNM is monotonically non-increasing under $\mathbb{LOCC}_{\mathbf{A}\leftrightarrow \mathbf{C}}$, similar to what we have seen for squashed entanglement~\cite{Vidal00, Christandl04}; the sQNM $N_{sq}(A;C|B)_{\rho}$ could be deemed as a measure of the conditional entanglement in $\rho_{ABC}$ quantifying entanglement between $A$ and $C$ when conditioned on $B$. The $\mathbb{LOCC}_{\mathbf{A}\leftrightarrow \mathbf{C}}$ may involve implementing one-way or multiple rounds of classical communication between Alice and Charlie. 

The sQNM, $N_{\mathrm{sq}}(A;C|B)_\rho$ for a state $\rho_{ABC}$, remains invariant under a local unitary (or isometry) operation by Bob. In general, it is monotonically non-increasing under local quantum channel $\mathbb{LO}_{\mathbf{B}}$ on Bob's system. Any non-unitary operation introduces noise in Bob's system, which should not increase a genuine quantum non-Markovianity as expected. It is, however, not a strong monotone under selective operation (or instruments) on Bob. As a consequence, we see that the local operations implemented via one-way classical communication from Alice to Bob (and from Charlie to Bob), i.e., $\mathbb{LOCC}_{\mathbf{A} \to \mathbf{B}}$ (and $\mathbb{LOCC}_{\mathbf{C} \to \mathbf{B}}$), are the operations that do not increase sQNM. Any operation that involves local operations and classical communications from Bob to Alice or Charlie may increase the sQNM. For instance, with two initial Bell states shared between Alice and Bob and Bob and Charlie, local Bell measurements by Bob and classical communication to Charlie followed by a local unitary operation by Charlie can implement entanglement swapping, leading to the maximally entangled state between Alice and Charlie. Thus, $\mathbb{LOCC}_{\mathbf{B} \to \mathbf{A}}$ and $\mathbb{LOCC}_{\mathbf{B} \to \mathbf{C}}$ are not allowed, and this is a consequence of the fact that sQNM is not a strong monotone under local operations on Bob.

It can be seen from the additivity property of sQNM that attaching an additional bipartite system in an arbitrary state and sharing it with Alice and Bob (or with Charlie and Bob) does not increase sQNM. These states could be maximally entangled states, and, utilizing these, the operations $\mathbb{LOCC}_{\mathbf{A} \to \mathbf{B}}$ (and $\mathbb{LOCC}_{\mathbf{C} \to \mathbf{B}}$) enable perfect quantum communications $\mathbb{QC}_{\mathbf{A}\to \mathbf{B}}$ (and $\mathbb{QC}_{\mathbf{C}\to \mathbf{B}}$) from Alice (and Charlie) to Bob without increasing sQNM. This can also seen from the weak chain rule of sQNM. That is, for any four partite state $\rho_{AMBC}$,
\begin{align*}
N_{\mathrm{sq}}(AM;C|B)_\rho &\geq N_{\mathrm{sq}}(A;C|BM)_\rho + N_{\mathrm{sq}}(M;C|B)_\rho \\ 
&\geq N_{\mathrm{sq}}(A;C|BM)_\rho,  
\end{align*} 
the first inequality directly follows from the chain rule of QCMI and the definition of sQNM, and the second inequality is because $N_{\mathrm{sq}}(M;C|B)_\rho \geq 0$. The strong monotonicity under selective operations by Alice and Charlie and monotonicity under $\mathbb{QC}_{\mathbf{A}\to \mathbf{B}}$ and $\mathbb{QC}_{\mathbf{C}\to \mathbf{B}}$ together imply that the sQNM is monotonically non-increasing under (one-way) $\mathbb{LOQC}_{\mathbf{A}\to \mathbf{B}}$ and $\mathbb{LOQC}_{\mathbf{C}\to \mathbf{B}}$.  

Note that the set of free operations, denoted by $\mathcal{F}_O^E$, characterized above form a convex set. The set of free operations ($\mathcal{F}_O^{W_1}$) considered in~\cite{Wakakuwa2017} is a subset of the free operations ($\mathcal{F}_O^{W_2}$) considered in~\cite{Wakakuwa21}. As expected, both these sets are subsets of the set of free operations we characterize, and $\mathcal{F}_O^{W_1} \cup \mathcal{F}_O^{W_2} \subsetneq \mathcal{F}_O^E$. It is worth mentioning that intrinsic information~\cite{Christandl_2007,Schneider_2002}, a correlation in tripartite systems, is different from sQNM. For instance, the former is not convex and is not monotonically non-increasing under $\mathbb{LOCC}_{\mathbf{A}\leftrightarrow \mathbf{C}}$.

With the measure of sQNM and the free states and operations, we have the necessary ingredients to formulate a resource-theoretic framework to characterize genuine quantum non-Markovianity as a resource. One aim of this framework would be to study transformations between arbitrary states when restricted to free operations. This we shall consider in Section~\ref{sec:StateTrans}. Before that, we provide other measures of genuine quantum non-Markovianity based on the generalized divergence and the recoverability of quantum states. We also discuss the applicability of some of the measures of genuine QNM for infinite-dimensional systems. 

\subsection{Generalized squashed quantum non-Markovianity and fidelity of recovery}
We now introduce other measures to quantify squashed quantum non-Markovianity for arbitrary quantum systems applicable to both discrete and continuous variable quantum systems. Different distinguishability quantifiers have different operational meanings in terms of discrimination tasks, so it is pertinent to have measures based on different distinguishability measures. Depending on physical or analytical requirements, one can choose from appropriate quantifiers or measures accordingly.

A generalized divergence $\mathbf{D}(\rho\Vert\sigma)$ between states $\rho,\sigma$ is a monotone functional for all states under the action of quantum channels~\cite{SW13}, i.e., $\mathbf{D}(\rho_{A'}\Vert\sigma_{A'})\geq \mathbf{D}(\mathcal{N}_{A'\to A}(\rho_{A'})\Vert\mathcal{N}_{A'\to A}(\sigma_{A'}))$, where $\mathcal{N}$ is an arbitrary quantum channel and $\rho,\sigma$ are arbitrary states. Consequently, generalized divergences remain invariant under the action of isometric operations. Some examples of generalized divergences are the trace distance, sandwiched R\'enyi relative entropy~\cite{MDSFT13,WWY14}, hypothesis testing relative entropy~\cite{BD10,WR12}, relative entropy~\cite{Ume62}, negative of fidelity, etc. Based on the generalized divergence of recovery, in the same spirit as the fidelity of recovery~\cite{SW15} (see also ~\cite{BT16}), we now define generalized divergence of squashed quantum Markovianity and discuss specific properties with an example of the fidelity of squashed quantum non-Markovianity.
\begin{definition}\label{def:genQNM}
Generalized divergence $\mathbf{D}_{\mathrm{sqnm}}$ of squashed quantum non-Markovianity of an arbitrary state $\rho_{ABC}$ is defined as
\begin{align}\label{eq:genQNM}
&\mathbf{D}_{\mathrm{sqnm}}(A;C|B)_{\rho} \coloneqq  \nonumber\\
&\inf_{\rho_{ABCE}}\inf_{\mathcal{R}_{BE\to BCE}}
\mathbf{D}(\rho_{ABCE}\Vert\mathcal{R}_{BE\to BCE}(\rho_{ABE})),
\end{align}
where $\rho_{ABCE}$ is an arbitrary state extension of $\rho_{ABC}$ and $\mathcal{R}_{BE\to BCE}$ is a universal recovery map. Generally, this definition holds for arbitrary dimensional quantum systems, finite or infinite.
\end{definition}

We now consider an example of $\mathbf{D}_{\mathrm{sqnm}}$ which is obtained by taking generalized divergence $\mathbf{D}(\cdot\Vert\cdot)$ to be the negative fidelity $-F(\cdot, \cdot)$. Fidelity between arbitrary quantum states is non-decreasing under the action of quantum channels.
\begin{definition}\label{def:fidrec}
The fidelity of squashed quantum non-Markovianity $F_{\mathrm{sqnm}}$ for an arbitrary state $\rho_{ABC}$ is defined as
\begin{align}
     &   F_{\mathrm{sqnm}}(A;C|B)_{\rho} \coloneqq \nonumber\\
     &  \sup_{\rho_{ABCE}}\sup_{\mathcal{R}_{BE\to BCE}}
       F(\rho_{ABCE} \Vert \mathcal{R}_{BE\to BCE}(\rho_{ABE})),
\end{align}
where $\rho_{ABCE}$ is an arbitrary state extension of $\rho_{ABC}$ and $\mathcal{R}_{BE\to BCE}$ is a universal recovery map.
\end{definition}

As $0\leq F(\rho,\sigma)\leq 1$ between any two states defined on the same Hilbert space, $0\leq F_{\mathrm{sqnm}}(A;C|B)_{\rho}\leq 1$ holds for an arbitrary state $\rho_{ABC}$. We note that $ F_{\mathrm{sqnm}}(A;C|B)_{\rho}$ is the supremum of the fidelity of recovery $F(A;C|BE)_{\rho}$ for all possible state extensions $\rho_{ABCE}$, with recovery map being a universal recovery map. It holds that $ F_{\mathrm{sqnm}}(A;C|B)_{\rho} \leq F^{\mathrm{sq}}(A;C)_{\rho}$, where $F^{\mathrm{sq}}(A;C)_{\rho}$ is defined in~\cite[Eq.~(23)]{SW15} with recovery map being a universal recovery map.

\begin{remark}\label{rem:gendivfaith}
Consider the generalized divergences $\mathbf{D}$ that are faithful, meaning that, the generalized divergence $\mathbf{D}(\rho\Vert \sigma)$ between states $\rho$ and $\sigma$ is vanishing, $\mathbf{D}(\rho\Vert \sigma)=0$, if and only the states are equal, $\rho=\sigma$. Then, the generalized divergence of squashed quantum non-Markovianity (Definition~\ref{def:genQNM}) is a faithful measure of genuine quantum non-Markovianity in states whenever the associated generalized divergence is faithful. That is, for any faithful generalized divergence $\mathbf{D}$ in Eq.~\eqref{eq:genQNM}, $\mathbf{D}_{\mathrm{sqnm}}(A;C|B)_{\rho}=0$ for a state $\rho_{ABC}$ if and only if there exists a universal recovery map $\mathcal{R}_{BE\to BCE}$ such that $\rho_{ABCE}=\mathcal{R}_{BE\to BCE}(\rho_{ABE})$ and $\Tr_{E}(\rho_{ABCE})=\rho_{ABC}$. In this case, $\mathbf{D}_{\mathrm{sqnm}}(A;C|B)_{\rho}=0$ if and only if $N_{\mathrm{sq}}(A;C|B)_{\rho}=0$. For an instance, $-\log_2F_{\mathrm{sqnm}}(A;C|B)_{\rho}=0$ if and only if $N_{\mathrm{sq}}(A;C|B)_{\rho}=0$, as $-\log_2 F(\cdot,\cdot)$ is a faithful generalized divergence.
\end{remark}

We note that the definitions for the quantifiers of squashed quantum non-Markovianity (Definitions~\ref{defi:QNM} and~\ref{def:genQNM}) also apply to continuous-variable quantum systems as long as the associated measures (appearing on the right-hand sides of Eqs.~\eqref{eq:QNMMeas} and \eqref{eq:genQNM}) are well-defined. For instance, the fidelity and the trace distance are well-defined and always finite for both finite and infinite-dimensional quantum systems. On the other hand, QCMI is always finite for finite-dimensional systems and is defined also for infinite-dimensional systems with some subtleties involved (cf.~\cite{Shirokov2017}). For the sake of simplicity, we have limited our discussions in the other parts of this paper to finite-dimensional systems.

\section{Quantum non-Markovianity and state transformations \label{sec:StateTrans}}
We now discuss the state transformations under the free operations and find out the conditions in terms of the squashed quantum non-Markovianity. As discussed earlier, all states with vanishing sQNM form a convex set. All these states are free states. On the other hand, the operations mentioned in Lemma~\ref{prop:FreeOps} are free operations as these operations neither create nor increase sQNM.

As expected, all free states can be created using free operations. As we see below, free operations such as $\mathbb{LOCC}_{\mathbf{A} \leftrightarrow \mathbf{C}}$, $\mathbb{QC}_{\mathbf{A} \to \mathbf{B}}$, $\mathbb{QC}_{\mathbf{C} \to \mathbf{B}}$, and partial tracing on Bob's system are sufficient for this. For instance, consider any free state $\sigma_{ABC}$ with $N_{\mathrm{sq}}(A;C|B)_\sigma =0$. Further, consider the extension $\sigma_{ABCE}$ for which $I(A;C|BE)_\sigma=0$, and the state assumes the structure $\sigma_{ABCE} = \oplus_i p_i \ \sigma_{A{(BE)}^L_i} \otimes \sigma_{{(BE)}^R_iC}$, up to a local isometry by Bob. This is a separable state with the partition $A(BE)^L|(BE)^RC$. Say, initially, Alice and Charlie perform $\mathbb{LOCC}_{\mathbf{A} \leftrightarrow \mathbf{B}}$ between them and create a state $\oplus_i p_i \ \sigma_{A{(BE)}^L_i} \otimes \sigma_{{(BE)}^R_iC}$, where Alice and Charlie possess the systems $A(BE)^L$ and $(BE)^RC$ respectively. After that, both Alice and Charlie transfer the states corresponding to $(BE)^L$ and $(BE)^R$ to Bob via quantum communications $\mathbb{QC}_{\mathbf{A} \to \mathbf{B}}$ and $\mathbb{QC}_{\mathbf{C} \to \mathbf{B}}$ respectively followed by tracing out of $E^LE^R$ by Bob -- resulting in the desired state $\sigma_{ABC}$. This implies that the preparation of free states does not require any resources.   

A resource state, i.e., a state with genuine QNM, cannot be created by the action of free operations on the free states. However, a resource state can be generated by the action of resource (non-free) operations on the free states. For a tripartite system $ABC$, the maximal sQNM state is a maximally entangled state between Alice and Charlie $\phi_{AC}^+=\frac{1}{d}\sum_{n,n'}\ket{nn}\bra{n'n'}$ of Schmidt-rank $d=\min\{|A|,|C|\}$ and $N_{\mathrm{sq}}(A;C|B)_{\phi^+}=\log_2 d$. With the help of states with maximal sQNM and free operations, we provide examples of preparing arbitrary state $\rho_{ABC}$ shared between Alice, Bob, and Charlie, with $d=\min\{\abs{A}, \abs{C}\}$. 
Before the preparation starts, Alice and Charlie share a maximally entangled state $\phi^+_{A^\prime C^\prime}$. Without loss of generality, we assume $ d = \abs{C}$. The protocol for the preparation involves the following steps: (a) Alice locally prepares the state $\rho_{ABC}$; (b) Alice sends the local state of $B$ to Bob using noise-less quantum communication $\mathbb{QC}_{\mathbf{A} \to \mathbf{B}}$; (c) finally, Alice transfers the state of $C$ to Charlie via quantum teleportation exploiting the shared entangled state $\phi^+_{A^\prime C^\prime}$ and $\mathbb{LOCC}_{A \rightarrow C}$.

Thus, an arbitrary tripartite quantum state can be created using free operations upon consuming a state with maximal sQNM. However, arbitrary tripartite operations cannot generally be simulated by exploiting the states with maximal sQNM and free operations. We discuss one such example later in Section~\ref{sec:QComCost}.

\subsection{Transformation between arbitrary states}
A significant problem in any quantum resource theory is to investigate the possibility of transforming one resource state into another using free operations defined by the theory. Here, we explore the criteria for conversion between quantum states in the context of the resource theory of quantum non-Markovianity. Particularly,  we are exploring asymptotic convertibility, i.e., a scenario where we have large copies ($n$) of the initial state $(\rho_1)_{ABC}$, and we want to convert to $m_n$ copies of the final state $(\rho_2)_{ABC}$ using a free operation $\Lambda_{\rm{free}}$, such that
\begin{align}
    \frac{1}{2}\norm{\Lambda_{\rm{free}} (\rho_1^{\otimes n}) - \rho_2^{\otimes m_n}}_1 \le \varepsilon.
\end{align}
In problems of resource interconversion, a quintessential quantity of interest is the rate of transformation, $R{=}\lim_{\varepsilon \to 0} \lim_{n\to \infty}(m_n/n)$. We present two important upper bounds of the rate $R$: (1) the \emph{converse bound}, $R_{C}$, which gives an upper bound allowed by the resource theory, and (2) the \emph{achievable bound}, $R_{A}$, that we could achieve using a specific quantum protocol. Our protocol is generally not optimal, and the inequality $R_A{\le}R_{C}$ holds. We will discuss the question of optimality later shortly. In the following, we present both bounds and defer the proof to Appendix~\ref{App:State_transformation}. Let us first present the converse bound.

\begin{theorem}[Converse bound]
The maximum rate $R$, at which a state $(\rho_1)_{ABC}$ can be asymptotically converted to the state $(\rho_2)_{ABC}$, respects the upper bound
\begin{align}
R \le R_{C} = \frac{N_{\mathrm{sq}}(A;C|B)_{\rho_1}}{N_{\mathrm{sq}}(A;C|B)_{\rho_2}},  \end{align}
where $N_{\mathrm{sq}}(A;C|B)_{\rho}$ is the squashed quantum non-Markovianity of the state $\rho_{ABC}$. \label{Lem:interconversion:necessary}
\end{theorem}
See Appendix~\ref{subsec:converse} for proof of the converse bound. Next, we present the achievable bound.
\begin{theorem}[Achievable bound]
There exists a free operation that allows us to asymptotically convert state $(\rho_1)_{ABC}$ to $(\rho_2)_{ABC}$ at a rate 
\begin{align}
R \le R_{A}=\frac{\max\Big\{I(A\rangle C)_{\rho_1},I(C\rangle A)_{\rho_1}\Big\}}{\min\Big\{S(A)_{\rho_2},S(C)_{\rho_2}\Big\}}.       \end{align}
where $I(A_1\rangle A_2)_{\rho}:=S(A_2)_{\rho}-S(A_1A_2)_{\rho}$ is the coherent information of the state between the subsystems $A_1$ and $A_2$ of $\rho$. 
\label{Lem:interconversion:sufficient}
\end{theorem}
See Appendix~\ref{subsec:achievable} for detailed proof of the achievable bound and a detailed protocol description. 

Here, we briefly describe the protocol. Bob discards his system, and Alice and Charlie perform $\mathbb{LOCC}_{\mathbf{A}\leftrightarrow \mathbf{C}}$ to generate shared Bell states at a rate $\max\Big\{I(A\rangle C)_{\rho_1},I(C\rangle A)_{\rho_1}\Big\}$. Subsequently, Alice/ Charlie locally prepares the target state $\rho_2$, sends Bob's system using $\mathbb{QC}_{\mathbf{A}\to \mathbf{B}}$ and $\mathbb{QC}_{\mathbf{C}\to \mathbf{B}}$, and teleports (or performs state splitting of) Charlie's/ Alice's systems using the previously generated Bell states.  
As a sanity check,  we can ensure $R_{A}{\le}R_{C}$ exploiting the following inequality:
\begin{corollary}
For any state $\rho_{ABC}$,
\begin{align}
  \min \Big \{S(A)_{\rho},S(C)_{\rho}\Big\} & \ge  {N_{\mathrm{sq}}(A;C|B)_{\rho}}\\
   & \ge  \max\Big\{ I(A\rangle C)_{\rho},I(C\rangle A)_{\rho} \Big\} \nonumber.
\end{align}    
\label{Cor:nec_vs_suf}
\end{corollary}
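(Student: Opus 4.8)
The statement bundles two inequalities: the upper bound $\min\{S(A)_\rho,S(C)_\rho\}\ge N_{sq}(A;C|B)_\rho$ and the lower bound $N_{sq}(A;C|B)_\rho\ge\max\{I(A\rangle C)_\rho,I(C\rangle A)_\rho\}$. Together they are exactly what certifies $R_A\le R_C$: applying the upper bound to $\rho_2$ controls the denominator of $R_A$ from below, while applying the lower bound to $\rho_1$ controls its numerator from above. The plan is to treat the two bounds separately, since they rest on opposite manipulations of the infimum defining $N_{sq}$ in Definition~\ref{defi:QNM}.

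For the upper bound I would simply evaluate that infimum on the trivial extension (take $E$ one-dimensional), which gives $N_{sq}(A;C|B)_\rho\le\frac12 I(A;C|B)_\rho$. It then suffices to show $I(A;C|B)_\rho\le 2\min\{S(A)_\rho,S(C)_\rho\}$. Writing $I(A;C|B)=S(A|B)-S(A|BC)$ and using $S(A|B)\le S(A)$ (subadditivity) together with $S(A|BC)\ge -S(A)$ (the Araki--Lieb triangle inequality) yields $I(A;C|B)\le 2S(A)$; the symmetric computation in $C$ produces the minimum. This part is routine.

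The lower bound is the real content. Since only $\rho_{AC}$ enters the right-hand side, I want to establish $I(A;C|BE)_\rho\ge 2\,I(C\rangle A)_\rho$ for \emph{every} extension $\rho_{ABCE}$ and then pass to the infimum. The key step is to purify the extension to a pure state $\psi_{ABCEG}$ and use complementarity of entropies, which recasts the conditional mutual information as
\[ I(A;C|BE)_\rho = S(C|BE)_\psi + S(C|G)_\psi = -\,S(C|ABE)_\psi - S(C|AG)_\psi . \]
Strong subadditivity (monotonicity of conditional entropy under discarding systems) then gives $S(C|ABE)_\psi\le S(C|A)_\rho$ and $S(C|AG)_\psi\le S(C|A)_\rho$, whence $I(A;C|BE)_\rho\ge -2\,S(C|A)_\rho = 2\bigl(S(A)_\rho-S(AC)_\rho\bigr)=2\,I(C\rangle A)_\rho$. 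Interchanging the roles of $A$ and $C$ (using symmetry of the QCMI) gives $I(A;C|BE)_\rho\ge 2\,I(A\rangle C)_\rho$ as well, and taking the infimum and dividing by two delivers $N_{sq}(A;C|B)_\rho\ge\max\{I(A\rangle C)_\rho,I(C\rangle A)_\rho\}$.

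As a cross-check one can bypass re-deriving the purification identity by invoking the already-established Lemma~\ref{lem:sQNMsEnt}, namely $N_{sq}(A;C|B)_\rho\ge E_{sq}(A;C)_\rho$, and then bounding squashed entanglement below by coherent information through the chain $E_{sq}\ge E_D\ge\max\{I(A\rangle C),I(C\rangle A)\}$ (squashed entanglement dominates distillable entanglement, which in turn dominates the hashing rate). I expect the main obstacle to be pinning down the purification identity and the direction of the strong-subadditivity inequalities precisely; once $I(A;C|BE)=-S(C|ABE)-S(C|AG)$ is in hand, the conclusion is immediate because discarding systems can only lower each conditional entropy toward the common bound $S(C|A)$.
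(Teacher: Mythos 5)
Your proof is correct and is essentially the paper's own argument: for the lower bound (the real content) both you and the paper purify the extension, rewrite $I(A;C|BE)$ as a sum of two conditional entropies via the pure-state duality $S(X|Y)=-S(X|Y^{c})$, and apply conditioning-reduces-entropy (SSA) twice to reach the coherent information. The only cosmetic differences are in the upper bound, where you use the trivial extension plus Araki--Lieb while the paper bounds the two purified terms of the optimal extension by $S(A)$ directly, and in your bounding of \emph{every} extension rather than an assumed-optimal one, which spares the (finite-dimension) attainment assumption the paper makes.
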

\noindent We prove the above inequalities in Appendix~\ref{subsec:Cor:nec_vs_suf}. 

Next, we discuss the optimality of our achievability protocol with a few examples:

\noindent \emph{Example 1:} If $(\rho_1)_{ABC}$ and $(\rho_2)_{ABC}$  are pure bipartite states in $AC$, i.e., of the form $(\rho_i)_{ABC}{=}\ketbra{\psi_i}{\psi_i}_{AC}{\otimes}(\gamma_i)_B$, we have $R_A{=}R_C$ and the achievability protocol reduces to interconversion between bipartite entangled states proposed by Nielsen~\cite{Nielsen_quantum_transformation} and a local state preparation by Bob. 
However, we generally have a persisting gap between the achievable and converse rates. The gap raises the question of whether a better protocol can yield a higher achievable rate. 

\noindent \emph{Example 2:} We answer this positively for interconversion between a specific class of states $\proj{\psi_1}{\to}\proj{\psi_2}$ where
\begin{align}
\ket{\psi_i}_{ABC}
    &=V_{1,\ A_1A_2 \to A}^{(i)}{\otimes}V_{2,\ B_1B_2 \to B}^{(i)}{\otimes}V_{3,\ C_1C_2 \to C}^{(i)} \nonumber \\
    &\qquad \left(\ket{\alpha_i}_{A_1C_1}\otimes\ket{\beta_i}_{A_2B_1}\otimes\ket{\gamma_i}_{C_2B_2}\right).
\end{align}
For these states, we have $N_{\mathrm{sq}}(A;C|B)_{\psi_i}{=}S(A_1)_{\alpha_i}$, i.e., all the resources for quantum non-Markovianity are concentrated in the state $\proj{\alpha_i}$. Hence to convert $\proj{\psi_1}\to\proj{\psi_2}$, Alice and Charlie can first uncouple $\proj{\alpha_1}_{A_1C_1}$ by applying local isometries $V_1^{(1)}{^{\dagger}}$ and $V_3^{(1)}{^{\dagger}}$ respectively. The rest of the state is a free state and can be discarded. Alice and Charlie now share the pure bipartite state $\proj{\alpha_1}_{A_1C_1}$, which can be asymptotically converted to $\proj{\alpha_2}_{A_1C_1}$ at a rate 
\begin{align}
R=\frac{S(A_1)_{\alpha_1}}{S(A_1)_{\alpha_2}}=\frac{N_{\mathrm{sq}}(A;C|B)_{\psi_1}}{N_{\mathrm{sq}}(A;C|B)_{\psi_2}}.   
\end{align}
Subsequently, Alice and Charlie and Bob prepare free states $\proj{\beta_2}_{A_2B_1}$ and $\proj{\gamma_2}_{C_2B_2}$ and apply the local isometries $V_1^{(2)}, V_2^{(2)}$, and $V_3^{(2)}$ to get the state $\proj{\psi_2}^{ABC}$. Note this approach of uncoupling $\proj{\alpha_1}$ first and then converting to $\proj{\alpha_2}$ is better than discarding Bob's system in  $\proj{\psi_1}_{ABC}$ and performing an entanglement distillation on Alice and Charlie's subsystems as suggested in our achievability protocol. In this case, we have a strict inequality: $N_{\mathrm{sq}}(A;C|B)_{\psi_i}>\max\{I(A\rangle C)_{\psi_i}, I(C\rangle A)_{\psi_i}\}$. This indicates, for this specific example, we can find a protocol to achieve the rate of transformation $R_{\psi_1\to \psi_2}$ satisfying $R_A{<}R_{\psi_1\to \psi_2}{=}R_C$. 

However, in general, for a tripartite state, uncoupling a bipartite state between Alice and Charlie containing the entire resource of quantum non-Markovianity of the original tripartite state is not possible using free operations, as it requires either $\mathbb{LOCC}_{\mathbf{B}\to \mathbf{C}}$ or $\mathbb{LOCC}_{\mathbf{B}\to \mathbf{A}}$~\cite{ent_combing1,ent_combing2},  which are not free operations for us. We leave the topic of how to narrow the difference between the achievable and converse bounds for the interconversion between generic tripartite states an open question.

\subsection{Quantum non-Markovianity and communication cost \label{sec:QComCost}} 
For any tripartite state of the parties Alice, Bob, and Charlie with Bob as the condition, the maximally non-Markov state is any bipartite maximally entangled state shared between Alice and Charlie. For instance, $\phi^+_{AC}$ is a maximally entangled state of Schmidt rank $d$. We assume that Bob does not possess any state. One may create state $\phi^+_{AC}$ in the following steps. First, Alice and Bob share a maximally entangled state $\phi^+_{AB}$. Note this is allowed as it does not increase the sQNM. In the second step, Bob transfers his share of state to Charlie via a perfect ($\log_2 d$)-qubit of quantum communication to Charlie. As a result, Alice and Charlie share a maximally entangled state. Note that performing quantum communication from Bob to Charlie is not free and requires a quantum communication cost. We refer to an operation involving a perfect quantum communication of ($\log_2 d$)-qubit from Bob to Charlie (or Alice) as a maximally resourceful operation that can potentially create a maximally resourceful state. We highlight here that while a maximally resourceful operation can create a maximally resourceful state by consuming a free state, the converse, i.e., implementing a maximally resourceful operation enabling perfect quantum communication from Bob to Charlie (or Alice) consuming a maximally resourceful state ($\phi^+_{AC}$) and free states, is not possible. This is contrary to what we see in the resource theory of entanglement, at least for the bipartite cases. A maximally entangling operation can yield a maximally entangled state while applied to a non-entangled state. In reverse, using a maximally entangled state and local operation and classical communications (LOCC), a free operation in the resource theory of entanglement, one can implement arbitrary bipartite operations, including maximally entangling operations. Thus, not satisfying the inter-convertibility between maximally resourceful states and operations within the framework developed for sQNM may indicate two possibilities. First, the set of operations proposed in Lemma~\ref{prop:FreeOps} as the free operations is not a complete set of operations that do not increase sQNM. Second, there is fundamental irreversibility in the inter-conversion between sQNM resources present in the states and operations. We conjecture the second possibility. With this, we now study the quantum communication cost required to implement an operation resulting in the desired state transformation.             
 
Consider Alice, Bob, and Charlie want to prepare a quantum state $\rho_{ABC}$ using free operations and quantum communication. 
Initially, Alice locally prepares the state $\rho_{AA'A''}$ in her lab, where $A'\simeq B$ and $A''\simeq C$.  She then distributes systems $A'$ and $A''$ to Bob and Charlie, respectively. Note, the latter can be done in two steps: (i) Alice sends $A'A''$ to Bob via noiseless quantum channel, i.e., the identity channel, $\id_{A'A'' \to BB'}$, where $A''{\simeq} B'{\simeq} C$; then, (ii) Bob sends $B'$ to Charlie via $\id_{B' \to C}$. 



Both steps exploit the protocol of quantum state redistribution. Step (i) can be implemented using a free operation and does not consume any resources. But, step (ii) cannot be executed using free operations, and there is a cost of quantum communication to realize it. 

\begin{lemma}
The minimum quantum communication cost $Q_c$ to prepare an arbitrary quantum state $\rho_{ABC}$ via quantum state redistribution protocol is 
\begin{align}
    Q_c \geq N_{\mathrm{sq}}(A;C|B)_\rho.
\end{align}
Here, Alice and Bob initially have access to systems $A$ and $BB'$, respectively with $B'\simeq C$. Bob uses a quantum communication cost of $Q_c$ to transfer $B'$ to Charlie $C$ via the identity channel $\id_{B'\to C}$, $B'\simeq C$. This results in Alice, Bob, and Charlie having access to systems $A$, $B$, and $C$, respectively.
\end{lemma}
\begin{proof}
The proof relies on the quantum state redistribution protocol~\cite{Yard09}. In particular, here, for an initial state $\rho_{ABB'}$, we quantify the cost of quantum communication for the distribution of $B'\simeq C$ from Bob to Charlie, i.e., $\id_{B'\to C}$, where $A$ shared with Alice remains undisturbed. For that, we consider the purified state $\ket{\psi}_{ABCR}$ of $\rho_{ABC}$, where $R$ represents a reference system required for the extension. The purification is not unique. There may be infinitely many such purifications interrelated through unitary rotations on the reference $R$. The state redistribution using this purified state must not alter the systems $A$ and $R$. 

In general, implementation of $\id_{B'\to C}$ for a state $\rho_{ABB'}$ requires a quantum communication cost of $Q_c= I(AR{;}C)_{\psi}/2$, where $R$ is the purifying extension with the corresponding pure state $\op{\psi}$. As $B'\simeq C$, in the following entropic inequalities, we will use them interchangeably for the notational convenience. 

\begin{align}
  Q_c=\frac{1}{2}I(AR{;}C)_{\psi} & \ge \frac{1}{2}I(A{;}C)_{\rho} \nonumber \\ 
& \ge \frac{1}{2} \inf_{\rho_{ABCE}}I(A{;}C|{BE})_{\rho} \nonumber\\ 
& = N_{\mathrm{sq}}(A;C|B)_{\rho},    
\end{align}
the first inequality is due to data-processing inequality, and the second inequality is due to the definition of sQNM.
Note that the second inequality saturates when $\rho_{ABC}$ is a pure state.

Above, we have considered quantum communication from Bob to Charlie. Can there be less quantum communication cost if we consider quantum channels from Alice to Charlie instead? We answer this negatively. 

Consider for an initial state $\rho_{AA'B}$, with $A' \simeq C$, the state redistribution protocol to achieve $\id_{A'\to C}$ without loss of generality. This protocol requires pre-shared entanglement cost of $I(A;C)_{\rho}/2$ and quantum communication cost of $I(C;BR)_\rho/2$ where $R$ is the purifying extension of $\rho_{ABC}$~\cite{Yard09}. As preparing shared entanglement between Alice and Charlie is not a free operation unlike the case between Bob and Charlie, we have to invest $I(A;C)_{\rho}/2$ amount of quantum communication cost as well, i.e., the total quantum communication cost is $I(A;C)_{\rho}/2{+}I(C;BR)_{\rho}/2{=}S(C)_{\rho}$. In general, $S(C)_{\rho}{\ge}N_{\mathrm{sq}}(A;C|B)_{\rho}$ (Corollary~\ref{Cor:nec_vs_suf}), with the equality is achieved when $\rho_{ABC}$ is a pure bipartite state in $AC$. In contrast, the minimum quantum communication cost for the previously discussed protocol implementing $\id_{B'\to C}$, for a state $\rho_{ABB'}$  when $\rho_{ABC}$ is pure tripartite state, in which case $Q_c{=}I(A;C)_{\rho}/2{\le}S(C)_{\rho}$. This means quantum communication from Bob to Charlie requires lesser cost. 
\end{proof} 

Above, we have considered preparing an arbitrary quantum state from scratch. Now, we delve into the transformation between quantum states and associated quantum communication costs. 

\begin{lemma}
Consider a state transformation, from $\rho_{ABC}$ to $\sigma_{A'B'C'}$, with $N_{\mathrm{sq}}(A';C'|B')_\sigma {\ge} N_{\mathrm{sq}}(A;C|B)_\rho$ via reversible local operations and perfect quantum communication. Then, the quantum communication cost $Q_c$ satisfies  
\begin{align}
Q_c \geq N_{\mathrm{sq}}(A';C'|B')_{\sigma} - N_{\mathrm{sq}}(A;C|B)_{\rho}.     
\end{align}
\end{lemma}
\begin{proof}
We first recall that the sQNM remains invariant under reversible local transformation. Let us start with an extension $\rho_{ABCE}$ of $\rho_{ABC}$ so that $2 \ N_{\mathrm{sq}}(A;C|B)_{\rho}=I(A;C|BE)_{\rho}$. Let us say the desired transformation can be executed in the steps involving: (i) application of an isometry $V_B: B \to B'B_1B_2$ leading to $\rho_{ABCE} \to \rho_{AB'B_1B_2CE}$; (ii) perfect quantum communication of $B_1$ and $B_2$ from Bob to Alice and Charlie, via channel $\id_{B_1\to A_1}$ and $\id_{B_2\to C_1}$  respectively; and (iii) application of isometries $V_A: AA_1 \to A'$ and $V_C: CC_1 \to C'$ resulting in $\rho_{AA_1B'CC_1E} \to \sigma_{A'B'C'E}$. Following the chain rule of QCMI,
\begin{align}
  I(A;C|BE)_{\rho} &= I(A;C|B'B_1B_2E)_{\rho}\nonumber \\
                         &= I(AA_1;CC_1|B'E)_{\rho} - 2 Q_c,
\end{align}
where $Q_c= (I(A_1;CC_1|B'E)_{\rho} + I(A;C_1|B'E)_{\rho})/2$ is the quantum communication cost for the state redistribution $A|B'B_1B_2E|C \to AA_1|B'B_2E|C \to AA_1|B'E|CC_1$.
The notation A|B|C signifies a partitioned arrangement where the first segment belongs to Alice's lab, the second to Bob, and the third to Charlie. Now with $I(AA_1;CC_1|B'E)_{\rho} = I(A';C'|B'E)_{\sigma} \geq 2 \ N_{\mathrm{sq}}(A';C'|B')_{\sigma}$, we have $Q_c \geq N_{\mathrm{sq}}(A';C'|B')_{\sigma} - N_{\mathrm{sq}}(A;C|B)_{\rho}$. Note, for pure states $\rho_{ABC}$ and $\sigma_{A'B'C'}$, the equality is attained.
\end{proof}

We now consider a transformation to more general cases. Consider an initial state $\rho_{ABC}$ is transformed to $\sigma_{ABC}$ by utilizing free operations and quantum communication from Bob to Alice and Charlie. Then, the cost of quantum communication is given by the following theorem.
\begin{lemma} \label{Thm:necessary_Q_cost}
If a state $\rho_{ABC}$ is transformed to a final state $\sigma_{ABC}$, with $N_{\mathrm{sq}}(A;C|B)_\sigma {\ge} N_{\mathrm{sq}}(A;C|B)_\rho$,  by exploiting free operations and quantum communications from Bob to Alice and Charlie, then the necessary amount of quantum communication $Q_c$ required for the transformation satisfies
\begin{align}
Q_c \geq N_{\mathrm{sq}}(A;C|B)_\sigma - N_{\mathrm{sq}}(A;C|B)_\rho.      
\end{align}

\end{lemma}

\begin{proof}    
Note we can restrict quantum communication from Bob to Charlie without loss of generality. Now, from the converse bound in Lemma~\ref{Lem:interconversion:necessary}, it is clear that when we want to convert $\rho^{\otimes n} {\to} \sigma^{\otimes n}$, the necessary condition is $N_{\mathrm{sq}}(A;C|B)_{\rho}{\ge}N_{\mathrm{sq}}(A;C|B)_{\sigma}$.  The scenario where the condition is not met, i.e., $N_{\mathrm{sq}}(A;C|B)_{\rho}{<}N_{\mathrm{sq}}(A;C|B)_{\sigma}$, Bob needs to perform quantum communication $\mathbb{QC}_{\mathbf{B}{\to} \mathbf{C}}$ to increase the sQNM and make the transformation possible. The optimal one qubit quantum communication $\mathbb{QC}_{\mathbf{B}{\to} \mathbf{C}}$ increases the overall sQNM by at most one qubit. The protocol is similar to the examples in Section~\ref{sec:QComCost}. Consider Alice and Bob initially share a Bell state, which is a free state. Now Bob transfers his share to Charlie via a one-qubit channel. This introduces a maximal resourceful state between Alice and Charlie and increases the overall sQNM by one qubit. Let us denote this operation by $\Phi_1$, where the subscript denotes the number of quantum communications and 
\begin{align}
\Phi_1(\rho_{ABC})=\proj{\phi^{+}}_{A'C'}\otimes \rho_{ABC},
\end{align}
where the state $\proj{\phi^{+}}_{A'C'}$ is a Bell state shared between Alice and Charlie. Similarly, when Bob utilises $k$-qubit quantum channel to increase $k$-bit sQNM by successive application of $\Phi_1$, we denote the overall operation as $\Phi_k$ where 
\begin{align}
 \Phi_k(\rho_{ABC})&=\smash[b]{ \underbrace{\Phi_1{\circ}\Phi_1{\circ}\cdots \circ \Phi_1\,}_\text{$k$ times}}(\rho_{ABC})\nonumber\\ \nonumber\\
 &=\proj{\phi^{+}}_{A'C'}^{\otimes k}\otimes \rho_{ABC}.
\end{align}
Clearly, we have 
\begin{align}
N_{\mathrm{sq}}(A;C|B)_{\Phi_k(\rho^{\otimes n})}=k + n N_{\mathrm{sq}}(A;C|B)_{\rho}. \label{Eq:ebit_equality}
\end{align}
Now consider Bob has access to $nQ_c$ quantum channels to Charlie. Is there a better protocol to increase sQNM than Bob sending all $nQ_c$ ebits simultaneously, i.e., applying $\Phi_{nQ_C}$? To answer this, we consider a general operation $\Omega$ involving total $nQ_C$ quantum communication where each operation $\Phi_1$ is applied after a free operation: 
\begin{align}
&\Omega = \Omega_{(nQ_C)}{\circ}\cdots \circ \Omega_{(2)}\circ \Omega_{(1)},  
\end{align} 
where $\Omega_{(i)}= \Phi_1{\circ}\Lambda_{\rm{free}}^{(i)}$ represents one step of quantum communication. As the operation $\Phi_1$ is being applied on a Bell state independent of $\rho^{\otimes n}$, we can apply the monotonicity of sQNM: 
\begin{align}
    N_{\mathrm{sq}}(A;C|B)_{\Phi_{nQ_c}(\rho^{\otimes n})} \ge N_{\mathrm{sq}}(A;C|B)_{\Omega(\rho^{\otimes n})}.
\end{align}
This suggests $\Phi_{nQ_c}$ is the operation causing the maximum increase in sQNM for a given number of quantum communications. Now invoking the converse bound in Lemma~\ref{Lem:interconversion:necessary}, the necessary condition for $nQ_c$ quantum channels to allow the conversion from $\rho^{\otimes n}$ to $\sigma^{\otimes n}$ is   
\begin{align}
    nQ_c + n N_{\mathrm{sq}}(A;C|B)_{\rho} &= N_{\mathrm{sq}}(A;C|B)_{\Phi_{nQ_c}(\rho^{\otimes n})}\nonumber \\
    &\ge n N_{\mathrm{sq}}(A;C|B)_{\sigma}.
\end{align}
The first equality is due to Eq.~\eqref{Eq:ebit_equality}. Rearranging the above equation and dividing by $n$, we have 
\begin{align}
    Q_c \ge N_{\mathrm{sq}}(A;C|B)_{\sigma} - N_{\mathrm{sq}}(A;C|B)_{\rho}.
\end{align}
This completes the proof.
\end{proof}

So far, we have only provided a bound for the necessary quantum communication. In other words, satisfying the bound in Theorem~\ref{Thm:necessary_Q_cost} does not guarantee the relevant state transformation. Hence, it is interesting to have a bound for quantum communication, ensuring a successful state transformation. Below,  we provide a similar bound motivated from our achievability protocol in Lemma~\ref{Lem:interconversion:sufficient}.

\begin{lemma} \label{Thm:sufficient_q_cost}
If a state $\rho_{ABC}$ is transformed to a final state $\sigma_{ABC}$, with $N_{\mathrm{sq}}(A;C|B)_\sigma {\ge} N_{\mathrm{sq}}(A;C|B)_\rho$, by exploiting free operations and quantum communications from Bob to Alice and Charlie, then it is sufficient to have a quantum communication rate of $Q_c$ satisfying
\begin{align}
Q_c \geq & \min\{S(A)_{\sigma},S(C)_{\sigma}\}\nonumber\\ &\qquad -\max\{I(A\rangle C)_\rho,I(B\rangle C)_\rho,0\}.     \end{align}
\end{lemma}
\begin{proof}
The relation $N_{\mathrm{sq}}(A;C|B)_\sigma {\ge} N_{\mathrm{sq}}(A;C|B)_\rho$, due to Corollary~\ref{Cor:nec_vs_suf},  implies $\min\{S(A)_{\sigma},S(C)_{\sigma}\}{\ge}\max\{I(A\rangle C)_\rho,I(B\rangle C)_\rho\}$.   The theorem is motivated by the achievable bound and the corresponding protocol in Lemma~\ref{Lem:interconversion:sufficient}. Consider the task is to convert $\rho^{\otimes n}$ to $\sigma^{\otimes n}$ for a large $n$, then firstly Alice and Charlie distill ${\approx} n\max\{I(A\rangle C)_\rho,I(B\rangle C)_\rho\}$ Bell pairs from $\rho^{\otimes n}$. This distillation is only successful when the relevant coherent information is non-zero. In the scenario where $\max\{I(A\rangle C)_\rho,I(B\rangle C)_\rho\}\le 0$, the parties prepare the state $\sigma^{\otimes n}$ from scratch. Now, to prepare $\sigma^{\otimes n}$, Alice and Charlie need to establish ${\approx} n\min\{S(A)_{\sigma},S(C)_{\sigma}\}$ Bell-pairs. In the scenario when $\min\{S(A)_{\sigma},S(C)_{\sigma}\}\ge  \max\{I(A\rangle C)_\rho,I(B\rangle C)_\rho\}$, the desired state transformation is not possible with free operation as  Alice and Charlie still need  $n\bigg(\min\{S(A)_{\sigma},S(C)_{\sigma}\}{-}  \max\{I(A\rangle C)_\rho,I(B\rangle C)_\rho, 0\}\bigg)$ amount of extra Bell pairs. However, the quantum communication from Bob to Charlie/ Alice can establish these extra Bell Pairs. Hence, whenever the rate of quantum communication satisfies 
    \begin{align}
        &Q_C\ge \min\{S(A)_{\sigma},S(C)_{\sigma}\} \nonumber \\ &\hspace{2cm}{-}  \max\{I(A\rangle C)_\rho,I(B\rangle C)_\rho,0\},
    \end{align}
    a successful conversion from $\rho^{\otimes n}$ to $\sigma^{\otimes n}$ is guaranteed. 
\end{proof}

To summarise,  we have provided rigorous characterizations of resourceful operations to perform transformation from a less resourceful state to a more resourceful state. Without the loss of generality, two particular resourceful operations are quantum communications from Bob to Charlie and quantum communications from Alice to Charlie.  

Firstly, we have shown that to increase the same amount of sQNM, the former resourceful operation generally consumes fewer quantum channels than the latter. Next, motivated by the converse bound in Lemma~\ref{Lem:interconversion:necessary}, in Theorem~\ref{Thm:necessary_Q_cost}, we have shown that the quantum communication cost must be greater than the extra resource possessed by the final state, given by the difference in the sQNM between the final and the initial state.  Finally, invoking the achievability bound in Lemma~\ref{Lem:interconversion:sufficient}, in Theorem~\ref{Thm:sufficient_q_cost}, we have provided a protocol and a lower bound on quantum communication cost from Bob to Alice/ Charlie that guarantees a  transformation of a less resourceful state into a more resourceful state. 

\section{Operational meaning of squashed QNM \label{sec:QuantTasks}}	
Now we consider two important quantum information theoretic tasks, i.e., conditional quantum one-time pad~\cite{Conditional_otp}, and quantum state deconstruction~\cite{Berta_2018_PRL, Berta_2018}, to provide some operational meanings to sQNM. The operational interpretations are in the same spirit as the operational interpretations for squashed entanglement provided in Refs.~\cite{oppenheim2008paradigm,Conditional_otp, Berta_2018}. 

\subsection{Conditional quantum one-time pad}
We first consider the conditional one-time pad protocol introduced in~\cite{Conditional_otp}. In this protocol, Alice sends a private message to Charlie over an ideal quantum channel accessible to both Charlie and an eavesdropper, Bob.

Suppose the parties share many copies of the state $\rho_{ABC}$, with the subsystems $A$, $B$, and $C$ possessed by Alice, Bob, and Charlie, respectively. In that case, the optimal rate at which Alice can send the message to Charlie keeping it hidden from Bob is given by the conditional quantum mutual information $I(A{;}C|B)_{\rho}$~\cite{Conditional_otp}. The protocol fails when the state $\rho_{ABC}$ is a Markov state corresponding to the Markov chain ($A-B-C$). This is because, for a Markov state, the quantum conditional mutual information vanishes, $I(A{;}C|B)_{\rho}{=}0$. Operationally, such a Markov state allows Bob to access Charlie's subsystem by applying a recovery map $\mathcal{R}_{B{\rightarrow} BC}$ on her share to fully reconstruct the state $\rho_{ABC}$: 
\begin{align}
 {\id_{A}{\otimes}\mathcal{R}_{B{\rightarrow} BC}({\rho}_{AB})=\rho_{ABC}}.  
\end{align}
\noindent Here $\id$ is the identity channel and ${\rho}_{AB}=\Tr\! _B\  \rho_{ABC} $ is the state composed of Alice's system (A),  which is accessible to Bob via the ideal quantum channel, and her system (B). In contrast, Bob cannot construct a perfect recovery map for a non-Markov state to fully reconstruct the state, resulting in a non-zero private communication rate from Alice to Charlie. 

Let us now introduce a variant of the conditional quantum one-time pad protocol where we have allowed the eavesdropper (Bob) to take the help of a powerful eavesdropper, Eve, who could provide Bob with
the system that extends $\rho_{ABC}$. In this scenario, Alice has a bigger challenge: concealing her message from Bob and Eve while successfully transferring it to Charlie.

We now give the following operational interpretation of sQNM. 

\begin{theorem}
Alice wants to share a private message with Charlie in the presence of eavesdropper Bob who
may access a powerful eavesdropper, Eve. Alice is to communicate with Charlie over a noiseless
quantum channel that is also accessible to Bob. If Alice ($A$), Bob ($B$), and Charlie ($C$ ) share
state $\rho_{ABC}$ and Eve ($E$) has access to all the remaining systems, then the (asymptotic)
optimal rate $\widetilde{R}$ at which Alice can communicate with Charlie is
\begin{align}
    \widetilde{R}=2N_{\mathrm{sq}}(A;C|B)_\rho.
\end{align}
\end{theorem}

\begin{proof}
    Consider we have many copies of the state $\rho_{ABCE}$ with Eve in possession of the extension $E$ of the state $\rho_{ABC}{=}\Tr_{E}[\rho_{ABCE}]$. In this scenario, using Ref.~\cite{Conditional_otp}, the private communication rate from Alice to Charlie is $I(A{;}C|BE)_\rho$. Considering Eve to be a smart adversary who chooses the optimal extension to reduce the private communication rate, we conclude that the optimum asymptotic rate of private communication for this variant of the conditional one-time pad is 
\begin{align}
    \widetilde{R}=\inf_{\rho_{ABCE}} I(A;C|BE)_\rho=2N_{\mathrm{sq}}(A;C|B)_{\rho}.
\end{align}
This proves the theorem. 
\end{proof}
To further explain the scenario, we consider the state $\rho_{ABC}$ to be a probabilistic mixture of different Markov states, resulting in a classical non-Markov state. In such a scenario, Bob alone cannot access Alice's secret message, as $I(A{;}C|B)_{\rho}{\neq}0$. However, Bob can always collaborate with another malicious party, Eve, who has the optimum extended subsystem $E$ such that $2N_{\mathrm{sq}}(A;C|B)_{\rho}=\inf_{\rho_{ABCE}}I(A;C|BE)_\rho{=}0$. In other words, Bob and Eve can apply the global recovery map $\mathcal{R}_{BE{\rightarrow} BCE}$ to access Charlie's subsystem. In contrast, if the state $\rho_{ABC}$ is a quantum non-Markov state ($N_{\mathrm{sq}}{\neq}0$), Eve cannot jeopardize the protocol even in the presence of another adversary. This gives an operational interpretation of our measure of the squashed quantum non-Markovianity of a state.

\subsection{State deconstruction}
Let us now consider state deconstruction protocol as introduced in~\cite{Berta_2018_PRL, Berta_2018}. In a deconstruction protocol, the goal is to decouple subsystem $A$ in a density matrix $\rho_{ABC}$ from the $BC$ system. The decoupling is done over asymptotically large copies of the quantum state i.e., $\rho^{{\otimes} n}_{A^nB^nC^n}$ with a large $n$, via the action of a quantum channel $\Lambda$ acting on the quantum systems $A^nB^n$ and an auxiliary state $\alpha_{A'}$ initially uncorrelated with the rest of the systems. The channel $\Lambda$ represents uniform randomisation over unitary operations $\{U_i\}$ with $i{\in}\{1,2,{\cdots}, M_n\}$, i.e., $\Lambda(\cdot)= 1/M_n \sum_{i=1}^{M_n} U_i(\cdot)U_i^\dagger$, 
\begin{align}
    \sigma_{A^nB^nC^nA'}=\left(\id_{C^n}{\otimes}\Lambda_{A^nB^nA'}\right) \left[\rho^{\otimes n}_{A^nB^nC^n}{\otimes} \alpha_{A'}\right].
\end{align}
In the process of deconstruction of system $A$, the marginal state ${BC}$ should remain asymptotically intact, a condition known as \textit{negligible disturbance}:
\begin{align}
    F \left(\Tr_{A^n}(\rho_{ABC}^{\otimes n}),\Tr_{A^nA'}(\sigma_{A^nB^nC^nA'})\right)\geq 1-\varepsilon.
\end{align}
The deconstruction protocol also asymptotically obeys the \emph{local recoverability} condition with respect to system $B$, i.e. there is a recovery map $\mathcal{R}_{B^n\rightarrow A^nB^nA'}$ such that 
\begin{align}
    F\big(\sigma_{A^nB^nC^nA'}, 
    &\mathcal{R}_{B^n\rightarrow A^nB^nA'} 
    \left(\Tr_{A^nA'}(\sigma_{A^nB^nC^nA'})\right)\big) \nonumber \\
    &\geq 1-\varepsilon.
\end{align}

What is the minimum number of unitaries required to achieve a successful state deconstruction? To answer this, the \emph{deconstruction cost} $D(A;C|B)_{\rho}{=}\inf_\Lambda \lim_{\varepsilon \to 0} \lim_{n\rightarrow \infty}\frac{\log_2 M_n}{n}$ have been introduced~\cite{Berta_2018_PRL,Berta_2018}.It has been shown the deconstruction cost is given by QCMI of the state $\rho$ conditioned over $B$: $D(A;C|B)_{\rho}{=}I(A;C|B)_\rho$. Clearly when $\rho$ is a Markov state, $D(A;C|B)_{\rho}{=}I(A;C|B)_\rho{=}0$---only one unitary is enough. 

We now introduce a variant of the state deconstruction protocol where Bob is allowed to have access to an arbitrary extension $\rho_{ABCE}$ of $\rho_{ABC}$. In this case, the random unitary channel for the deconstruction protocol is acting on the systems $A B E$ in the asymptotic limit and $A'$ of an auxiliary state $\alpha_{A'}$. This allows us to give the following operational interpretation of state deconstruction protocol.

\begin{theorem}
    When Bob is allowed to have access to system $E$ from any possible state extension $\rho_{ABCE}$ of the state $\rho_{ABC}$, then the minimum deconstruction cost, $\widetilde{D}(A;C|B)_\rho$, is given by twice the sQNM of the state:
    \begin{align}
     \widetilde{D}(A;C|B)_\rho=\inf_{\rho_{ABCE}}D(A;C|BE)_\rho=2N_{\mathrm{sq}}(A;C|B)_\rho.   
    \end{align}
\end{theorem}

\begin{proof}
    The theorem is the simple consequence of the fact that $D(A;C|BE)_\rho=I(A;C|BE)_{\rho}$. Hence, 
    \begin{align*}
     \widetilde{D}(A;C|B)_\rho&=\inf_{\rho_{ABCE}}D(A;C|BE)_\rho \nonumber \\
     &=\inf_{\rho_{ABCE}}I(A;C|BE)_\rho = 2N_{\mathrm{sq}}(A;C|B)_\rho.
    \end{align*} 
\end{proof}
As a consequence of the above theorem, when $\rho$ is a classically non-Markov state, we can still achieve zero deconstruction cost as there exists an extension $\rho_{ABCE}$ such that $I(A;C|BE)_\rho{=}0$.

\section{Discussion and Conclusion \label{sec:Disc}}
Quantum non-Markovianity, seen in tripartite quantum systems, is a widely discussed quantum correlation.  It finds many applications in quantum information processing. It is thus pertinent to study the quantum origin of non-Markovianity in quantum non-Markovian states, i.e., states with strictly positive quantum conditional mutual information (QCMI). In this regard, we have introduced a resource-theoretic framework to characterize genuine quantum non-Markovianity as a resource. We have shown that quantum non-Markovianity in states can have both classical and quantum contributions, and the leftover non-Markovianity can quantify the genuine quantum non-Markovianity after squashing out all non-quantum contributions. We denote this as the squashed quantum non-Markovianity (sQNM), and it satisfies several desirable information-theoretic properties. With this in hand, we have characterized the free states, i.e., states with no resources or vanishing sQNM. The set of free states forms a convex set, and hence, our resource theory of genuine quantum non-Markovianity is a convex quantum resource theory~\cite{LBT19, RBTL20}. Thus, by probabilistic mixing of the free states, one cannot create a resource state, i.e., a state with genuine quantum non-Markovianity. We have also characterized the free quantum operations, i.e., physical transformations that cannot increase or create sQNM in states. Equipped with free states and free operations, we have studied the conditions for state transformations under free operations and derived various bounds on the rate of transformations. We explored the interrelation between quantum communication and sQNM in state transformations and have demonstrated that the quantum communication cost of creating an arbitrary quantum state is lower bounded by the sQNM of the state. Further, the quantum communication cost for any arbitrary state transformation is lower bounded by the change in sQNM. We have discussed a generalization of sQNM, expressed in terms of generalized divergence and fidelity of recovery, and possible implications of our results in quantum information theoretic tasks, namely conditional quantum one-time pad and quantum state deconstruction. 

Our results, particularly the resource-theoretic framework, are expected to deepen the understanding of quantum non-Markovianity and find wide-range implications in quantum information theory. For instance, the genuine quantum non-Markovianity quantified by sQNM represents a correlation that may be understood as conditional quantum entanglement, i.e., representing entanglement between two parties while conditioning on a third one. It unravels a different kind of `bipartite' entanglement in a tripartite system. Thus, studying the genuine quantum non-Markovianity, which lies between bipartite and tripartite entanglement, may help understand qualitative differences between bipartite and multipartite entanglements. Our results find that quantum communication, a non-free (or resourceful) operation, can create or increase genuine quantum non-Markovianity in states. This substantiates our general intuition that quantum non-Markovianity and quantum communication are interconnected. However, we also note that non-Markovianity as a resource present in a state and the same in an operation are not inter-convertible. For instance, one cannot simulate quantum communication by utilizing a quantum non-Markovian state and free operations, in general. This raises the question of whether there is a fundamental irreversibility in the resource transformations. The answer to this requires further studies, which we leave to future works.

Beyond states, non-Markovianity is also widely explored in the context of operations and dynamics, represented by stochastic processes involving sequences of events. Markovianity, in general, signifies the memoryless property of such stochastic processes where the present event of a system fully determines the evolution to the next event. Non-Markovianity, on the other hand, signifies that the dynamics of a system cannot completely be characterized by ignoring the event(s) it underwent in the past. In the last decades, the non-Markovian properties have been studied extensively in quantum dynamical processes~\cite{Laine10, Luo12, Rivas2014, Bus14, Dariusz14, BD16,BDW16, Das2018,Bhattacharya21, Kuroiwa21, Berk2021, bylicka2013, anand2019}. We anticipate that, as the dynamical properties are often characterized in terms of states, the resource theoretical insights gained in studying non-Markovianity in states will find important implications to acquire improved understanding of non-Markovianity in quantum processes. Furthermore, Markov states, i.e., the states represented by short Markov chains, often appear in the causal context of quantum processes~\cite{Costa2016, Pollock2018, Felix2018, Taranto2019, Milz2021, Milz2021a, Nery2021}. In particular, it is known that the common-cause quantum processes are characterized by Markov states~\cite{Allen17}. With the notion of genuine quantum non-Markovianity in states, our results are expected to introduce a new approach to these studies, particularly exploring the genuine quantum aspects in the causal structures and their possible technological applications.

An executive summary of the main results is the following.
 \begin{itemize}
     \item Squashed quantum non-Markovianity (sQNM) is introduced to measure genuine quantum non-Markovian correlations in tripartite quantum states. It is based on quantum conditional mutual information. For finite-dimensional systems, the sQNM is convex, monogamous, continuous, faithful, and super-additive. For infinite-dimensional systems, measures of genuine quantum non-Markovianity based on generalized divergence and recoverability of states via universal recovery maps are formulated. 

     \item The relation of genuine QNM with entanglement and unextendibility, two of the important correlations in quantum information theory, is explored. In particular, it is shown that the genuine QNM is monogamous. The sQNM of an arbitrary tripartite state is always lower bounded by the squashed entanglement between non-conditioning systems. Further, it is limited by the extendibility of any of the non-conditioning systems.
     
     \item A framework for a convex resource theory is introduced to study genuine quantum non-Markovianity as a resource after characterizing (resource-)free states and operations. The resourceful states have non-zero sQNM and are genuine quantum non-Markovian states. State transformations, in relation to sQNM and quantum communication, are studied within this framework.  
     
     \item Two quantum information-theoretic tasks, quantum one-time pad and state deconstruction, are re-investigated to provide an operational meaning to sQNM.
     
 \end{itemize}

\noindent \textit{Note}.--- In a recent work~\cite{BGG+24}, the squashed quantum non-Markovianity in states has found application in characterizing the open quantum system dynamics in terms of its information revival being either due to a genuine information backflow from the environment or due to a non-causal revival. 
\medskip

\begin{acknowledgments}
We thank Mark M. Wilde, Arul Lakshminarayan, and Prabha Mandyam for comments and suggestions. R.G. thanks the Council of Scientific and Industrial Research (CSIR), Government of India, for financial support through a fellowship (File No. 09/947(0233)/2019-EMR-I). K.G. is supported by the Hong
Kong Research Grant Council (RGC) through the grant
No. 17307520, John Templeton Foundation through
grant 62312, “The Quantum Information Structure
of Spacetime” (qiss.fr). S.D. acknowledges support from the Science and Engineering Research Board, Department of Science and Technology (SERB-DST), Government of India under Grant No. SRG/2023/000217 and the Ministry of Electronics and Information Technology (MeitY), Government of India, under Grant No. 4(3)/2024-ITEA. S.D. also thanks IIIT Hyderabad for the Faculty Seed Grant. M.N.B. gratefully acknowledges financial support from SERB-DST (CRG/2019/002199), Government of India.
\end{acknowledgments}

\appendix 
\section{Preliminaries}
In this section, we briefly describe the standard notations and definitions of some basic concepts used in this paper to discuss and arrive at the main results.

A quantum system $A$ is associated with a separable Hilbert space $\mathcal{H}_A$ and $|A|:=\dim(\mathcal{H}_A)$. The Hilbert space of a composite system $AB$ is given by $\mathcal{H}_A\otimes \mathcal{H}_B$. Wherever there is no confusion, we label the Hilbert space of a system with the same label of the system. A density operator $\rho_A$ represents a state of a quantum system $A$. The density operator $\rho_A$ is positive semi-definite defined on $\mathcal{H}_A$ with a unit trace, i.e., $\Tr[\rho_A]=1$. If $\rho_{AB}$ is a joint state of a composite system $AB$, then its reduced state on system $B$ is $\rho_{B}:=\Tr_{B}[\rho_{AB}]$. An arbitrary pure state $\psi_A$ is a density operator of rank-one, where if $\psi_A = \op{\psi}_A$ then $\ket{\psi}_A\in\mathcal{H}_A$. For finite-dimensional system $A$ or $B$, a maximally entangled state $\phi^+_{AB}$ between $A$ and $B$ is of Schmidt-rank $d=\min\{|A|,|B|\}$ and denoted as $\phi^+_{AB}:= \op{\phi^+}_{AB}$, where
\begin{align}
    \ket{\phi^+}_{AB}:=\frac{1}{\sqrt{d}}\sum_{i=0}^{d-1}\ket{ii}_{AB},
\end{align}
and $\{\ket{i}\}_i$ is an orthonormal basis in the $d$-dimensional Hilbert space.

 A quantum channel $\mathcal{N}_{A\to B}:\mathcal{L}(\mathcal{H}_A)\to \mathcal{L}(\mathcal{H}_B)$ is a completely positive, trace-preserving (CPTP) map that takes input in the system $A$ and outputs in the system $B$. If a quantum channel $\mathcal{N}_{A\to A}:\mathcal{L}(\mathcal{H}_A)\to \mathcal{L}(\mathcal{H}_A)$, then we simply denote it as $\mathcal{N}_{A}$. We denote the identity operator on $\mathcal{H}_A$ as $\mathds{1}_A$. Let $\id_{A\to B}$ denote the identity quantum channel with input in system $A$ and output in system $B$. 
 
 A quantum instrument is relevant in the context of a general evolution which takes an input quantum state $\rho_A$ and gives rise to both quantum and classical outputs, and is defined by a set of complete positive (CP), trace non-increasing maps $\{p_x{\Lambda}^x_{A\to B}\}$, such that $\Tr(p_x{\Lambda}^x_{A\to B}(\rho_{A})){\le}\Tr(\rho_{A})$, and the sum map $\sum_{x}p_x \Lambda_x$ is a quantum channel. Each CP map, $p_X \Lambda^x$, is associated with a classical outcome ${\proj{x}_X}$ resulting in an overall quantum channel $\mathcal{E}_{A{\to}XB}$ transforming the quantum state $\rho_{A}$ to 
\begin{align}
    \mathcal{E}_{A\to XB}(\rho_A) = \sum_x p_x  \proj{x}_X \otimes {\Lambda^x}_{A \to B}(\rho_A).
\end{align}

The fidelity $F(\rho, \sigma)$ between two quantum states $\rho$ and $\sigma$ is $F(\rho,\sigma)\coloneqq \norm{\sqrt{\rho}\sqrt{\sigma}}_1^2=\left(\Tr[\sqrt{\sqrt{\rho}\sigma\sqrt{\rho}}]\right)^2$, where $\norm{M}_1:=\Tr[\abs{M}]=\Tr[\sqrt{M^\dagger M}]$ is the trace-norm of an operator $M$. The fidelity between the two states is always between 0 and 1. $F(\rho,\sigma)=0$ if and only if $\rho$ and $\sigma$ are orthogonal and $F(\rho,\sigma)=0$ if and only if $\rho=\sigma$. The trace-distance between two states $\rho$ and $\sigma$ is given by $\norm{\rho-\sigma}_1/2$. The trace distance between any two states is always between 0 and 1. $\norm{\rho-\sigma}_1/2=1$ if and only if $\rho$ and $\sigma$ are orthogonal and $\norm{\rho-\sigma}_1/2=0$ if and only if $\rho=\sigma$. The following relation between the trace distance and fidelity between arbitrary states $\rho$ and $\sigma$ holds~\cite{FV99}:
\begin{align}
    1-\sqrt{F(\rho,\sigma)}\leq \frac{1}{2}\norm{\rho-\sigma}_1\leq \sqrt{1-F(\rho,\sigma)}.
\end{align}
The von Neumann entropy $S(A)_{\rho}$ for a quantum state $\rho_{A}$ is defined as
\begin{align}
    S(A)_{\rho} := S(\rho_A)= -\Tr(\rho_{A}\log_2(\rho_{A})).
\end{align}
For any pure pure state $\psi_A$, $S(A)_{\psi}=0$. $0\leq S(A)_{\rho}\leq \log_2 |A|$ always holds for arbitrary system $A$. When $|A|<\infty$, the maximum is attained for a maximally mixed state, i.e., $S(A)_{\rho}=\log_2|A|$ if and only if $\rho_A =\mathds{1}_A/|A|$. If a bipartite state $\psi_{AB}$ is pure then $S(A)_{\psi}=S(B)_{\psi}$.

Quantum conditional entropy $S(A|B)_{\rho}$ of an arbitrary bipartite state $\rho_{AB}$ when conditioned on $B$ is given as
\begin{align}
 S(A|B)_{\rho} :=  S(AB)_{\rho} - S(B)_{\rho}.
\end{align}
For a bipartite state $\rho_{AB}$, it quantifies the average uncertainty of quantum state $\rho_A$ when the state $\rho_B$ is known. The negative quantum conditional entropy $S(AB)_{\rho}$, i.e., $S(AB)_{\rho}<0$, implies that the state $\rho_{AB}$ is entangled, but the converse need not be true.

Quantum mutual information $I(A;B)_{\rho}$ quantifies the total correlations between $A,B$ when they are in the joint state $\rho_{AB}$,
\begin{align}
    &I(A;B)_{\rho}\nonumber \\ &\hspace{0.5cm}:= S(A)_{\rho} + S(B)_{\rho} - S(AB)_{\rho} = I(B;A)_{\rho}.
\end{align}
Quantum conditional mutual information $I(A;C|B)_{\rho}$ for a tripartite state $\rho_{ABC}$ quantifies the total correlation between $A,C$ when conditioned on $B$,
\begin{align}
    I(A;C|B)_{\rho} & := S(AB)_{\rho} + S(BC)_{\rho} - S(ABC)_{\rho} - S(B)_{{\rho}} \nonumber\\
    &= I(AB;C)_{\rho}-I(C;B)_{\rho}\nonumber\\& = I(C;A|B)_{\rho}.
\end{align}
The coherent information $I(A\rangle B)_{\rho}$ of a bipartite state $\rho_{AB}$ between subsystems $A$ and $B$ is
\begin{align}
    I(A \rangle B)_{\rho} :=S(B)_{\rho} - S(AB)_{\rho} = - S(A|B)_{\rho}.
\end{align}

\section{Proof of Lemma~\ref{lem:sQNMsEnt} \label{app:Lemma-sQNMsEnt}}
Here, we prove the Lemma~\ref{lem:sQNMsEnt} elaborating on the inter-relation between squashed quantum non-Markovianity ($N_{\mathrm{sq}}$) and squashed entanglement ($E_{\mathrm{sq}}$). The squashed entanglement for a state $\rho_{AC}$ is defined as,
\begin{align}
E_{\mathrm{sq}}(A;C)_\rho=\inf_{\rho_{ABF}} \frac{1}{2}I(A;C|F)_{\rho}\label{eq19},
\end{align}
where the infimum is carried out over all extensions of $\rho_{AB}=\Tr_F [\rho_{ABF}]$. For any state $\rho_{ABC}$ with $\rho_{AC}=\Tr_B[\rho_{ABC}]$, we can easily see that
\begin{align*}
\inf_{\rho_{ABCE}}\frac{1}{2} I(A;C|BE)_\rho = N_{\mathrm{sq}}(A;C|B)_\rho\geq E_{\mathrm{sq}}(A;C)_\rho.   
\end{align*}
Because the infimum taken to find out $N_{\mathrm{sq}}$ utilizes a more restricted set of extended states than that for $E_{\mathrm{sq}}$. From this, it is clear that $E_{\mathrm{sq}}(A;C)_\rho >0$ implies $N_{\mathrm{sq}}(A;C|B)_\rho >0$, but the reverse statement is not necessarily true. 

For any state $\rho_{ABC}$, we can easily see that
\begin{align}
E_{\mathrm{sq}}(AB;C)_\rho \geq N_{\mathrm{sq}}(A;C|B)_\rho,    
\end{align}
by using the definitions of squashed entanglement and sQNM, and $I(AB;C)_\rho \geq I(A;C|B)$, a consequence of the chain rule of QCMI. Similarly, we have $E_{\mathrm{sq}}(A;BC)_\rho \geq N_{\mathrm{sq}}(A;C|B)_\rho$. 

Consider an isometry $V: B \to B_L \otimes B_R$, leading to a transformation $\rho_{ABC} \to \sigma_{AB_LB_RC}$. Note this operation does not alter $E_{\mathrm{sq}}(AB;C)_\rho$, $E_{\mathrm{sq}}(A;BC)_\rho$, and $N_{\mathrm{sq}}(A;C|B)_\rho$ of the initial state $\rho_{ABC}$. Nevertheless, it can be easily checked that
\begin{align}
E_{\mathrm{sq}}(AB_L;B_RC)_\sigma \geq N_{\mathrm{sq}}(A;C|B_LB_R)_\sigma,    
\end{align}
following the definitions of squashed entanglement, sQNM, and applying the chain rule of QCMI. Again, $N_{\mathrm{sq}}(A;C|B_LB_R)_\sigma = N_{\mathrm{sq}}(A;C|B)_\rho$. Now, combining the equations, we have
\begin{align}
\min \left\{ E_{\mathrm{sq}}(AB_L;B_RC)_\sigma, \ E_{\mathrm{sq}}(AB;C)_\rho, \ E_{\mathrm{sq}}(A;BC)_\rho \right\} \nonumber \\
 \geq N_{\mathrm{sq}}(A;C|B)_\rho\geq E_{\mathrm{sq}}(A;C)_\rho.
\end{align}

\section{Proof of Lemma~\ref{lem:Qprop1} \label{app:lem-Qprop1}}

(P1) {\it Convexity}: For any state $\rho_{ABC}=\sum_m p_m \ \rho_{ABC}^m $,
\begin{align}
N_{\mathrm{sq}}(A;C|B)_\rho \leq \sum_m p_m \ N_{\mathrm{sq}}(A;C|B)_{\rho^m}. \label{eq:app-conv}    
\end{align}
To prove the convexity, consider an extension $\rho_{ABCE}=\sum_m p_m \ \rho_{ABCE}^m $ for which $N_{\mathrm{sq}}(A;C|B)_{\rho^m} = I(A;C|BE)_{\rho^m}$ for all $m$. By adding an ancilla $F$, we have a state
\begin{align}
\rho_{ABCEF}=\sum_m p_m \ \proj{m}_F \otimes \rho_{ABCE}^m,     
\end{align}
where $\{ \ket{m}_F \}$ are the orthonormal bases of $F$. The QCMI of the state $\rho_{ABCEF}$ satisfies
\begin{align}
 I(A;C|BEF)_\rho & = \sum_m p_m \ I(A;C|BE)_{\rho^m}, \nonumber \\
 & = \sum_m p_m \ 2 N_{\mathrm{sq}}(A;C|B)_{\rho^m}.   
\end{align}
Now with $I(A;C|BEF)_\rho \geq 2 N_{\mathrm{sq}}(A;C|B)_{\rho}$, we recover the convexity relation~\eqref{eq:app-conv}. As a corollary, the states with $N_{\mathrm{sq}}=0$, i.e., Markov states and states with CNM, form a convex set. \\

(P2) {\it Super-additive:} For every state $\rho_{AA^\prime BB^\prime CC^\prime}$, the sQNM is super-additive
\begin{align}
   &N_{\mathrm{sq}}(AA^\prime;CC^\prime|BB^\prime)_{\rho}\nonumber \\ &\hspace{1cm}\geq N_{\mathrm{sq}}(A;C|B)_{\rho} + N_{\mathrm{sq}}(A^\prime;C^\prime|B^\prime)_{\rho},
\end{align}
and additive on the tensor product, i.e.,
\begin{align}
 &N_{\mathrm{sq}}(AA^\prime;CC^\prime|BB^\prime)_{\rho}\nonumber \\
 &\hspace{1cm}=N_{\mathrm{sq}}(A;C|B)_{\rho} + N_{\mathrm{sq}}(A^\prime;C^\prime|B^\prime)_{\rho}, 
\end{align}
for $\rho_{AA^\prime BB^\prime CC^\prime} = \rho_{A B C} \otimes \rho_{A^\prime B^\prime C^\prime}$. 

We prove the super-additivity property of sQNM, from which the additivity is easily followed. Consider a state $\rho_{AA^\prime BB^\prime CC^\prime}$ and its no-signaling extension $\rho_{AA^\prime BB^\prime CC^\prime E}$ such that $2N_{\mathrm{sq}}(AA^\prime;CC^\prime|BB^\prime)_\rho = I(AA^\prime;CC^\prime|BB^\prime E)_{\rho}$. Now, using the chain rule, monotonicity of QCMI under partial tracing, and definition of $N_{\mathrm{sq}}$, we have
\begin{align}
        &2N_{\mathrm{sq}}(AA^\prime;CC^\prime|BB^\prime)_\rho  
        = I(AA^\prime;CC^\prime|BB^\prime E)_{\rho}\nonumber \\
        &\hspace{0.5cm}= I(A;CC^\prime|BB^\prime E A^\prime)_{\rho} 
        + I(A^\prime;CC^\prime|BB^\prime E)_{\rho}\nonumber  \\
        &\hspace{0.5cm}\geq I(A;C|BB^\prime E A^\prime)_{\rho} 
        + I(A^\prime;C^\prime|BB^\prime E)_{\rho}\nonumber  \\
        &\hspace{0.5cm}\geq 2N_{\mathrm{sq}}(A;C|B)_{\rho} 
        + 2N_{\mathrm{sq}}(A^\prime;C^\prime|B^\prime)_{\rho}.
\end{align}

(P3) {\it Monogamy:} For every state $\rho_{AA^\prime B C}$, the measure of quantum non-Markovianity, the quantum non-Markovianity satisfies a monogamy relation
\begin{align}
 &N_{\mathrm{sq}}(AA^\prime ; C|B)_\rho \nonumber\\
 &\hspace{1cm}\geq N_{\mathrm{sq}}(A ; C|B)_\rho + N_{\mathrm{sq}}(A^\prime ; C|B)_\rho.
\end{align}

To prove the relation, we consider an extension $\rho_{AA^\prime B E C}$  of the state for which $N_{\mathrm{sq}}(AA^\prime ; C|B)_\rho = I(AA^\prime ; C|BE)_\rho$. Now, using the chain rule of QCMI
\begin{align}
 N_{\mathrm{sq}}(AA^\prime ; C|B)_\rho & = I(AA^\prime ; C|BE)_\rho,\nonumber \\
 & = I(A ; C|BE A^\prime)_\rho + I(A^\prime ; C|BE)_\rho, \nonumber\\
 & \geq N_{\mathrm{sq}}(A ; C|B)_\rho + N_{\mathrm{sq}}(A^\prime ; C|B)_\rho.
\end{align}
This monogamy relation can be generalized to $n$-party systems possessed by Alice.

(P4) {\it Continuity}: For any two states $\rho_{ABC}$ and $\sigma_{ABC}$ with $\frac12\norm {\rho_{ABC} - \sigma_{ABC} }_1 {\le} \varepsilon$,
\begin{align}
| \  N_{\mathrm{sq}}(A;C|B)_\rho - N_{\mathrm{sq}}(A;C|B)_\sigma \ | \le \frac{ f(\varepsilon)}{2}, \label{eq:app-Lem:continuity} 
\end{align}
where $f(\varepsilon) \to 0$, for $\varepsilon \to 0$. 

Our proof is similar to the continuity bound of squashed entanglement in Ref.~\cite{Christandl04}. We first use the continuity bound for state purification. For two states, $\rho_{ABC}$ and $\sigma_{ABC}$, with a trace distance $\nicefrac{\norm{\rho{-}\sigma}_1}{2} \leq\varepsilon$, we can find respective purifications, $\ket{\psi_{ABCR}}$ and $\ket{\phi_{ABCR}}$, for which the trace distance has the upper bound 
\begin{align}
	\frac{\norm{ \ \proj{\psi}-\proj{\phi}\  } _1}{2}\leq 2\sqrt{\varepsilon}.
\end{align}
We obtain the above inequality utilizing the Uhlman theorem~\cite{uhlmann_transition_1976, jozsa_fidelity_1994} and Fuchs-Van de Graaf inequality~\cite{fuchs_cryptographic_1999}. Alternatively, one can directly use the continuity of Stinespring dilations~\cite{kretschmann_information-disturbance_2006,kretschmann_continuity_2007}. Note, without loss of generality, we have considered identical purifying systems ($R$) for both $\rho$ and $\sigma$ as we can always consider the smaller purifying system a subspace of the larger one. 

Next, we introduce the continuity bound of arbitrary extensions of $\rho$ and $\sigma$.  We obtain an arbitrary extension of a state by applying a CPTP map $\Lambda_{R\to E}$ acting on the purifying extension $R$ of the state, i.e.,
\begin{align}
	\rho_{ABCE}&= \Big[\id_{ABC}\otimes\Lambda_{R{\to}E}\Big]\left(\proj{\psi}_{ABCR}\right),  \label{55}\\
	\sigma_{ABCE}&=\Big[\id_{ABC}\otimes\Lambda_{R{\to}E}\Big]\left(\proj{\phi}_{ABCR}\right) \label{56}.
 \end{align}
Next, we use the contractivity of trace distance~\cite{P_rez_Garc_a_2006} to obtain
 \begin{align}	
 \frac{\norm{\rho_{ABCE}-\sigma_{ABCE}}_1}{2}\leq2\sqrt{\varepsilon}. \label{Eq:trace_distance_extension}
\end{align}
Having established Eq.~\eqref{Eq:trace_distance_extension}, we can now use the tight uniform continuity bound for quantum conditional mutual information (See Corollary 1 of~\cite{Shirokov2017}, also~\cite{Wilde_2016}): 
\begin{align}
    & |I(A;C|BE)_\rho - I(A;C|BE)_\sigma|  \nonumber \\
    &{\leq} 4\sqrt{\varepsilon}\log_2 \min \{\abs{A},\abs{C}\} {+}2  (1{+}2\sqrt{\varepsilon}) \ h_2\left(\frac{2\sqrt{\varepsilon}}{1{+}2\sqrt{\varepsilon}}\right)\nonumber \\
    &= f(\varepsilon). \label{Eq:QCMI_continuity}
\end{align}
Here, $h_2(.)$ is the binary Shannon entropy. Note this is a tighter bound compared to the one introduced in~\cite{Christandl04} by combining the continuity of von Neumann entropy~\cite{Alicki_2004,fannes1973continuity} and continuity of conditional von Neumann entropy.

Now, we are ready to prove the continuity bound of the measure of quantum non-Markovianity $N_{\mathrm{sq}}$. Without loss of generality, we consider $N_{\mathrm{sq}}(A;C|B)_\rho \geq N_{\mathrm{sq}}(A;C|B)_\sigma$. Let us assume the CPTP map $\widetilde{\Lambda}$ achieves the infimum value of $I(A;C|BE)$ for $\sigma$, i.e., $\inf_{\widetilde{\Lambda}} I(A;C|BE)_\sigma=I(A;C|BE)_{\widetilde{\Lambda}(\proj{\phi})}$ . Then, one can write,
\begin{align}
	&2 \ |N_{\mathrm{sq}}(A;C|B)_\rho - N_{\mathrm{sq}}(A;C|B)_\sigma|\nonumber \\
 &= \inf _{\widetilde{\Lambda}}I(A;C|BE)_{\rho} - I(A;C|BE)_{\widetilde{\Lambda}(\proj{\phi})}\nonumber \\ 
 &\leq I(A;C|BE)_{\widetilde{\Lambda}(\proj{\psi})}-I(A;C|BE)_{\widetilde{\Lambda}(\proj{\phi})}\label{EQ:Lemma1_step3}\nonumber \\
 &\leq f(\varepsilon).
\end{align} 
Here, Eq.~\eqref{EQ:Lemma1_step3} indicates the state $\widetilde{\Lambda}(\proj{\psi})$ may not achieve the infimum value of QCMI, $I(A;C|BE)_\rho$. The final inequality is due to Eq.~\eqref{Eq:QCMI_continuity}. This completes the proof.

(P5) {\it Faithfulness}: It follows from combining Eqs.~\eqref{eq:qcmi-fid}~and~\eqref{eq:QNMMeas} (see also Remark~\ref{rem:gendivfaith}) as the fidelity of recovery is vanishing if and only if the state is perfectly recoverable via a universal recovery map.

\section{Proof of Lemma~\ref{prop:FreeOps} \label{App:Prop3}}

\subsection{Monotonicity under local quantum channel on Bob ($\mathbb{LO}_\mathbf{B}$)}
\label{App:MonoOnB}
Here, we aim to demonstrate the monotonicity under local quantum channel local trace-preserving and completely positive operations (i.e., CPTP maps), also known as the local quantum channel applied to Bob's system. In previous works~\cite{Wakakuwa21, Wakakuwa2017}, the authors established that only reversible operations on Bob's system are allowable operations that do not increase non-Markovianity. However, their measure of non-Markovianity includes both classical and quantum aspects. In contrast, using the squashed quantum non-Markovianity measure $N_{\mathrm{sq}}$, we establish that any CPTP map will not increase quantum non-Markovianity. 

First, let us establish that the application of the partial tracing-out operation on the conditioned system possessed by Bob does not lead to an increase in quantum non-Markovianity. We consider two states, $\rho_{ABC}$ and $\rho_{ABB^\prime C}$, where $\rho_{ABC}=\Tr_{B'}[\rho_{ABB^\prime C}]$. Now, the quantum non-Markovianity is expressed as follows:
\begin{align}
&2N_{\mathrm{sq}}(A;C|B)_{\rho}=\inf_{\rho_{ABCE}} I(A;C|BE)_{\rho},\label{Eq:A1} \\
&2N_{\mathrm{sq}}(A;C|BB')_{\rho} =\inf_{\rho_{{ABB'CF}}} I(A;C|BB'F)_{\rho}, \label{Eq:A2}    
\end{align}
the infimums are carried out on the set of all state extensions $\rho_{ABCE}$ and $\rho_{ABB'CF}$, respectively. A noticeable distinction between Eq.~\eqref{Eq:A1} and Eq.~\eqref{Eq:A2} is that in Eq.~\eqref{Eq:A2}, due to the presence of system $B'$, the conditional mutual information is subjected to a more constrained infimum over $F$ than the one over $E$. This leads us to conclude that
\begin{align}
    N_{\mathrm{sq}}(A;C|BB')_{\rho}\geq N_{\mathrm{sq}}(A;C|B)_{\rho}. \label{eq:MonPtraceB}
\end{align}
Thus, a partial tracing-out operation on the conditioning system does not increase quantum non-Markovianity. 

Next, we prove monotonicity under the local CPTP operation applied to Bob's system $B$. Let us consider a state $\rho_{ABC}$ and its extension $\rho_{ABCE}$ for which $2N_{\mathrm{sq}}(A;C|B)_{\rho}=I(A;C|BE)_{\rho}$. Now we introduce an ancillary system $B'$ in the state $\proj{0}$, and composite state becomes $\rho_{ABCEB'}=\rho_{ABCE}\otimes\proj{0}_{B'}$. Then any CPTP operation $\Lambda$ on Bob, leading to $\sigma_{A\widetilde{B}CE}=\Lambda_{B\to\widetilde{B}} \otimes \id_{ACE}(\rho_{ABC})$, can be simulated by applying a unitary operation on $BB'(U_{BB'\to \widetilde{B}\widetilde{B'}})$ and tracing out $\widetilde{B'}$ following Stinespring dilation, as
\begin{align}
&\sigma_{A\widetilde{B}CE}\nonumber\\ &=\mathrm{Tr}_{\widetilde{B'}}\Big[U_{BB'\to \widetilde{B}\widetilde{B'}}\Big(\rho_{ABCE}\otimes\proj{0}_{B'}\Big)U_{BB'\to \widetilde{B}\widetilde{B'}}^\dagger\Big]\nonumber \\
&=\mathrm{Tr}_{\widetilde{B'}}\Big[\sigma_{A\widetilde{B}\widetilde{B'}CE}\Big].
\end{align}
Noting that QCMI remains unchanged under unitary operations on the conditioning system, we have
\begin{align}
2N_{\mathrm{sq}}(A;C|B)_{\rho}&=I(A;C|BE)_{\rho}\nonumber\\ 
&=I(A;C|\widetilde{B}\widetilde{B'}E)_{\sigma}\nonumber\\ 
& \geq 2N_{\mathrm{sq}}(A;C|\widetilde{B}\widetilde{B'})_{\sigma}\nonumber \\
& \geq 2N_{\mathrm{sq}}(A;C|\widetilde{B})_{\sigma}.
\end{align}
We use the relation~\eqref{eq:MonPtraceB} in the last inequality.

\subsection{Monotonicity under $\mathbb{LOCC}_{\mathbf{A}\leftrightarrow \mathbf{C}}$}
\label{Appendix B}
Here, we prove the monotonicity under $\mathbb{LOCC}_{\mathbf{A}\leftrightarrow \mathbf{C}}$. According to~\cite{Vidal00, Christandl04},  we need to show that $N_{\mathrm{sq}}$ is a strong monotone under local operations on Alice and Charlie, and it is convex. The latter we have demonstrated earlier.

Now, we show strong monotonicity under a local operation by Alice. Let us consider a state $\rho_{ABC}$ and any unilocal quantum instruments $\{ \Lambda^m_A \}$ applied on Alice's system $A$, where each $\Lambda^m_A$ is a completely positive map and their sum is trace-preserving. Then, the strong monotonicity implies
\begin{align}
N_{\mathrm{sq}}(A;C|B)_\rho \geq \sum_m p_m N_{\mathrm{sq}}(A;C|B)_{\sigma^m}, \label{eq:StrongMonot}  
\end{align}
where $p_m=\Tr [\Lambda^m_A(\rho_{ABC})]$ and $\sigma_{ABC}^m=\Lambda^m_A(\rho_{ABC})/{p_m}$. 

To prove the above, (i) we consider the state $\rho_{ABCE}$, serving as a non-signaling extension of the initial state $\rho_{ABC}$, for which $2N_{\mathrm{sq}}(A;C|B)_\rho = I(A;C|BE)_\rho$. (ii) Now we introduce two ancillary systems $R$ and $R'$ and both are initialized in the $\ket{0}_{R/R'}$ state. Thus, the composite becomes $\rho_{ARR'BCE}=\rho_{ABCE}\otimes\proj{0}_{R}\otimes\proj{0}_{R'}$. (iii) Next, we apply a global unitary operation $U_{ARR'}$ on $ARR'$. The rest of the systems will go through an identity channel. As a result, we have $\sigma_{ARR'BCE}=U_{ARR'}(\rho_{ARR'BCE})U_{ARR'}^\dag$.
(iv) Following this, we trace out the ancillary system $R'$. It can be written as, 
\begin{align}
    \sigma_{ARBCE}&=\sum_m \  \proj{m}_{R}\otimes \Lambda^m_A(\rho_{ABCE}) \nonumber \\  
    &= \sum_m \ p_m \ \proj{m}_{R}\otimes \sigma_{ABCE}^m, \label{Eq:QuantumInstru}
\end{align}
where $p_m=\Tr [\Lambda^m_A(\rho_{ABCE})]$ is the probability, $\sigma_{ABCE}^m=\Lambda^m_A(\rho_{ABCE})/{p_m}$ is the normalized state after the instrument $\Lambda^m_A$ applied by Alice, and $\{\ket{m}_R\}$ are the orthonormal bases on $R$. Further, we ensure that $\Tr_E[\sigma_{ABCE}^m]=\sigma_{ABC}^m$ as considered in Eq.~\eqref{eq:StrongMonot}. Our arguments proving strong monotonicity are based on QCMI, as in the following: 
\begin{align} 
2 \ N_{\mathrm{sq}}(A;C|B)_\rho & \overset{(i)}{=} I(A;C|BE)_\rho \nonumber \\ 
 &\overset{(ii)}{=}I(ARR';C|BE)_\rho \nonumber \\ 
&\overset{(iii)}{=}I(ARR';C|BE)_{\sigma}\nonumber \\
 &\overset{(iv)}{\geq}I(AR;C|BE)_{\sigma}\nonumber \\
&\overset{(v)}{=}I(R;C|BE)_{\sigma}+I(A;C|BER)_{\sigma}\nonumber \\
&\overset{(vi)}{\geq}I(A;C|BER)_{\sigma}\nonumber \\
&\overset{(vii)}{=}\sum_m \  p_m \ I(A;C|BE)_{\sigma^m} \nonumber \\
&\overset{(viii)}{\geq} 2\sum_m \  p_m  \ N_{\mathrm{sq}}(A;C|B)_{\sigma^m}. \label{Eq:B4}
\end{align}
The equality (i) is due to the choice of the extension $\rho_{ABCE}$ of the initial state $\rho_{ABC}$ as mentioned earlier. Equalities (ii) and (iii) are because the QCMI does not alter upon attaching (uncorrelated) ancilla $RR'$ with Alice and a unitary $U_{ARR'}$ respectively. The inequality (iv) follows from the data-processing inequality of QCMI under partial tracing of $R'$. The equality (v) and the inequality are due to the chain rule and inequality (vi) due to the non-negativity of QCMI ($I(R;C|BE)_\sigma\geq 0$). For the state~\eqref{Eq:QuantumInstru}, the QCMI satisfies
\begin{align*}
I(A;C|BER)_{\sigma} = \sum_m \  p_m \ I(A;C|BE)_{\sigma^m},    
\end{align*}
as represented by the equality (vii). The inequality (viii) is the result of the fact that $I(A;C|BE)_{\sigma^m} \geq 2 N_{\mathrm{sq}}(A;C|B)_{\sigma^m}$ for all $m$. Thus, we prove strong monotonicity of $N_{\mathrm{sq}}$ under local operation by Alice. Similar steps can show the strong monotonicity of local operations on Charlie.

As a corollary, we also prove the monotonicity of any trace-preserving local operation on Alice (also on Charlie)
\begin{align}
&N_{\mathrm{sq}}(A;C|B)_\rho\nonumber\\
&\hspace{0.5cm}\geq \sum_m p_m N_{\mathrm{sq}}(A;C|B)_{\sigma^m} \overset{(ix)}{\geq} N_{\mathrm{sq}}(A;C|B)_\sigma. \label{eq:WeakMonot}    
\end{align}
The inequality (ix) results from the convexity of the measure $N_{\mathrm{sq}}$.

\subsection{Monotonicity under $\mathbb{LOQC}_{\mathbf{A}\to \mathbf{B}}$ and $\mathbb{LOQC}_{\mathbf{C}\to \mathbf{B}}$}
We first demonstrate the monotonicity of one-way local operation and quantum communication from Alice to Bob ($\mathbb{LOQC}_{\mathbf{A}\to \mathbf{B}}$). Consider (i) a state $\rho_{ABC}$ and its extension $\rho_{ABCE}$ satisfying $2N_{\mathrm{sq}}(A;C|B)_{\rho}=I(A;C|BE)_\rho$. (ii) Now a local operation $\Lambda_A: A \to \widetilde{A}A'$ performed by Alice, leading to $\sigma_{\widetilde{A}A'BCE}=\Lambda_A \otimes \id_{BCE}(\rho_{ABCE})$. After this, (iii) the quantum information $A'$, initially possessed by Alice, is transferred to Bob via the ideal quantum channel, $\id_{A'\to B'}$, i.e., $\widetilde{A}A'|BE|C \to \widetilde{A}|BB'E|C$. Finally, (iv) a local operation $\mathcal{E}_{BB'}: BB' \to \widetilde{B}$ is applied on Bob's systems, as a result we have $\tau_{\widetilde{A}\widetilde{B}CE} = \mathcal{E}_{BB'} \otimes \id_{\widetilde{A}CE} (\sigma_{\widetilde{A}BB'CE})$. Note, these are the generic steps to execute an arbitrary $\mathbb{LOQC}_{\mathbf{A}\to \mathbf{B}}$. To prove the monotonicity, we employ the following arguments based on the properties of QCMI:
\begin{align}
 2 \ N_{\mathrm{sq}}(A;C|B)_{\rho} & \overset{(i)}{=} I(A;C|BE)_\rho \nonumber \\ 
 & \overset{(ii)}{\geq} I(\widetilde{A}A';C|BE)_\sigma \nonumber\\
 & \overset{(iii)}{\geq} I(\widetilde{A};C|BB'E)_\sigma \nonumber  \\
 & \overset{(iv)}{\geq} 2 \ N_{\mathrm{sq}}(\widetilde{A};C|BB')_\sigma \nonumber \\ 
 &\overset{(v)}{\geq} 2 \ N_{\mathrm{sq}}(\widetilde{A};C|\widetilde{B})_\tau. 
\end{align}
The equality (i) is due to the choice of the extension $\rho_{ABCE}$ of the state $\rho_{ABC}$, as mentioned above. The inequality (ii) is due to the data-processing inequality of QCMI. The inequality (iii) can be seen from the chain rule $I(\widetilde{A}A';C|BE)_\sigma = I(A';C|BE)_\sigma + I(\widetilde{A};C|BB'E)_\sigma$, where $B'\simeq A'$ and $I(A';C|BE)_\sigma \geq 0$. The inequalities (iv) and (v) are due to the definition of $N_{\mathrm{sq}}$ and monotonicity of $N_{\mathrm{sq}}$ under local trace-preserving operation (CPTP map) by Bob (see Section~\ref{App:MonoOnB}). The monotonicity under $\mathbb{LOQC}_{\mathbf{C}\to \mathbf{B}}$ can be shown similarly.

\section{State transformations}
\label{App:State_transformation}

\noindent 
\subsection{Proof of the converse bound in Theorem~\ref{Lem:interconversion:necessary}.} \label{subsec:converse}

Here, we present the maximum possible rate $R_C$ at which one quantum state $(\rho_{1})_{ABC}^{\otimes n}$ can be converted to the other $(\rho_{2})_{ABC}^{\otimes m_n}$ using the free operations defined by our resource theory. The proof is similar to the converse of the entanglement manipulation theorem shown in Ref.~\cite{Wilde_2016}. Let us consider there exists a free operation $\Lambda_{\rm{free}}$ such that 
\begin{align}
    \frac{1}{2}\norm{\Lambda_{\rm{free}} (\rho_1^{\otimes n}) - \rho_2^{\otimes m_n} }_1\le \varepsilon.
\end{align}
Using our continuity relation~\eqref{eq:app-Lem:continuity}, we have 
\begin{align}
    &N_{\mathrm{sq}}(A;C|B)_{\Lambda_{\rm{free}}(\rho_1^{\otimes n})}\geq N_{\mathrm{sq}}(A;C|B)_{\rho_2^{\otimes m_n}} - \frac{f(\varepsilon)}{2}\nonumber \\ 
    &= m_n N_{\mathrm{sq}}(A;C|B)_{\rho_2}
    -2m_n\sqrt{\varepsilon}\log_2 \min \{\abs{A},\abs{C}\} \nonumber \\
    &\qquad  {+}  (1{+}2\sqrt{\varepsilon}) \ h_2\left(\frac{2\sqrt{\varepsilon}}{1{+}2\sqrt{\varepsilon}}\right).\label{Eq:continuity2}
\end{align}
The second equality is due to the additivity property of the measure $N_{\mathrm{sq}}$, and we expand $f(\varepsilon)$ in the continuity relation~\eqref{eq:app-Lem:continuity}. Now, we can write down the following sets of equations:
\begin{align}
    &nN_{\mathrm{sq}}(A;C|B)_{\rho_1} = N_{\mathrm{sq}}(A;C|B)_{\rho_1^{\otimes n}}\nonumber\\
    &{\geq}N_{\mathrm{sq}}(A;C|B)_{\Lambda_{\rm{free}}(\rho_1^{\otimes n})} \nonumber \\
    &\geq m_n N_{\mathrm{sq}}(A;C|B)_{\rho_2} 
    -2m_n\sqrt{\varepsilon}\log_2 \min \{\abs{A},\abs{C}\} \nonumber \\&\hspace{0.5cm} {+}  (1{+}2\sqrt{\varepsilon}) \ h_2\left(\frac{2\sqrt{\varepsilon}}{1{+}2\sqrt{\varepsilon}}\right).  
\end{align}
The first equality is due to the additivity of the quantum non-Markovianity measure. The first inequality is due to the monotonicity of the quantum non-Markovianity measure under free operation. The second inequality is due to~\eqref{Eq:continuity2}. Now dividing both sides by ${nN_{\mathrm{sq}}(A;C|B)_{\rho_2}}$ and rearranging the equations, we have 
\begin{align}
   & \frac{m_n}{n} \Bigg(1-\frac{2\sqrt{\varepsilon}\log_2 \min \{\abs{A},\abs{C}\}}{N_{\mathrm{sq}}(A;C|B)_{\rho_2}}\Bigg)\nonumber \\
    & \leq \frac{N_{\mathrm{sq}}(A;C|B)_{\rho_1}}{N_{\mathrm{sq}}(A;C|B)_{\rho_2}} - \frac{  (1{+}2\sqrt{\varepsilon})}{nN_{\mathrm{sq}}(A;C|B)_{\rho_2}} \ h_2\left(\frac{2\sqrt{\varepsilon}}{1{+}2\sqrt{\varepsilon}}\right).
\end{align}
Taking limit at $n \to \infty$ and $\varepsilon \to 0$, we have 
\begin{align}
    R = \lim_{\varepsilon \to 0}\lim_{n \to \infty} \frac{m_n}{n} \leq\frac{N_{\mathrm{sq}}(A;C|B)_{\rho_1}}{N_{\mathrm{sq}}(A;C|B)_{\rho_2}}=R_C.
\end{align}

\subsection{Proof of the achievability bound in Theorem~\ref{Lem:interconversion:sufficient}} \label{subsec:achievable}

In this section, we establish a sufficient condition to convert a state $(\rho_1)_{ABC}$ to $(\rho_2)_{ABC}$ using free operations, i.e., local trace-preserving operations by all parties, $\mathbb{LOCC}_{\mathbf{A}\rightarrow \mathbf{C}}$, $\mathbb{LOCC}_{\mathbf{C}\rightarrow \mathbf{A}}$, and $\mathbb{QC}_{\mathbf{A} \to \mathbf{B}}$ or $\mathbb{QC}_{\mathbf{C} \to \mathbf{B}}$. Note that this protocol fails when $\max\Big\{I(A\rangle C)_{\rho_1},I(C\rangle A)_{\rho_1}\Big\}{<}0$. The overall strategy follows three steps. 
    
Firstly, Alice and Charlie want to distill as many shared ebits as possible via a state-merging protocol. The state distillation is achieved via a state-merging protocol, i.e., Alice wants to merge her system into Charlie's lab via $\mathbb{LOCC}_{\mathbf{A}\rightarrow \mathbf{C}}$. For a large $n$, this state-merging protocol on $(\rho_1)_{ABC}^{\otimes n}$ generates ${\approx} n.I(A\rangle C)_{\rho_1}$ shared ebits between Alice and Charlie (when $I(A\rangle C)_{\rho_1}{\ge}0$). Considering the purification of $\rho_1$ to be $\ket{\psi_1}$ with the environment system R, the corresponding resource-inequality for state-merging follows (see Eq. 58 in~\cite{Devetak_2008_resource_framework})
\begin{align}
        & (\psi_1)_{ABCR} + I(A:BR)_{\psi_1}[c \to c] \nonumber \\ &	\succeq (\psi_1)_{BC'CR} + I(A\rangle C)_{\rho_1} [qq]. \label{Eq:merging1}
\end{align}
Here, $A\simeq C'$, the above resource inequality indicates Alice's system $A$ is now with Charlie which we denote it by $C'$. Note we can write $I(A\rangle C)_{\rho_1}$ and $I(A\rangle C)_{\psi_1}$ interchangeably.
Alternatively, it is also possible for Charlie to merge his system into Alice's lab via $\mathbb{LOCC}_{\mathbf{C}\to \mathbf{A}}$ and establish ${\approx} n.I(C\rangle A)_{\rho_1}$ shared ebits between them (when $I(C\rangle A)_{\rho_1}{\ge}0$). 
\begin{align}
        & (\psi_1)_{ABCR} + I(C:BR)_{\psi_1}[c \to c] \nonumber \\ &	\succeq (\psi_1)_{AA'BR} + I(C\rangle A)_{\rho_1} [qq]. \label{Eq:merging2}
    \end{align}
Here $A'\simeq C$. Depending on the amount of shared ebits, Alice and Charlie perform the state-merging in Eq.~\eqref{Eq:merging1} or Eq.~\eqref{Eq:merging2} to establish $\approx n.\max\Big\{I(A\rangle C)_{\rho_1}, I(C\rangle A)_{\rho_1}\Big\}$ shared ebits between them.

After the entanglement distillation, the state $\rho_1^{\otimes n}$ is discarded. Charlie now prepares the state $(\rho_2)_{C'C''C}^{\otimes m_n}$ with $C'\simeq A$ and $C''\simeq B$.  Charlie transfers the system $(C'')^{m_n}$ to Bob via quantum communication $\mathbb{QC}_{\mathbf{C} \to \mathbf{B}}$, i.e. $m_n$ copies of the map $\id_{C''\to B}$ is implemented. Now, Charlie aims to use the previously established shared ebits to perform a state-splitting protocol to transfer system $C'$ to Alice's lab, with the map $\id_{C'\to A}$. Considering the purification of $\rho_2$ to be $\psi_2$ with purifying system $R'$, and writing $C'$ and $A$ interchangeably in the entropic quantities, let us recall the relevant resource inequality for such state splitting~\cite{Yard09}:
\begin{align}
        &(\psi_2)_{BC'CR'}+ \frac{1}{2}I(A{:}BR')_{\psi_{2}}[q \to q]\nonumber\\
        &\hspace{1cm}+ \frac{1}{2}I(A;C)_{\psi_{2}} [qq]  	\succeq (\psi_2)_{ABCR'}. \label{Eq:splitting1}
    \end{align}
Although quantum communication from Charlie to Alice is not a free operation, Charlie can simulate some quantum channels with part of the shared ebits and classical communication via a teleportation protocol, i.e., he can perform the protocol by only using classical communication and consuming additional shared ebits. Replacing $[q\to q]$ with $[qq] + 2 [c\to c]$ (teleportation) in Eq:~\eqref{Eq:splitting1}, we have 
\begin{align}
        & (\psi_2)_{BC'CR'}+ \frac{1}{2}\Big(I(A{:}BR)_{\psi_{2}}{+}I(A;C)_{\psi_{2}}\Big)[qq] \nonumber \\ 	&+ I(A{:}BR)_{\psi_{2}} [c\to c]  \succeq (\psi_2)_{ABCR}. \label{Eq:splitting_CC_1}
\end{align}
Note $\frac{1}{2}\Big(I(A{:}BR)_{\psi_{2}}{+}I(A;C)_{\psi_{2}}\Big){=}S(A)_{\psi_2}$, We can write $S(A)_{\psi_2}$ and $S(A)_{\rho_2}$ interchangeably. So to transfer $m_n$ copies of the system $A$ of $\rho_2$, Alice and Charlie need ${\approx m_nS(A)_{\rho_2}}$ ebits. The protocol is successful whenever previously distilled shared entanglement, $n.\max\Big\{ I(A\rangle C)_{\rho_1},I(C\rangle A)_{\rho_1} \Big\}$ is more than $m_n S(A)_{\rho_2}$, i.e. 
\begin{align}
&n \max\Big\{ I(A\rangle C)_{\rho_1},I(C\rangle A)_{\rho_1} \Big\} \ge m_n S(A)_{\rho_2} \\
 &\implies   \frac{m_n}{n}\leq \frac{\max\Big( I(A\rangle C)_{\rho_1},I(C\rangle A)_{\rho_1} \Big)}{S(A)_{\rho_2}}. \label{Eq:condition_st_trans1}
\end{align}

Alternatively, instead of Charlie, Alice could produce the state $(\rho_2)^{\otimes m_n}_{AA''A'}$ with $A''\simeq B$ and $A'\simeq C$ (all the subsystems are with Alice), send $(A'')^{m_n}$ to Bob via $\mathbb{QC}_{\mathbf{A}{\to}\mathbf{B}}$ (implementing $m_n$ copies of the map $\id_{A''\to B}$). She can then perform the state-splitting with the corresponding resource inequality:
\begin{align}
        & (\psi_2)_{AA'BR'}+ \frac{1}{2}I(C{:}BR)_{\psi_{2}}[q \to q] \nonumber\\
        &\hspace{1cm}+ \frac{1}{2}I(A;C)_{\psi_{2}} [qq]  	\succeq (\psi_2)_{ABCR'}. \label{Eq:splitting2}
\end{align}
Simulating the $[q\to q]$ channel via teleportation, we have the state-splitting protocol:
\begin{align}
        & (\psi_2)_{AA'BR'}+ \frac{1}{2}\Big(I(C{:}BR)_{\psi_{2}} + I(A;C)_{\psi_{2}}\Big)[qq] \nonumber \\ 	
        &\hspace{1cm} + I(C{:}BR)_{\psi_{2}} [c\to c]   	\succeq (\psi_2)_{ABCR'}. \label{Eq:splitting_CC_2}
    \end{align}
Considering
\begin{align*}
    \frac{1}{2}\Big(I(C{:}BR)_{\psi_{2}}{+}I(A;C)_{\psi_{2}}\Big){=}S(C)_{\psi_2}{=}S(C)_{\rho_2},
\end{align*} 
In this case, the available shared ebits are $n.\max\Big\{ I(A\rangle C)_{\rho_1},I(C\rangle A)_{\rho_1}\Big\}$ need to exceed $m_n.S(C)_{\rho_2}$ to make the state-splitting in Eq.~\eqref{Eq:splitting_CC_2} successful:
\begin{align}
&n \max\Big\{ I(A\rangle C)_{\rho_1},I(C\rangle A)_{\rho_1} \Big\} \ge m_n H(C)_{\rho_2} \\
&\implies
    \frac{m_n}{n}\le \frac{\max\Big\{ I(A\rangle C)_{\rho_1},I(C\rangle A)_{\rho_1} \Big\}}{S(C)_{\rho_2}}. \label{Eq:condition_st_trans2}
\end{align}
Combining Eq.~\eqref{Eq:condition_st_trans1} or Eq.~\eqref{Eq:condition_st_trans2}, the maximum possible rate of inter-conversion is upper bounded by
\begin{align}
   &R=\lim_{\varepsilon \to 0}\lim_{n\to \infty}  \frac{m_n}{n} \nonumber \\
   &\hspace{1cm}\leq \frac{\max\Big\{ I(A\rangle C)_{\rho_1},I(C\rangle A)_{\rho_1} \Big\}} {\min \Big \{S(A)_{\rho_2},S(C)_{\rho_2}\Big\}}=R_{A}. \label{Eq:condition_suff}
\end{align}

\subsection{Proof of Corollary~\ref{Cor:nec_vs_suf}} \label{subsec:Cor:nec_vs_suf}
Let us assume for $\rho_{ABC}$, $E$ is the extension that gives the quantum non-Markovianity measure, $N_{\mathrm{sq}}(A;C|B)_{\rho}=I(A;C|BE)_{\rho}/2$. Also, we assume $R$ is an extension such that $\rho_{ABCER}$ is a pure state. Now we have
\begin{align}
    &N_{\mathrm{sq}}(A;C|B)_{\rho}=\frac{1}{2} I(A;C|BE)_{\rho} \nonumber \\
    &\hspace{0.5cm}=\frac{1}{2}(S(A|BE)_{\rho}-S(A|BCE)_{\rho}) \nonumber \\
    &\hspace{0.5cm}=\frac{1}{2}(S(A|BE)_{\rho}+S(A|R)_{\rho})\le S(A)_{\rho}.
\end{align}
By symmetry, we also have $N_{\mathrm{sq}}(A;C|B)_{\rho}{\le}S(C)_\rho$, hence 
\begin{align}
N_{\mathrm{sq}}(A;C|B)_{\rho} \le \min\Big\{S(A)_{\rho},S(C)_\rho\Big\}.
\end{align}
Now, we prove the lower bound. Consider $I(A{\rangle}C)_{\rho}{=}-S(A|C)_{\rho}{=}S(A|BER)_{\rho}{\le}S(A|BE)_{\rho}$, also we have  $I(A{\rangle}C)_{\rho}{=}S(A|BER)_{\rho}{\le}S(A|R)_{\rho}$. The last inequality is because conditioning reduces entropy. Adding the inequalities and dividing by 2, we have 
\begin{align}
N_{\mathrm{sq}}(A;C|B)_{\rho} &= \frac{1}{2}(S(A|BE)_{\rho}+S(A|R)_{\rho})\nonumber\\
& \ge S(A|BER)_{\rho}=I(A{\rangle}C)_{\rho}.
\end{align}
Similarly, we can show $N_{\mathrm{sq}}(A;C|B)_{\rho}{\ge}I(C{\rangle}A)_{\rho}$, hence we have
\begin{align}
    {N_{\mathrm{sq}}(A;C|B)_{\rho}} \ge \max\Big\{ I(A\rangle C)_{\rho},I(C\rangle A)_{\rho} \Big\}.
\end{align}

\bibliographystyle{unsrtnat}
\bibliography{QNM}

\begin{thebibliography}{117}
\providecommand{\natexlab}[1]{#1}
\providecommand{\url}[1]{\texttt{#1}}
\expandafter\ifx\csname urlstyle\endcsname\relax
  \providecommand{\doi}[1]{doi: #1}\else
  \providecommand{\doi}{doi: \begingroup \urlstyle{rm}\Url}\fi

\bibitem[Einstein et~al.(1935)Einstein, Podolsky, and Rosen]{EPR35}
A.~Einstein, B.~Podolsky, and N.~Rosen.
\newblock Can quantum-mechanical description of physical reality be considered
  complete?
\newblock \emph{Phys. Rev.}, 47:\penalty0 777--780, May 1935.
\newblock \doi{10.1103/PhysRev.47.777}.
\newblock URL \url{https://doi.org/10.1103/PhysRev.47.777}.

\bibitem[Bell and Aspect(2011)]{Bbook04}
J.~S. Bell and A.~Aspect.
\newblock \emph{Speakable and Unspeakable in Quantum Mechanics: Collected
  Papers on Quantum Philosophy}.
\newblock Cambridge University Press, Apr 2011.
\newblock ISBN 9780521523387.
\newblock \doi{10.1017/CBO9780511815676}.
\newblock URL \url{https://doi.org/10.1017/CBO9780511815676}.

\bibitem[DiVincenzo(1995)]{DiV95}
D.~P. DiVincenzo.
\newblock Quantum computation.
\newblock \emph{Science}, 270\penalty0 (5234):\penalty0 255--261, Oct 1995.
\newblock \doi{10.1126/science.270.5234.255}.
\newblock URL \url{https://doi.org/10.1126/science.270.5234.255}.

\bibitem[MacFarlane et~al.(2003)MacFarlane, Dowling, and Milburn]{DM03}
A.~G.~J. MacFarlane, J.~P. Dowling, and G.~J. Milburn.
\newblock Quantum technology: the second quantum revolution.
\newblock \emph{Philos. Trans. Royal Soc. A .}, 361\penalty0 (1809):\penalty0
  1655--1674, Jun 2003.
\newblock \doi{10.1098/rsta.2003.1227}.
\newblock URL \url{https://doi.org/10.1098/rsta.2003.1227}.

\bibitem[Coles et~al.(2017)Coles, Berta, Tomamichel, and Wehner]{CBTW17}
P.~J. Coles, M.~Berta, M.~Tomamichel, and S.~Wehner.
\newblock Entropic uncertainty relations and their applications.
\newblock \emph{Rev. Mod. Phys.}, 89:\penalty0 015002, Feb 2017.
\newblock \doi{10.1103/RevModPhys.89.015002}.
\newblock URL \url{https://doi.org/10.1103/RevModPhys.89.015002}.

\bibitem[Das et~al.(2021)Das, B\"auml, Winczewski, and Horodecki]{DBWH21}
S.~Das, S.~B\"auml, M.~Winczewski, and K.~Horodecki.
\newblock Universal limitations on quantum key distribution over a network.
\newblock \emph{Phys. Rev. X}, 11:\penalty0 041016, Oct 2021.
\newblock \doi{10.1103/PhysRevX.11.041016}.
\newblock URL \url{https://doi.org/10.1103/PhysRevX.11.041016}.

\bibitem[Bennett et~al.(1993)Bennett, Brassard, Cr\'epeau, Jozsa, Peres, and
  Wootters]{BBC+93}
C.~H. Bennett, G.~Brassard, C.~Cr\'epeau, R.~Jozsa, A.~Peres, and W.~K.
  Wootters.
\newblock Teleporting an unknown quantum state via dual classical and
  {Einstein-Podolsky-Rosen} channels.
\newblock \emph{Phys. Rev. Lett.}, 70:\penalty0 1895--1899, Mar 1993.
\newblock \doi{10.1103/PhysRevLett.70.1895}.
\newblock URL \url{https://doi.org/10.1103/PhysRevLett.70.1895}.

\bibitem[Furusawa et~al.(1998)Furusawa, Sørensen, Braunstein, Fuchs, Kimble,
  and Polzik]{FSB+98}
A.~Furusawa, J.~L. Sørensen, S.~L. Braunstein, C.~A. Fuchs, H.~J. Kimble, and
  E.~S. Polzik.
\newblock Unconditional quantum teleportation.
\newblock \emph{Science}, 282\penalty0 (5389):\penalty0 706--709, Oct 1998.
\newblock \doi{10.1126/science.282.5389.706}.
\newblock URL \url{https://doi.org/10.1126/science.282.5389.706}.

\bibitem[Ekert(1991)]{Eke91}
A.~K. Ekert.
\newblock Quantum cryptography based on {B}ell's theorem.
\newblock \emph{Phys. Rev. Lett.}, 67:\penalty0 661--663, Aug 1991.
\newblock \doi{10.1103/PhysRevLett.67.661}.
\newblock URL \url{https://doi.org/10.1103/PhysRevLett.67.661}.

\bibitem[Primaatmaja et~al.(2023)Primaatmaja, Goh, Tan, Khoo, Ghorai, and
  Lim]{PGT+23}
I.~W. Primaatmaja, K.~T. Goh, E.~Y.-Z. Tan, J.~T.-F. Khoo, S.~Ghorai, and
  C.~C.-W. Lim.
\newblock Security of device-independent quantum key distribution protocols: a
  review.
\newblock \emph{{Quantum}}, 7:\penalty0 932, Mar 2023.
\newblock ISSN 2521-327X.
\newblock \doi{10.22331/q-2023-03-02-932}.
\newblock URL \url{https://doi.org/10.22331/q-2023-03-02-932}.

\bibitem[Simon(1997)]{Sim97}
D.~R. Simon.
\newblock On the power of quantum computation.
\newblock \emph{SIAM J. Comput.}, 26\penalty0 (5):\penalty0 1474--1483, Oct
  1997.
\newblock \doi{10.1137/S0097539796298637}.
\newblock URL \url{https://doi.org/10.1137/S0097539796298637}.

\bibitem[Bravyi et~al.(2018)Bravyi, Gosset, and König]{BGK18}
S.~Bravyi, D.~Gosset, and R.t König.
\newblock Quantum advantage with shallow circuits.
\newblock \emph{Science}, 362\penalty0 (6412):\penalty0 308--311, Oct 2018.
\newblock \doi{10.1126/science.aar3106}.
\newblock URL \url{https://doi.org/10.1126/science.aar3106}.

\bibitem[Pironio et~al.(2010)Pironio, Ac{\'\i}n, Massar, de~La~Giroday,
  Matsukevich, Maunz, Olmschenk, Hayes, Luo, Manning, et~al.]{PAM+10}
S.~Pironio, A.~Ac{\'\i}n, S.~Massar, A~B. de~La~Giroday, D.~N Matsukevich,
  P.~Maunz, S.~Olmschenk, D.~Hayes, Le. Luo, T~A. Manning, et~al.
\newblock Random numbers certified by {B}ell’s theorem.
\newblock \emph{Nature}, 464\penalty0 (7291):\penalty0 1021--1024, Apr 2010.
\newblock \doi{10.1038/nature09008}.
\newblock URL \url{https://doi.org/10.1038/nature09008}.

\bibitem[Colbeck and Kent(2011)]{RA11}
R.~Colbeck and A.~Kent.
\newblock Private randomness expansion with untrusted devices.
\newblock \emph{J. Phys. A Math. Theor.}, 44\penalty0 (9):\penalty0 095305, Feb
  2011.
\newblock \doi{10.1088/1751-8113/44/9/095305}.
\newblock URL \url{https://doi.org/10.1088/1751-8113/44/9/095305}.

\bibitem[Bera et~al.(2017)Bera, Acín, Kuś, Mitchell, and Lewenstein]{Bera17}
M.~N. Bera, A.~Acín, M.~Kuś, M.~W. Mitchell, and M.~Lewenstein.
\newblock Randomness in quantum mechanics: philosophy, physics and technology.
\newblock \emph{Rep. Prog. Phys.}, 80\penalty0 (12):\penalty0 124001, Nov 2017.
\newblock \doi{10.1088/1361-6633/aa8731}.
\newblock URL \url{https://doi.org/10.1088/1361-6633/aa8731}.

\bibitem[Chitambar and Gour(2019)]{CG19}
E.~Chitambar and G.~Gour.
\newblock Quantum resource theories.
\newblock \emph{Rev. Mod. Phys.}, 91:\penalty0 025001, Apr 2019.
\newblock \doi{10.1103/RevModPhys.91.025001}.
\newblock URL \url{https://doi.org/10.1103/RevModPhys.91.025001}.

\bibitem[Takagi and Regula(2019)]{TR19}
R.~Takagi and B.~Regula.
\newblock General resource theories in quantum mechanics and beyond:
  Operational characterization via discrimination tasks.
\newblock \emph{Phys. Rev. X}, 9:\penalty0 031053, Sep 2019.
\newblock \doi{10.1103/PhysRevX.9.031053}.
\newblock URL \url{https://doi.org/10.1103/PhysRevX.9.031053}.

\bibitem[Lami(2024)]{Lam24}
L.~Lami.
\newblock A solution of the generalised quantum {S}tein's lemma.
\newblock \emph{arXiv.2408.06410}, Oct 2024.
\newblock \doi{10.48550/arXiv.2408.06410}.
\newblock URL \url{https://doi.org/10.48550/arXiv.2408.06410}.

\bibitem[Barrett et~al.(2005)Barrett, Linden, Massar, Pironio, Popescu, and
  Roberts]{BLM+05}
J.~Barrett, N.~Linden, S.~Massar, S.~Pironio, S.~Popescu, and D.~Roberts.
\newblock Nonlocal correlations as an information-theoretic resource.
\newblock \emph{Phys. Rev. A}, 71:\penalty0 022101, Feb 2005.
\newblock \doi{10.1103/PhysRevA.71.022101}.
\newblock URL \url{https://doi.org/10.1103/PhysRevA.71.022101}.

\bibitem[Gallego et~al.(2012)Gallego, W\"urflinger, Ac\'{\i}n, and
  Navascu\'es]{GWAN12}
R.~Gallego, L.~E. W\"urflinger, A.~Ac\'{\i}n, and M.~Navascu\'es.
\newblock Operational framework for nonlocality.
\newblock \emph{Phys. Rev. Lett.}, 109:\penalty0 070401, Aug 2012.
\newblock \doi{10.1103/PhysRevLett.109.070401}.
\newblock URL \url{https://doi.org/10.1103/PhysRevLett.109.070401}.

\bibitem[Wiseman et~al.(2007)Wiseman, Jones, and Doherty]{WJD07}
H.~M. Wiseman, S.~J. Jones, and A.~C. Doherty.
\newblock Steering, entanglement, nonlocality, and the
  {E}instein-{P}odolsky-{R}osen paradox.
\newblock \emph{Phys. Rev. Lett.}, 98:\penalty0 140402, Apr 2007.
\newblock \doi{10.1103/PhysRevLett.98.140402}.
\newblock URL \url{https://doi.org/10.1103/PhysRevLett.98.140402}.

\bibitem[Gallego and Aolita(2015)]{GA15}
R.~Gallego and L.~Aolita.
\newblock Resource theory of steering.
\newblock \emph{Phys. Rev. X}, 5:\penalty0 041008, Oct 2015.
\newblock \doi{10.1103/PhysRevX.5.041008}.
\newblock URL \url{https://doi.org/10.1103/PhysRevX.5.041008}.

\bibitem[Horodecki et~al.(2009)Horodecki, Horodecki, Horodecki, and
  Horodecki]{HHHH09}
R.~Horodecki, P.~Horodecki, M.~Horodecki, and K.~Horodecki.
\newblock Quantum entanglement.
\newblock \emph{Rev. Mod. Phys.}, 81:\penalty0 865--942, Jun 2009.
\newblock \doi{10.1103/RevModPhys.81.865}.
\newblock URL \url{https://doi.org/10.1103/RevModPhys.81.865}.

\bibitem[Streltsov et~al.(2015)Streltsov, Singh, Dhar, Bera, and
  Adesso]{SSD+15}
A.~Streltsov, U.~Singh, H.~Shekhar Dhar, M.~Nath Bera, and G.~Adesso.
\newblock Measuring quantum coherence with entanglement.
\newblock \emph{Phys. Rev. Lett.}, 115:\penalty0 020403, Jul 2015.
\newblock \doi{10.1103/PhysRevLett.115.020403}.
\newblock URL \url{https://doi.org/10.1103/PhysRevLett.115.020403}.

\bibitem[Doherty et~al.(2002)Doherty, Parrilo, and Spedalieri]{DPS02}
A.~C. Doherty, P.~A. Parrilo, and F.~M. Spedalieri.
\newblock Distinguishing separable and entangled states.
\newblock \emph{Phys. Rev. Lett.}, 88:\penalty0 187904, Apr 2002.
\newblock \doi{10.1103/PhysRevLett.88.187904}.
\newblock URL \url{https://doi.org/10.1103/PhysRevLett.88.187904}.

\bibitem[Doherty et~al.(2004)Doherty, Parrilo, and Spedalieri]{DPS04}
A.~C. Doherty, P.~A. Parrilo, and F.~M. Spedalieri.
\newblock Complete family of separability criteria.
\newblock \emph{Phys. Rev. A}, 69:\penalty0 022308, Feb 2004.
\newblock \doi{10.1103/PhysRevA.69.022308}.
\newblock URL \url{https://doi.org/10.1103/PhysRevA.69.022308}.

\bibitem[Kaur et~al.(2019)Kaur, Das, Wilde, and Winter]{KDWW19}
E.~Kaur, S.~Das, M.~M. Wilde, and A.~Winter.
\newblock Extendibility limits the performance of quantum processors.
\newblock \emph{Phys. Rev. Lett.}, 123:\penalty0 070502, Aug 2019.
\newblock \doi{10.1103/PhysRevLett.123.070502}.
\newblock URL \url{https://doi.org/10.1103/PhysRevLett.123.070502}.

\bibitem[Kaur et~al.(2021)Kaur, Das, Wilde, and Winter]{KDWW21}
E.~Kaur, S.~Das, M.~M. Wilde, and A.~Winter.
\newblock Resource theory of unextendibility and nonasymptotic quantum
  capacity.
\newblock \emph{Phys. Rev. A}, 104:\penalty0 022401, Aug 2021.
\newblock \doi{10.1103/PhysRevA.104.022401}.
\newblock URL \url{https://doi.org/10.1103/PhysRevA.104.022401}.

\bibitem[Hayden et~al.(2004)Hayden, Jozsa, Petz, and Winter]{hayden2004}
P.~Hayden, R.~Jozsa, D.~Petz, and A.~Winter.
\newblock Structure of states which satisfy strong subadditivity of quantum
  entropy with equality.
\newblock \emph{Commun. Math. Phys.}, 246:\penalty0 359--374, Feb 2004.
\newblock \doi{10.1007/s00220-004-1049-z}.
\newblock URL \url{https://doi.org/10.1007/s00220-004-1049-z}.

\bibitem[Fawzi and Renner(2015)]{FR15}
O.~Fawzi and R.~Renner.
\newblock Quantum conditional mutual information and approximate {M}arkov
  chains.
\newblock \emph{Commun. Math. Phys.}, 340\penalty0 (2):\penalty0 575--611, Sep
  2015.
\newblock \doi{10.1007/s00220-015-2466-x}.
\newblock URL \url{https://doi.org/10.1007%2Fs00220-015-2466-x}.

\bibitem[Ruskai(2002)]{Rus02}
M.B. Ruskai.
\newblock Inequalities for quantum entropy: A review with conditions for
  equality.
\newblock \emph{J. Math. Phys.}, 43\penalty0 (9):\penalty0 4358–4375, Aug
  2002.
\newblock ISSN 1089-7658.
\newblock \doi{10.1063/1.1497701}.
\newblock URL \url{https://doi.org/10.1063/1.1497701}.

\bibitem[Lieb and Ruskai(2003)]{Lieb03}
E.~H. Lieb and M.~B. Ruskai.
\newblock {Proof of the strong subadditivity of quantum‐mechanical entropy}.
\newblock \emph{J. Math. Phys.}, 14\penalty0 (12):\penalty0 1938--1941, Nov
  2003.
\newblock ISSN 0022-2488.
\newblock \doi{10.1063/1.1666274}.
\newblock URL \url{https://doi.org/10.1063/1.1666274}.

\bibitem[Nielsen and Petz(2005)]{NP05}
M.~A. Nielsen and D.~Petz.
\newblock A simple proof of the strong subadditivity inequality.
\newblock \emph{Quantum Info. Comput.}, 5\penalty0 (6):\penalty0 507–513, Sep
  2005.
\newblock ISSN 1533-7146.
\newblock \doi{10.1007/s11128-011-0238-x}.
\newblock URL \url{https://doi.org/10.5555/2011670.2011678}.

\bibitem[Umegaki(1962)]{Ume62}
H.~Umegaki.
\newblock Conditional expectation in an operator algebra. iv. entropy and
  information.
\newblock \emph{Kodai Math. Semin. rep.}, 14\penalty0 (2):\penalty0 59 -- 85,
  Feb 1962.
\newblock \doi{10.2996/kmj/1138844604}.
\newblock URL \url{https://doi.org/10.2996/kmj/1138844604}.

\bibitem[Hiai and Petz(1991)]{HP91}
F.~Hiai and D.~Petz.
\newblock The proper formula for relative entropy and its asymptotics in
  quantum probability.
\newblock \emph{Commun. Math. Phys.}, 143\penalty0 (1):\penalty0 99--114, Dec
  1991.
\newblock \doi{10.1007/BF02100287}.
\newblock URL \url{https://doi.org/10.1007/BF02100287}.

\bibitem[Devetak and Yard(2008)]{Devetak08}
I.~Devetak and J.~Yard.
\newblock Exact cost of redistributing multipartite quantum states.
\newblock \emph{Phys. Rev. Lett.}, 100:\penalty0 230501, Jun 2008.
\newblock \doi{10.1103/PhysRevLett.100.230501}.
\newblock URL \url{https://doi.org/10.1103/PhysRevLett.100.230501}.

\bibitem[Yard and Devetak(2009)]{Yard09}
J.~T. Yard and I.~Devetak.
\newblock Optimal quantum source coding with quantum side information at the
  encoder and decoder.
\newblock \emph{IEEE Trans. Inf. Theory.}, 55\penalty0 (11):\penalty0
  5339--5351, Nov 2009.
\newblock ISSN 1557-9654.
\newblock \doi{10.1109/TIT.2009.2030494}.
\newblock URL \url{https://doi.org/10.1109/TIT.2009.2030494}.

\bibitem[Berta et~al.(2018{\natexlab{a}})Berta, Brand\~ao, Majenz, and
  Wilde]{Berta_2018_PRL}
M.~Berta, F.~G. S.~L. Brand\~ao, C.~Majenz, and M.~M. Wilde.
\newblock Conditional decoupling of quantum information.
\newblock \emph{Phys. Rev. Lett.}, 121:\penalty0 040504, Jul
  2018{\natexlab{a}}.
\newblock \doi{10.1103/PhysRevLett.121.040504}.
\newblock URL \url{https://doi.org/10.1103/PhysRevLett.121.040504}.

\bibitem[{Berta}(2009)]{berta2009singleshot}
M.~{Berta}.
\newblock Single-shot quantum state merging.
\newblock \emph{arXiv:0912.4495}, Dec 2009.
\newblock \doi{10.48550/arXiv.0912.4495}.
\newblock URL \url{https://doi.org/10.48550/arXiv.0912.4495}.

\bibitem[Sharma et~al.(2020)Sharma, Wakakuwa, and Wilde]{Conditional_otp}
K.~Sharma, E.~Wakakuwa, and M.M. Wilde.
\newblock Conditional quantum one-time pad.
\newblock \emph{Phys. Rev. Lett.}, 124:\penalty0 050503, Feb 2020.
\newblock \doi{10.1103/PhysRevLett.124.050503}.
\newblock URL \url{https://doi.org/10.1103/PhysRevLett.124.050503}.

\bibitem[Brandão et~al.(2011)Brandão, Christandl, and Yard]{Brandao11}
Fernando G. S.~L. Brandão, M.~Christandl, and J.~Yard.
\newblock Faithful squashed entanglement.
\newblock \emph{Commun. Math. Phys.}, 306:\penalty0 805, Aug 2011.
\newblock \doi{10.1007/s00220-011-1302-1}.
\newblock URL \url{https://doi.org/10.1007/s00220-011-1302-1}.

\bibitem[Ding et~al.(2016)Ding, Hayden, and Walter]{DHW16}
D.~Ding, P.~Hayden, and M.~Walter.
\newblock Conditional mutual information of bipartite unitaries and scrambling.
\newblock \emph{J. High Energy Phys.}, 12:\penalty0 145, Dec 2016.
\newblock \doi{10.1007/JHEP12(2016)145}.
\newblock URL \url{https://doi.org/10.1007/JHEP12(2016)145}.

\bibitem[Kaur et~al.(2017)Kaur, Wang, and Wilde]{Kaur17}
E.~Kaur, X.~Wang, and Mark~M. Wilde.
\newblock Conditional mutual information and quantum steering.
\newblock \emph{Phys. Rev. A}, 96:\penalty0 022332, Aug 2017.
\newblock \doi{10.1103/PhysRevA.96.022332}.
\newblock URL \url{https://doi.org/10.1103/PhysRevA.96.022332}.

\bibitem[Kim(2012)]{Kim12a}
I.~H. Kim.
\newblock Perturbative analysis of topological entanglement entropy from
  conditional independence.
\newblock \emph{Phys. Rev. B}, 86:\penalty0 245116, Dec 2012.
\newblock \doi{10.1103/PhysRevB.86.245116}.
\newblock URL \url{https://doi.org/10.1103/PhysRevB.86.245116}.

\bibitem[Renner and Wolf(2003)]{RW03}
R.~Renner and S.~Wolf.
\newblock New bounds in secret-key agreement: The gap between formation and
  secrecy extraction.
\newblock \emph{Advances in Cryptology — EUROCRYPT 2003}, pages 562--577, Jan
  2003.
\newblock \doi{10.1007/3-540-39200-9_35}.
\newblock URL \url{https://doi.org/10.1007/3-540-39200-9_35}.

\bibitem[Devetak and Winter(2005)]{DW05}
I.~Devetak and A.~Winter.
\newblock Distillation of secret key and entanglement from quantum states.
\newblock \emph{Proc. Math. Phys. Eng. Sci.}, 461\penalty0 (2053):\penalty0
  207--235, Jan 2005.
\newblock \doi{10.1098/rspa.2004.1372}.
\newblock URL \url{https://doi.org/10.1098/rspa.2004.1372}.

\bibitem[Winczewski et~al.(2022)Winczewski, Das, and Horodecki]{WDH22}
M.~Winczewski, T.~Das, and K.~Horodecki.
\newblock Limitations on a device-independent key secure against a
  non-signaling adversary via squashed nonlocality.
\newblock \emph{Phys. Rev. A}, 106:\penalty0 052612, Nov 2022.
\newblock \doi{10.1103/PhysRevA.106.052612}.
\newblock URL \url{https://doi.org/10.1103/PhysRevA.106.052612}.

\bibitem[Kaur et~al.(2022)Kaur, Horodecki, and Das]{KHD22}
E.~Kaur, K.~Horodecki, and S.~Das.
\newblock Upper bounds on device-independent quantum key distribution rates in
  static and dynamic scenarios.
\newblock \emph{Phys. Rev. Appl.}, 18:\penalty0 054033, Nov 2022.
\newblock \doi{10.1103/PhysRevApplied.18.054033}.
\newblock URL \url{https://doi.org/10.1103/PhysRevApplied.18.054033}.

\bibitem[Cerf and Cleve(1997)]{CC97}
N.~J. Cerf and R.~Cleve.
\newblock Information-theoretic interpretation of quantum error-correcting
  codes.
\newblock \emph{Phys. Rev. A}, 56:\penalty0 1721--1732, Sep 1997.
\newblock \doi{10.1103/PhysRevA.56.1721}.
\newblock URL \url{https://doi.org/10.1103/PhysRevA.56.1721}.

\bibitem[Pastawski et~al.(2017)Pastawski, Eisert, and Wilming]{PEW17}
F.~Pastawski, J.~Eisert, and H.~Wilming.
\newblock Towards holography via quantum source-channel codes.
\newblock \emph{Phys. Rev. Lett.}, 119:\penalty0 020501, Jul 2017.
\newblock \doi{10.1103/PhysRevLett.119.020501}.
\newblock URL \url{https://doi.org/10.1103/PhysRevLett.119.020501}.

\bibitem[Buscemi(2014)]{Bus14}
F.~Buscemi.
\newblock Complete positivity, {M}arkovianity, and the quantum data-processing
  inequality, in the presence of initial system-environment correlations.
\newblock \emph{Phys. Rev. Lett.}, 113:\penalty0 140502, Oct 2014.
\newblock \doi{10.1103/PhysRevLett.113.140502}.
\newblock URL \url{https://doi.org/10.1103/PhysRevLett.113.140502}.

\bibitem[Buscemi et~al.(2016)Buscemi, Das, and Wilde]{BDW16}
F.~Buscemi, S.~Das, and M.~M. Wilde.
\newblock Approximate reversibility in the context of entropy gain, information
  gain, and complete positivity.
\newblock \emph{Phys. Rev. A}, 93:\penalty0 062314, Jun 2016.
\newblock \doi{10.1103/PhysRevA.93.062314}.
\newblock URL \url{https://doi.org/10.1103/PhysRevA.93.062314}.

\bibitem[Huang and Guo(2022)]{HG22}
Z.~Huang and X.~Guo.
\newblock Classical and quantum parts of conditional mutual information for
  open quantum systems.
\newblock \emph{Phys. Rev. A}, 106:\penalty0 042412, Oct 2022.
\newblock \doi{10.1103/PhysRevA.106.042412}.
\newblock URL \url{https://doi.org/10.1103/PhysRevA.106.042412}.

\bibitem[Pollock et~al.(2018{\natexlab{a}})Pollock, Rodr\'{\i}guez-Rosario,
  Frauenheim, Paternostro, and Modi]{PRF+18}
Felix~A. Pollock, C\'esar Rodr\'{\i}guez-Rosario, Thomas Frauenheim, Mauro
  Paternostro, and Kavan Modi.
\newblock Non-markovian quantum processes: Complete framework and efficient
  characterization.
\newblock \emph{Physical Review A}, 97:\penalty0 012127, Jan
  2018{\natexlab{a}}.
\newblock \doi{10.1103/PhysRevA.97.012127}.
\newblock URL \url{https://doi.org/10.1103/PhysRevA.97.012127}.

\bibitem[Kuwahara et~al.(2020)Kuwahara, Kato, and Brand\~ao]{KKB20}
T.~Kuwahara, K.~Kato, and F.~G. S.~L. Brand\~ao.
\newblock Clustering of conditional mutual information for quantum {G}ibbs
  states above a threshold temperature.
\newblock \emph{Phys. Rev. Lett.}, 124:\penalty0 220601, Jun 2020.
\newblock \doi{10.1103/PhysRevLett.124.220601}.
\newblock URL \url{https://doi.org/10.1103/PhysRevLett.124.220601}.

\bibitem[Kim et~al.(2022)Kim, Shi, Kato, and Albert]{KSKA22}
I.~H. Kim, B.~Shi, K.~Kato, and V.~V. Albert.
\newblock Modular commutator in gapped quantum many-body systems.
\newblock \emph{Phys. Rev. B}, 106:\penalty0 075147, Aug 2022.
\newblock \doi{10.1103/PhysRevB.106.075147}.
\newblock URL \url{https://doi.org/10.1103/PhysRevB.106.075147}.

\bibitem[Sutter et~al.(2016)Sutter, Fawzi, and Renner]{SFR16}
D.~Sutter, O.~Fawzi, and R.~Renner.
\newblock Universal recovery map for approximate {M}arkov chains.
\newblock \emph{Proc. Math. Phys. Eng. Sci.}, 472\penalty0 (2186):\penalty0
  20150623, Feb 2016.
\newblock \doi{10.1098/rspa.2015.0623}.
\newblock URL \url{https://doi.org/10.1098%2Frspa.2015.0623}.

\bibitem[Rivas et~al.(2014)Rivas, Huelga, and Plenio]{Rivas2014}
Á. Rivas, S.~F. Huelga, and M.~B. Plenio.
\newblock Quantum non-{M}arkovianity: characterization, quantification and
  detection.
\newblock \emph{Rep. Prog. Phys.}, 77\penalty0 (9):\penalty0 094001, Aug 2014.
\newblock \doi{10.1088/0034-4885/77/9/094001}.
\newblock URL \url{https://doi.org/10.1088/0034-4885/77/9/094001}.

\bibitem[Allen et~al.(2017)Allen, Barrett, Horsman, Lee, and Spekkens]{Allen17}
J.~A. Allen, J.~Barrett, D.~C. Horsman, C.~M. Lee, and R.~W. Spekkens.
\newblock Quantum common causes and quantum causal models.
\newblock \emph{Phys. Rev. X}, 7:\penalty0 031021, Jul 2017.
\newblock \doi{10.1103/PhysRevX.7.031021}.
\newblock URL \url{https://doi.org/10.1103/PhysRevX.7.031021}.

\bibitem[Buscemi et~al.(2024)Buscemi, Gangwar, Goswami, Badhani, Pandit, Mohan,
  Das, and Bera]{BGG+24}
F.~Buscemi, R.~Gangwar, K.~Goswami, H.~Badhani, T.~Pandit, B.~Mohan, S.~Das,
  and M.~N. Bera.
\newblock Information revival without backflow: non-causal explanations of
  non-{M}arkovianity.
\newblock \emph{arXiv:2405.05326}, May 2024.
\newblock \doi{10.48550/arXiv.2405.05326}.
\newblock URL \url{https://doi.org/10.48550/arXiv.2405.05326}.

\bibitem[Wakakuwa(2017)]{Wakakuwa2017}
E.~Wakakuwa.
\newblock Operational resource theory of non-{M}arkovianity.
\newblock \emph{arXiv:1709.07248}, Oct 2017.
\newblock \doi{10.48550/arXiv.1709.07248}.
\newblock URL \url{https://doi.org/10.48550/arXiv.1709.07248}.

\bibitem[Wakakuwa(2021)]{Wakakuwa21}
E.~Wakakuwa.
\newblock Communication cost for non-{M}arkovianity of tripartite quantum
  states: A resource theoretic approach.
\newblock \emph{IEEE Trans. Inf. Theory}, 67\penalty0 (1):\penalty0 433--451,
  Jan 2021.
\newblock \doi{10.1109/TIT.2020.3028837}.
\newblock URL \url{https://doi.org/10.1109/TIT.2020.3028837}.

\bibitem[Christandl and Winter(2004)]{Christandl04}
M.~Christandl and A.~Winter.
\newblock Squashed entanglement: An additive entanglement measure.
\newblock \emph{J. Math. Phys.}, 45\penalty0 (3):\penalty0 829--840, Feb 2004.
\newblock ISSN 0022-2488.
\newblock \doi{10.1063/1.1643788}.
\newblock URL \url{https://doi.org/10.1063/1.1643788}.

\bibitem[Berta et~al.(2018{\natexlab{b}})Berta, Brand\~ao, Majenz, and
  Wilde]{Berta_2018}
M.~Berta, F.~G. S.~L. Brand\~ao, C.~Majenz, and M.~M. Wilde.
\newblock Deconstruction and conditional erasure of quantum correlations.
\newblock \emph{Phys. Rev. A}, 98:\penalty0 042320, Oct 2018{\natexlab{b}}.
\newblock \doi{10.1103/PhysRevA.98.042320}.
\newblock URL \url{https://doi.org/10.1103/PhysRevA.98.042320}.

\bibitem[Junge et~al.(2018)Junge, Renner, Sutter, Wilde, and Winter]{JRS+18}
M.~Junge, R.~Renner, D.~Sutter, M.~M. Wilde, and A.~Winter.
\newblock Universal recovery maps and approximate sufficiency of quantum
  relative entropy.
\newblock \emph{Ann. Henri Poincare}, 19\penalty0 (10):\penalty0 2955--2978,
  Aug 2018.
\newblock \doi{10.1007/s00023-018-0716-0}.
\newblock URL \url{https://doi.org/10.1007%2Fs00023-018-0716-0}.

\bibitem[Beigi and Gohari(2014)]{BG14}
S.~Beigi and ABD16 Gohari.
\newblock On dimension bounds for auxiliary quantum systems.
\newblock \emph{IEEE Transactions on Information Theory}, 60\penalty0
  (1):\penalty0 368--387, Jan 2014.
\newblock \doi{10.1109/TIT.2013.2286079}.
\newblock URL \url{https://doi.org/10.1109/TIT.2013.2286079}.

\bibitem[Greenberger et~al.(1989)Greenberger, Horne, and Zeilinger]{GHZ89}
D.~M. Greenberger, M.~A. Horne, and A.~Zeilinger.
\newblock Going beyond {B}ell's theorem.
\newblock In \emph{Bell's {{Theorem}}, {{Quantum Theory}} and {{Conceptions}}
  of the {{Universe}}}, pages 69--72. {Springer Netherlands}, 1989.
\newblock \doi{10.1007/978-94-017-0849-4_10}.
\newblock URL \url{https://doi.org/10.1007/978-94-017-0849-4_10}.

\bibitem[Mermin(1990)]{mermin90}
N.~David Mermin.
\newblock Quantum mysteries revisited.
\newblock \emph{Am. J. Phys.}, 58\penalty0 (8):\penalty0 731--734, Aug 1990.
\newblock ISSN 0002-9505, 1943-2909.
\newblock \doi{10.1119/1.16503}.
\newblock URL \url{https://doi.org/10.1119/1.16503}.

\bibitem[D\"ur et~al.(2000)D\"ur, Vidal, and Cirac]{DVC00}
W.~D\"ur, G.~Vidal, and J.~I. Cirac.
\newblock Three qubits can be entangled in two inequivalent ways.
\newblock \emph{Phys. Rev. A}, 62:\penalty0 062314, Nov 2000.
\newblock \doi{10.1103/PhysRevA.62.062314}.
\newblock URL \url{https://doi.org/10.1103/PhysRevA.62.062314}.

\bibitem[Li and Winter(2018)]{Li18}
K.~Li and A.~Winter.
\newblock Squashed entanglement, k-extendibility, quantum {M}arkov chains, and
  recovery maps.
\newblock \emph{Found. Phys.}, 48:\penalty0 910, Feb 2018.
\newblock \doi{10.1007/s10701-018-0143-6}.
\newblock URL \url{https://doi.org/10.1007/s10701-018-0143-6}.

\bibitem[Chitambar et~al.(2014)Chitambar, Leung, Mančinska, Ozols, and
  Winter]{CLM+14}
E.~Chitambar, D.~Leung, L.~Mančinska, M.~Ozols, and A.~Winter.
\newblock Everything you always wanted to know about {LOCC} (but were afraid to
  ask).
\newblock \emph{Commun. Math. Phys.}, 328\penalty0 (1):\penalty0 303–326, Mar
  2014.
\newblock ISSN 1432-0916.
\newblock \doi{10.1007/s00220-014-1953-9}.
\newblock URL \url{https://doi.org/10.1007/s00220-014-1953-9}.

\bibitem[Datta et~al.(2016)Datta, Hsieh, and Oppenheim]{Datta_2016}
N.~Datta, M.~Hsieh, and J.~Oppenheim.
\newblock {An upper bound on the second order asymptotic expansion for the
  quantum communication cost of state redistribution}.
\newblock \emph{J. Math. Phys.}, 57\penalty0 (5):\penalty0 052203, May 2016.
\newblock ISSN 0022-2488.
\newblock \doi{10.1063/1.4949571}.
\newblock URL \url{https://doi.org/10.1063/1.4949571}.

\bibitem[V.(2000)]{Vidal00}
Guifré V.
\newblock Entanglement monotones.
\newblock \emph{J. Mod. Opt.}, 47\penalty0 (2-3):\penalty0 355--376, Jul 2000.
\newblock \doi{10.1080/09500340008244048}.
\newblock URL \url{https://doi.org/10.1080/09500340008244048}.

\bibitem[Christandl et~al.(2007)Christandl, Ekert, Horodecki, Horodecki,
  Oppenheim, and Renner]{Christandl_2007}
M.~Christandl, A.~Ekert, M.~Horodecki, P.~Horodecki, J.~Oppenheim, and
  R.~Renner.
\newblock \emph{Unifying classical and quantum key distillation}.
\newblock Springer Berlin Heidelberg, 2007.
\newblock ISBN 978-3-540-70936-7.
\newblock \doi{10.1007/978-3-540-70936-7_25}.
\newblock URL \url{https://doi.org/10.1007/978-3-540-70936-7_25#citeas}.

\bibitem[Cerf et~al.(2002)Cerf, Massar, and Schneider]{Schneider_2002}
N.~J. Cerf, S.~Massar, and S.~Schneider.
\newblock Multipartite classical and quantum secrecy monotones.
\newblock \emph{Phys. Rev. A}, 66:\penalty0 042309, Oct 2002.
\newblock \doi{10.1103/PhysRevA.66.042309}.
\newblock URL \url{https://doi.org/10.1103/PhysRevA.66.042309}.

\bibitem[Sharma and Warsi(2013)]{SW13}
N.~Sharma and N.~A. Warsi.
\newblock Fundamental bound on the reliability of quantum information
  transmission.
\newblock \emph{Phys. Rev. Lett.}, 110:\penalty0 080501, Feb 2013.
\newblock \doi{10.1103/PhysRevLett.110.080501}.
\newblock URL \url{https://doi.org/10.1103/PhysRevLett.110.080501}.

\bibitem[Müller-Lennert et~al.(2013)Müller-Lennert, Dupuis, Szehr, Fehr, and
  Tomamichel]{MDSFT13}
M.~Müller-Lennert, F.~Dupuis, O.~Szehr, S.~Fehr, and M.~Tomamichel.
\newblock {On quantum {R}ényi entropies: A new generalization and some
  properties}.
\newblock \emph{J. Math. Phys.}, 54\penalty0 (12):\penalty0 122203, Dec 2013.
\newblock \doi{10.1063/1.4838856}.
\newblock URL \url{https://doi.org/10.1063/1.4838856}.

\bibitem[Wilde et~al.(2014)Wilde, Winter, and Yang]{WWY14}
M.~M. Wilde, A.~Winter, and D.~Yang.
\newblock Strong converse for the classical capacity of entanglement-breaking
  and {Hadamard} channels via a sandwiched {R\'enyi} relative entropy.
\newblock \emph{Commun. Math. Phys.}, 331\penalty0 (2):\penalty0 593--622, Oct
  2014.
\newblock \doi{10.1007/s00220-014-2122-x}.
\newblock URL \url{https://doi.org/10.1007/s00220-014-2122-x}.

\bibitem[Buscemi and Datta(2010)]{BD10}
F.~Buscemi and N.~Datta.
\newblock The quantum capacity of channels with arbitrarily correlated noise.
\newblock \emph{IEEE Trans. Inf. Theory}, 56\penalty0 (3):\penalty0 1447--1460,
  Mar 2010.
\newblock ISSN 0018-9448.
\newblock \doi{10.1109/TIT.2009.2039166}.
\newblock URL \url{https://doi.org/10.1109/TIT.2009.2039166}.

\bibitem[Wang and Renner(2012)]{WR12}
L.~Wang and R.~Renner.
\newblock One-shot classical-quantum capacity and hypothesis testing.
\newblock \emph{Phys. Rev. Lett.}, 108\penalty0 (20):\penalty0 200501, May
  2012.
\newblock \doi{10.1103/PhysRevLett.108.200501}.
\newblock URL \url{https://doi.org/10.1103/PhysRevLett.108.200501}.

\bibitem[Seshadreesan and Wilde(2015)]{SW15}
K.~P. Seshadreesan and Mark~M. Wilde.
\newblock Fidelity of recovery, squashed entanglement, and measurement
  recoverability.
\newblock \emph{Phys. Rev. A}, 92:\penalty0 042321, Oct 2015.
\newblock \doi{10.1103/PhysRevA.92.042321}.
\newblock URL \url{https://doi.org/10.1103/PhysRevA.92.042321}.

\bibitem[Berta and Tomamichel(2016)]{BT16}
M.~Berta and M.~Tomamichel.
\newblock The fidelity of recovery is multiplicative.
\newblock \emph{IEEE Trans. Inf. Theory}, 62\penalty0 (4):\penalty0 1758--1763,
  Apr 2016.
\newblock \doi{10.1109/tit.2016.2527683}.
\newblock URL \url{https://doi.org/10.1109%2Ftit.2016.2527683}.

\bibitem[Shirokov(2017)]{Shirokov2017}
M.~E. Shirokov.
\newblock {Tight uniform continuity bounds for the quantum conditional mutual
  information, for the {H}olevo quantity, and for capacities of quantum
  channels}.
\newblock \emph{J. Math. Phys.}, 58\penalty0 (10):\penalty0 102202, Oct 2017.
\newblock ISSN 0022-2488.
\newblock \doi{10.1063/1.4987135}.
\newblock URL \url{https://doi.org/10.1063/1.4987135}.

\bibitem[Nielsen(1999)]{Nielsen_quantum_transformation}
M.~A. Nielsen.
\newblock Conditions for a class of entanglement transformations.
\newblock \emph{Phys. Rev. Lett.}, 83:\penalty0 436--439, Jul 1999.
\newblock \doi{10.1103/PhysRevLett.83.436}.
\newblock URL \url{https://doi.org/10.1103/PhysRevLett.83.436}.

\bibitem[Yang and Eisert(2009)]{ent_combing1}
D.~Yang and J.~Eisert.
\newblock Entanglement combing.
\newblock \emph{Phys. Rev. Lett.}, 103:\penalty0 220501, Nov 2009.
\newblock \doi{10.1103/PhysRevLett.103.220501}.
\newblock URL \url{https://doi.org/10.1103/PhysRevLett.103.220501}.

\bibitem[Streltsov et~al.(2020)Streltsov, Meignant, and Eisert]{ent_combing2}
A.~Streltsov, C.~Meignant, and J.~Eisert.
\newblock Rates of multipartite entanglement transformations.
\newblock \emph{Phys. Rev. Lett.}, 125:\penalty0 080502, Aug 2020.
\newblock \doi{10.1103/PhysRevLett.125.080502}.
\newblock URL \url{https://doi.org/10.1103/PhysRevLett.125.080502}.

\bibitem[Oppenheim(2008)]{oppenheim2008paradigm}
J.~Oppenheim.
\newblock A paradigm for entanglement theory based on quantum communication.
\newblock \emph{arXiv:0801.0458}, Jan 2008.
\newblock \doi{10.48550/arXiv.0801.0458}.
\newblock URL \url{https://doi.org/10.48550/arXiv.0801.0458}.

\bibitem[Liu et~al.(2019)Liu, Bu, and Takagi]{LBT19}
Z.~Liu, K.~Bu, and R.~Takagi.
\newblock One-shot operational quantum resource theory.
\newblock \emph{Phys. Rev. Lett.}, 123:\penalty0 020401, Jul 2019.
\newblock \doi{10.1103/PhysRevLett.123.020401}.
\newblock URL \url{https://doi.org/10.1103/PhysRevLett.123.020401}.

\bibitem[Regula et~al.(2020)Regula, Bu, Takagi, and Liu]{RBTL20}
B.~Regula, K.~Bu, R.~Takagi, and Z.~Liu.
\newblock Benchmarking one-shot distillation in general quantum resource
  theories.
\newblock \emph{Phys. Rev. A}, 101:\penalty0 062315, Jun 2020.
\newblock \doi{10.1103/PhysRevA.101.062315}.
\newblock URL \url{https://doi.org/10.1103/PhysRevA.101.062315}.

\bibitem[Laine et~al.(2010)Laine, Piilo, and Breuer]{Laine10}
E.~Laine, J.~Piilo, and H.~Breuer.
\newblock Measure for the non-{M}arkovianity of quantum processes.
\newblock \emph{Phys. Rev. A}, 81:\penalty0 062115, Jun 2010.
\newblock \doi{10.1103/PhysRevA.81.062115}.
\newblock URL \url{https://doi.org/10.1103/PhysRevA.81.062115}.

\bibitem[Luo et~al.(2012)Luo, Fu, and Song]{Luo12}
S.~Luo, S.~Fu, and H.~Song.
\newblock Quantifying non-{M}arkovianity via correlations.
\newblock \emph{Phys. Rev. A}, 86:\penalty0 044101, Oct 2012.
\newblock \doi{10.1103/PhysRevA.86.044101}.
\newblock URL \url{https://doi.org/10.1103/PhysRevA.86.044101}.

\bibitem[Chruściński and Maniscalco(2014)]{Dariusz14}
D.~Chruściński and S.~Maniscalco.
\newblock Degree of non-{M}arkovianity of quantum evolution.
\newblock \emph{Phys. Rev. Lett.}, 112:\penalty0 120404, Mar 2014.
\newblock \doi{10.1103/PhysRevLett.112.120404}.
\newblock URL \url{https://doi.org/10.1103/PhysRevLett.112.120404}.

\bibitem[Buscemi and Datta(2016)]{BD16}
F.~Buscemi and N.~Datta.
\newblock Equivalence between divisibility and monotonic decrease of
  information in classical and quantum stochastic processes.
\newblock \emph{Physical Review A}, 93:\penalty0 012101, Jan 2016.
\newblock \doi{10.1103/PhysRevA.93.012101}.
\newblock URL \url{https://doi.org/10.1103/PhysRevA.93.012101}.

\bibitem[Das et~al.(2018)Das, Khatri, Siopsis, and Wilde]{Das2018}
S.~Das, S.~Khatri, G.~Siopsis, and Mark~M. Wilde.
\newblock {Fundamental limits on quantum dynamics based on entropy change}.
\newblock \emph{Journal of Mathematical Physics}, 59\penalty0 (1):\penalty0
  012205, Jan 2018.
\newblock ISSN 0022-2488.
\newblock \doi{10.1063/1.4997044}.
\newblock URL \url{https://doi.org/10.1063/1.4997044}.

\bibitem[Bhattacharya et~al.(2020)Bhattacharya, Bhattacharya, and
  Majumdar]{Bhattacharya21}
S.~Bhattacharya, B.~Bhattacharya, and A~S Majumdar.
\newblock Convex resource theory of non-{M}arkovianity.
\newblock \emph{J. Phys. A Math. Theor.}, 54\penalty0 (3):\penalty0 035302, Dec
  2020.
\newblock \doi{10.1088/1751-8121/abd191}.
\newblock URL \url{https://doi.org/10.1088/1751-8121/abd191}.

\bibitem[Kuroiwa and Yamasaki(2021)]{Kuroiwa21}
K.~Kuroiwa and H.~Yamasaki.
\newblock Asymptotically consistent measures of general quantum resources:
  Discord, non-markovianity, and non-{G}aussianity.
\newblock \emph{Phys. Rev. A}, 104:\penalty0 L020401, Aug 2021.
\newblock \doi{10.1103/PhysRevA.104.L020401}.
\newblock URL \url{https://doi.org/10.1103/PhysRevA.104.L020401}.

\bibitem[Berk et~al.(2021)Berk, Garner, Yadin, Modi, and Pollock]{Berk2021}
G.~D. Berk, J.~P.~Andrew Garner, B.~Yadin, Kavan Modi, and Felix~A. Pollock.
\newblock Resource theories of multi-time processes: {A} window into quantum
  non-{M}arkovianity.
\newblock \emph{Quantum}, 5:\penalty0 435, Apr 2021.
\newblock ISSN 2521-327X.
\newblock \doi{10.22331/q-2021-04-20-435}.
\newblock URL \url{https://doi.org/10.22331/q-2021-04-20-435}.

\bibitem[Bylicka et~al.(2013)Bylicka, Chru{\'s}ci{\'n}ski, and
  Maniscalco]{bylicka2013}
B.~Bylicka, D.~Chru{\'s}ci{\'n}ski, and S.~Maniscalco.
\newblock Non-{M}arkovianity as a resource for quantum technologies.
\newblock \emph{arXiv:1301.2585}, Jan 2013.
\newblock \doi{10.48550/arXiv.1301.2585}.
\newblock URL \url{https://doi.org/10.48550/arXiv.1301.2585}.

\bibitem[Anand and Brun(2019)]{anand2019}
N.~Anand and T.~A Brun.
\newblock Quantifying non-{M}arkovianity: a quantum resource-theoretic
  approach.
\newblock \emph{arXiv:1903.03880}, Mar 2019.
\newblock \doi{10.48550/arXiv.1903.03880}.
\newblock URL \url{https://doi.org/10.48550/arXiv.1903.03880}.

\bibitem[Costa and Shrapnel(2016)]{Costa2016}
F.~Costa and S.~Shrapnel.
\newblock Quantum causal modelling.
\newblock \emph{New Journal of Physics}, 18\penalty0 (6):\penalty0 063032, Jun
  2016.
\newblock \doi{10.1088/1367-2630/18/6/063032}.
\newblock URL \url{https://doi.org/10.1088/1367-2630/18/6/063032}.

\bibitem[Pollock et~al.(2018{\natexlab{b}})Pollock, Rodr\'{\i}guez-Rosario,
  Paternostro, and Modi]{Pollock2018}
Felix~A. Pollock, T.~Rodr\'{\i}guez-Rosario, C.and~Frauenheim, M.~Paternostro,
  and K.~Modi.
\newblock Operational markov condition for quantum processes.
\newblock \emph{Phys. Rev. Lett.}, 120:\penalty0 040405, Jan
  2018{\natexlab{b}}.
\newblock \doi{10.1103/PhysRevLett.120.040405}.
\newblock URL \url{https://doi.org/10.1103/PhysRevLett.120.040405}.

\bibitem[Pollock et~al.(2018{\natexlab{c}})Pollock, Rodr\'{\i}guez-Rosario,
  Frauenheim, Paternostro, and Modi]{Felix2018}
Felix~A. Pollock, C.~Rodr\'{\i}guez-Rosario, T.~Frauenheim, M.~Paternostro, and
  K.~Modi.
\newblock Non-markovian quantum processes: Complete framework and efficient
  characterization.
\newblock \emph{Phys. Rev. A}, 97:\penalty0 012127, Jan 2018{\natexlab{c}}.
\newblock \doi{10.1103/PhysRevA.97.012127}.
\newblock URL \url{https://doi.org/10.1103/PhysRevA.97.012127}.

\bibitem[Taranto et~al.(2019)Taranto, Milz, Pollock, and Modi]{Taranto2019}
P.~Taranto, S.~Milz, Felix~A. Pollock, and K.~Modi.
\newblock Structure of quantum stochastic processes with finite markov order.
\newblock \emph{Phys. Rev. A}, 99:\penalty0 042108, Apr 2019.
\newblock \doi{10.1103/PhysRevA.99.042108}.
\newblock URL \url{https://doi.org/10.1103/PhysRevA.99.042108}.

\bibitem[Milz et~al.(2021)Milz, Spee, Xu, Pollock, Modi, and Gühne]{Milz2021}
S.~Milz, C.~Spee, Z.P. Xu, Felix~A. Pollock, K~Modi, and O.~Gühne.
\newblock {Genuine multipartite entanglement in time}.
\newblock \emph{SciPost Phys.}, 10:\penalty0 141, Jun 2021.
\newblock \doi{10.21468/SciPostPhys.10.6.141}.
\newblock URL \url{https://doi.org/10.21468/SciPostPhys.10.6.141}.

\bibitem[Milz and Modi(2021)]{Milz2021a}
S.~Milz and K.~Modi.
\newblock Quantum stochastic processes and quantum non-markovian phenomena.
\newblock \emph{PRX Quantum}, 2:\penalty0 030201, Jul 2021.
\newblock \doi{10.1103/PRXQuantum.2.030201}.
\newblock URL \url{https://doi.org/10.1103/PRXQuantum.2.030201}.

\bibitem[Nery et~al.(2021)Nery, Quintino, G., Maciel, and Vianna]{Nery2021}
M.~Nery, M.~T{\'{u}}lio Quintino, Philippe~Allard G., Thiago~O. Maciel, and
  Reinaldo~O. Vianna.
\newblock Simple and maximally robust processes with no classical common-cause
  or direct-cause explanation.
\newblock \emph{{Quantum}}, 5:\penalty0 538, Sep 2021.
\newblock ISSN 2521-327X.
\newblock \doi{10.22331/q-2021-09-09-538}.
\newblock URL \url{https://doi.org/10.22331/q-2021-09-09-538}.

\bibitem[Fuchs and Van~DeLam24Graaf(1999)]{FV99}
C.~A. Fuchs and J.~Van~DeLam24Graaf.
\newblock Cryptographic distinguishability measures for quantum-mechanical
  states.
\newblock \emph{IEEE Transactions on Information Theory}, 45\penalty0
  (4):\penalty0 1216--1227, May 1999.
\newblock \doi{10.1109/18.761271}.
\newblock URL \url{https://doi.org/10.1109/18.761271}.

\bibitem[Uhlmann(1976)]{uhlmann_transition_1976}
A.~Uhlmann.
\newblock The “transition probability” in the state space of a
  $\star$-algebra.
\newblock \emph{Rep. Math. Phys.}, 9\penalty0 (2):\penalty0 273--279, Apr 1976.
\newblock ISSN 00344877.
\newblock \doi{10.1016/0034-4877(76)90060-4}.
\newblock URL \url{https://doi.org/10.1016/0034-4877(76)90060-4}.

\bibitem[Jozsa(1994)]{jozsa_fidelity_1994}
R.~Jozsa.
\newblock Fidelity for mixed quantum states.
\newblock \emph{J. Mod. Opt.}, 41\penalty0 (12):\penalty0 2315--2323, Dec 1994.
\newblock ISSN 0950-0340, 1362-3044.
\newblock \doi{10.1080/09500349414552171}.
\newblock URL \url{https://doi.org/10.1080/09500349414552171}.

\bibitem[Fuchs and Van De~Graaf(1999)]{fuchs_cryptographic_1999}
C.A. Fuchs and J.~Van De~Graaf.
\newblock Cryptographic distinguishability measures for quantum-mechanical
  states.
\newblock \emph{IEEE Trans. Inform. Theory}, 45\penalty0 (4):\penalty0
  1216--1227, May 1999.
\newblock ISSN 00189448.
\newblock \doi{10.1109/18.761271}.
\newblock URL \url{https://doi.org/10.1109/18.761271}.

\bibitem[Kretschmann et~al.(2008{\natexlab{a}})Kretschmann, Schlingemann, and
  Werner]{kretschmann_information-disturbance_2006}
D.~Kretschmann, D.~Schlingemann, and R.F. Werner.
\newblock The information-disturbance tradeoff and the continuity of
  {S}tinespring's representation.
\newblock \emph{IEEE Trans. Inf. Theory}, 54\penalty0 (4):\penalty0 1708--1717,
  Apr 2008{\natexlab{a}}.
\newblock ISSN 1557-9654.
\newblock \doi{10.1109/TIT.2008.917696}.
\newblock URL \url{https://doi.org/10.1109/TIT.2008.917696}.

\bibitem[Kretschmann et~al.(2008{\natexlab{b}})Kretschmann, Schlingemann, and
  Werner]{kretschmann_continuity_2007}
D.~Kretschmann, D.~Schlingemann, and R.~F. Werner.
\newblock A continuity theorem for {S}tinespring's dilation.
\newblock \emph{J. Funct. Anal.}, 255\penalty0 (8):\penalty0 1889--1904, Oct
  2008{\natexlab{b}}.
\newblock ISSN 0022-1236.
\newblock \doi{10.1016/j.jfa.2008.07.023}.
\newblock URL \url{https://doi.org/10.1016/j.jfa.2008.07.023}.

\bibitem[P{\'{e}}rez-Garc{\'{\i}}a et~al.(2006)P{\'{e}}rez-Garc{\'{\i}}a, Wolf,
  Petz, and Ruskai]{P_rez_Garc_a_2006}
D.~P{\'{e}}rez-Garc{\'{\i}}a, M.~M. Wolf, D.~Petz, and M.~B. Ruskai.
\newblock Contractivity of positive and trace-preserving maps under lp norms.
\newblock \emph{J. Math. Phys.}, 47\penalty0 (8):\penalty0 083506, Aug 2006.
\newblock \doi{10.1063/1.2218675}.
\newblock URL \url{https://doi.org/10.1063%2F1.2218675}.

\bibitem[Wilde(2016)]{Wilde_2016}
M.~M. Wilde.
\newblock Squashed entanglement and approximate private states.
\newblock \emph{Quantum Inf. Process.}, 15\penalty0 (11):\penalty0 4563--4580,
  Sep 2016.
\newblock \doi{10.1007/s11128-016-1432-7}.
\newblock URL \url{https://doi.org/10.1007%2Fs11128-016-1432-7}.

\bibitem[Alicki and Fannes(2004)]{Alicki_2004}
R.~Alicki and M.~Fannes.
\newblock Continuity of quantum conditional information.
\newblock \emph{J. Phys. A Math. Theor.}, 37\penalty0 (5):\penalty0 L55, Jan
  2004.
\newblock \doi{10.1088/0305-4470/37/5/L01}.
\newblock URL \url{https://doi.org/10.1088/0305-4470/37/5/L01}.

\bibitem[Fannes(1973)]{fannes1973continuity}
M.~Fannes.
\newblock A continuity property of the entropy density for spin-lattice
  systems.
\newblock \emph{Commun. Math. Phys.}, 31:\penalty0 291--294, 1973.
\newblock \doi{10.1007/BF01646490}.
\newblock URL \url{https://doi.org/10.1007/BF01646490}.

\bibitem[Devetak et~al.(2008)Devetak, Harrow, and
  Winter]{Devetak_2008_resource_framework}
I.~Devetak, A.~W. Harrow, and A.~J. Winter.
\newblock A resource framework for quantum {S}hannon theory.
\newblock \emph{IEEE Trans. Inf. Theory}, 54\penalty0 (10):\penalty0
  4587--4618, Oct 2008.
\newblock \doi{10.1109/tit.2008.928980}.
\newblock URL \url{https://doi.org/10.1109%2Ftit.2008.928980}.

\end{thebibliography}

\end{document}